 \newtheorem{thm}{Theorem}[section]
 \newtheorem{cor}[thm]{Corollary}
 \newtheorem{lem}[thm]{Lemma}
 \newtheorem{prop}[thm]{Proposition}
 \theoremstyle{definition}
 \theoremstyle{remark}
 \newtheorem{rem}[thm]{Remark}
 \numberwithin{equation}{section}
\newcommand{\ii}{\mathrm{i}}
\newcommand{\cH}{\mathcal{H}}
 \newcommand{\ud}{\mathrm{d}}
\def\a{\alpha}
\def\Ga{\Gamma}
\def\ep{\varepsilon}
\def\th{\theta}
\def\k{\kappa}
\def\r{\rho}
\def\w{\omega}
\def\W{\Omega}
\def\Hg {{\mathcal H}}
\def\la{\langle}
\def\ra{\rangle}
\def\supp{\mathop{\rm supp} \nolimits}
\def\lap{\Delta}
\def\ax{{\la x \ra}}
\begin{document}

%
%
%
%
%
%
%
%
%

\title[$L^p$-boundedness of wave operators for multi-centre]
 {$L^p$-boundedness of wave operators  \\ for the three-dimensional multi-centre \\ point interaction}

\author[G.~Dell'Antonio]{Gianfausto Dell'Antonio}

\address[G.~Dell'Antonio]{%
Department of Mathematics, University of Rome La Sapienza \\ Piazza Aldo Moro 5, 00185 Rome (ITALY), and \\ SISSA -- International School for Advanced Studies \\ via Bonomea 265, 34136 Trieste (ITALY)}

\email{gianfa@sissa.it}

\thanks{Partially supported by the 2014-2017 MIUR-FIR 
grant ``\emph{Cond-Math: Condensed Matter and Mathematical Physics}'' code RBFR13WAET and by \emph{JSPS grant in aid 
for scientific research} No.16K05242.}
\author[A.~Michelangeli]{Alessandro Michelangeli}

\address[A.~Michelangeli]{%
SISSA -- International School for Advanced Studies \\ via Bonomea 265, 34136 Trieste (ITALY)}

\email{alemiche@sissa.it}

\author[R.~Scandone]{Raffaele Scandone}

\address[R.~Scandone]{%
SISSA -- International School for Advanced Studies \\ via Bonomea 265, 34136 Trieste (ITALY)}

\email{rscandone@sissa.it}

\author[K.~Yajima]{Kenji Yajima}

\address[K.~Yajima]{%
Department of Mathematics, Gakushuin University \\ 1-5-1 Mejiro, Toshima-ku, Tokyo 171-8588 (JAPAN)}

\email{kenji.yajima@gakushuin.ac.jp}

\subjclass{42B20; 42B25; 44A15; 47A57; 47B25; 81Q10; 81Uxx; 81U30}


\keywords{Point interactions. Singular perturbations of the Laplacian. Self-adjoint extensions. Wave operators. Dispersive estimates. Calder\'on-Zygmund. Muckenhaupt weighted inequalities.}

\date{21 April 2017}

\begin{abstract}
We prove that, for arbitrary centres and strengths, 
the wave operators for three dimensional 
Schr\"odinger operators with multi-centre local point 
interactions are bounded in $L^p(\mathbb{R}^3)$ for $1<p<3$ and unbounded otherwise. 
\end{abstract}

\maketitle

\section{Introduction and main results}

Models of quantum particles in $d$ dimensions which scatter freely in space except for the presence of a number of extremely localised impurities require the definition of a Hamiltonian that acts precisely as the free Hamiltonian on wave-functions supported away from the scattering centres, and that induces a non-trivial interaction essentially supported on a discrete collection of points. This naturally leads to consider `singular' perturbations of the free Schr\"{o}dinger operator which can be thought of as delta-like potentials centred at fixed points, a picture that dates back to the celebrated model of Kronig and Penney \cite{Kronig-Pennig-1931} for a quantum particle in a one-dimensional array of delta potentials.

One can make sense in various conceptually alternative ways of the formal Hamiltonian
\begin{equation}\label{eq:formalHamilt}
^\textrm{``}-\Delta_{x}+\sum_{j=1}^N\,\mu_j\,\delta(x-y_j)^\textrm{''}
\end{equation}
for a quantum particle in $\mathbb{R}^d$ subject to singular interactions centred at the points $y_1,\dots,y_N\in\mathbb{R}^d$ and of magnitude, respectively, $\mu_1,\dots,\mu_N$. One is to realise the Hamiltonian as a self-adjoint extension of the restriction of $-\Delta$ to smooth functions supported away from the $y_j$'s, another is to obtain it as the limit of a Schr\"{o}dinger operator with actual potentials $V^{(j)}_\varepsilon(x-y_j)$ each of which, as $\varepsilon\to 0$, spikes up to a delta-like profile, the support shrinking to the point $\{y_j\}$, and yet another way is to realise \eqref{eq:formalHamilt} as the self-adjoint operator of a closed and semi-bounded energy (quadratic) form that consists of a free (gradient) term plus suitable boundary terms at the centres $y_j$'s.

In the mathematical literature the study of the self-adjoint realisations of \eqref{eq:formalHamilt} has a long history, deeply connected with that of the physical systems for which such a model has provided a realistic description. In this work we focus on the $d=3$ case: we thus consider the collection
\begin{equation}\label{eq:set_Y}
Y\;:=\;\{y_1, \dots, y_N\}
\end{equation}
of $N$ distinct points in $\mathbb{R}^3$ and, correspondingly, the operator
\begin{equation}\label{eq:little_op}
\mathring{H}_Y\;:=\;
-\Delta\upharpoonright C_0^\infty(\mathbb{R}^3 \!\setminus\! Y)
\end{equation}
in the Hilbert space $L^2(\mathbb{R}^3)$. $\mathring{H}_Y$ is densely defined, real symmetric, and non-negative, with deficiency indices 
$(N,N)$, and hence it admits a $N^2$-parameter family of self-adjoint extensions.

The most relevant sub-class of them is the $N$-parameter family
\[
\{H_{\alpha,Y}\,|\,
\alpha\equiv(\alpha_1,\dots,\alpha_N)\in(-\infty,\infty]^N\}
\]
of so-called `local' extensions, namely extensions of $\mathring{H}_Y$ whose domain of self-adjointness is only qualified by certain local boundary conditions at the singularity centres. More precisely, as we shall recall in detail in Section \ref{sec:preliminaries}, the domain $\mathcal{D}(H_{\alpha,Y})$ of $H_{\alpha,Y}$ consists of functions $u$ satisfying the asymptotics
\begin{equation} \label{bdry} 
\lim_{r_j\downarrow 0} 
\Big(\frac{\partial(r_j u)}{\partial r_j}- 4\pi \alpha_j r_j u\Big) =0\,,\quad r_j:=|x-y_j|\,, \quad j\in\{1, \dots, N\}\,. 
\end{equation} 
In fact, the condition
\begin{equation}\label{eq:BPcontact}
u(x)\;\underset{x\to y_j}{\sim}\;\frac{1}{\,|x-y_j|\,}-\frac{1}{\,a_j}\,,\qquad a_j:=-(4\pi\alpha_j)^{-1}
\end{equation}
implied by \eqref{bdry} is typical for the low-energy behaviour of an eigenstate of the Schr\"{o}dinger equation for a quantum particle subject to a potential of extremely short, virtually zero, range centred at the point $y_j$ and with $s$-wave scattering length $a_j$, a fact that was noted first by Bethe and Peierls \cite{Bethe_Peierls-1935,Bethe_Peierls-1935-np} (whence the name of \emph{Bethe-Peierls contact condition} for the asymptotics \eqref{eq:BPcontact}).

If for some $j\in\{1,\dots,N\}$ one has $\alpha_j=\infty$, then no actual interaction is present at the point $y_j$ (no boundary condition as $x\to y_j$) and in practice things are as if one discards the point $y_j$. In particular, the extension $H_{\alpha,Y}$ corresponding to $\alpha=\infty$ is the Friedrichs extensions of $\mathring{H}_Y$, namely the self-adjoint negative Laplacian on $L^2(\mathbb{R}^3)$. We shall also denote it by $H_0$, the free Hamiltonian. It is precisely the extension with no interactions at all.

The operator $H_{\alpha,Y}$ was rigorously studied for the first 
time by Albeverio, Fenstad, and H\o{}egh-Krohn 
\cite{Albeverio-Fenstad-HoeghKrohn-1979_singPert_NonstAnal} 
and subsequently characterised by Zorbas \cite{Zorbas-1980}, 
Grossmaann, H\o{}egh-Krohn, and Mebkhout 
\cite{Grossmann-HK-Mebkhout-1980,Grossmann-HK-Mebkhout-1980_CMPperiodic}, and D{\setbox0=\hbox{a}{\ooalign{\hidewidth\lower1.5ex\hbox{`}\hidewidth\crcr\unhbox0}}}browski and Grosse \cite{Dabrowski-Grosse-1985}. A thorough discussion of its features can be found in \cite[Section II.1.1]{albeverio-solvable}. 
We shall recall the main properties of $H_{\alpha,Y}$ in 
Section \ref{sec:preliminaries}.

Fundamental information about the dynamics generated by 
$H_{\alpha,Y}$ through the Schr\"{o}dinger equation 
$\ii\partial_t u=H_{\alpha,Y}u$ is encoded in the wave 
operators for the pair $(H_{\alpha,Y}, H_0)$, which are defined 
by the strong limits
\begin{equation}
W^{\pm}_{\alpha, Y}\;=\;\lim_{t\to\pm \infty} 
e^{\ii tH_{\alpha,Y}}e^{-\ii tH_0}\,.
\end{equation}

Since the resolvent difference 
$(H_{\alpha,Y} -z^2{\mathbbm{1}})^{-1} -(H_0-z^2{\mathbbm{1}})^{-1} $ is 
of finite rank, as we shall recall in Theorem \ref{thm:general_properties} (see \eqref{eq:resolvent_identity} below), 
standard arguments from scattering theory \cite{rs3} 
guarantee that the wave operators $W^{+}_{\alpha, Y}$ and 
$W^{-}_{\alpha, Y}$ exist in $L^2(\mathbb{R}^3)$ and are 
complete, meaning that 
\begin{equation}
 \mathrm{ran}\,W^{\pm}_{\alpha, Y}\;
=\;L^2_{\mathrm{ac}}(H_{\alpha,Y})\;
=\;P_{\mathrm{ac}}(H_{\alpha,Y})L^2(\mathbb{R}^3)\,,
\end{equation}
where $L^2_{\mathrm{ac}}(H_{\alpha,Y})$ denotes the  
absolutely continuous spectral subspace of $\mathcal{H}$ for 
$H_{\alpha,Y}$, and $P_{\mathrm{ac}}(H_{\alpha,Y})$ denotes the orthogonal 
projection onto $L^2_{\mathrm{ac}}(H_{\alpha,Y})$. In particular, 
the absolutely continuous part of $H_{\alpha,Y}$, namely the operator
$H_{\alpha,Y}P_{\mathrm{ac}}(H_{\alpha,Y})$,
is unitarily equivalent to $H_0$. Moreover, the singular continuous 
spectrum is absent from $H_{\alpha,Y}$ and the point spectrum 
consists of at most $N$ negative 
eigenvalues, whereas non-negative eigenvalues are absent
(see Theorem \ref{thm:general_properties} below).

Wave operators are of paramount importance for the study of the scattering governed by 
an interaction Hamiltonian in comparison with a free (reference) Hamiltonian 
\cite{Kuroda1978_intro_scatt_theory,rs3}. Owing to their completeness, 
$W^{+}_{\alpha, Y}$ and $W^{-}_{\alpha, Y}$ are unitary from $L^2(\mathbb{R}^3)$ onto 
$L^2_{\mathrm{ac}}(H_{\alpha,Y})$; moreover, they intertwine 
$H_{\alpha,Y}P_{\mathrm{ac}}(H_{\alpha,Y})$ and $H_0$, viz., for any Borel function $f$ on $\mathbb{R}$ one has the identity
\begin{equation}\label{eq:intertwining}
f(H_{\alpha,Y})P_{\mathrm{ac}}(H_{\alpha,Y})\;=\; 
W^{\pm}_{\alpha, Y} \,f(H_0) \,(W^{\pm}_{\alpha, Y})^*\,.
\end{equation} 
Through this intertwining, mapping properties of $f(H_{\alpha,Y})P_{\mathrm{ac}}(H_{\alpha,Y})$ can be deduced from those of $f(H_0)$ (which, upon Fourier transform, is the multiplication by $f(\xi^2)$), provided that the corresponding ones of $W^{\pm}_{\alpha, Y}$ are known. Thus, the $L^{p'}\to L^p$ boundedness of $f(H_{\alpha,Y})P_{\mathrm{ac}}(H_{\alpha,Y})$ follows from the $L^p\to L^p$ boundedness of $W^{\pm}_{\alpha, Y}$: more precisely, if $W^{\pm}_{\alpha, Y}\in\mathcal{B}(L^p(\mathbb{R}^d))$ for some $p\in[1,\infty]$, then $(W^{\pm}_{\alpha, Y})^*\in\mathcal{B}(L^{p'}(\mathbb{R}^d))$ and hence
\begin{equation}\label{eq:p-boundedness_lifted}
\|f(H_{\alpha,Y})P_{\mathrm{ac}}(H_{\alpha,Y})\|_{\mathcal{B}(L^{p'},L^p)}\;\leqslant\;C_{p}\,\|f(H_0)\|_{\mathcal{B}(L^{p'},L^p)}\,,
\end{equation}
the constant $C_{p}$ being \emph{independent} of $f$.
(Here and henceforth, $p'$ will denote the conjugate of $p$ via $p^{-1}+{p'}^{-1}=1$.)

The literature on the $L^p$-boundedness of wave operators relative 
to actual Schr\"{o}d\-inger operators of the form $-\Delta+V$, 
for sufficiently regular $V:\mathbb{R}^d\to\mathbb{R}$ vanishing at 
spatial infinity, is vast \cite{Yajima-JMSJ-1995-1,Yajima-CMP-1999-2D,Artbazar-Yajima_JMSUT-2000,Weder-JFA-2000,Jensen-Yajima-CMP-2002,DAncona-Fanelli-CMP-2006,Finco-Yajima-JMSUT-2006,Jensen-Yajima-ProcLondMathSoc-2008,Beceanu-AJM-2014,Yajima-DocMath2016,Yajma2016_3D_WaveOps_ThreshSing,Beceanu-Schlag-2016,Beceanu-Schlag-2017} and the problem is well known to depend crucially on the spectral properties of $-\Delta+V$ at the bottom of the absolutely continuous spectrum, that is, at energy zero.

For \emph{singular perturbations} of the Schr\"{o}dinger operators, the picture is much less developed and is essentially limited to the \emph{one-dimensional} case. Analogously to \eqref{eq:little_op}, the restriction $\mathring{H}_Y:=-\Delta\upharpoonright C_0^\infty(\mathbb{R} \!\setminus\! Y)$ admits a $N^2$-parameter family of self-adjoint extensions in $L^2(\mathbb{R})$ \cite[Section II.2.1]{albeverio-solvable}, among which those extensions with so-called separated boundary condition of $\delta$-type -- the analogue of $H_{\alpha,Y}$ considered above. For the latter sub-family of Hamiltonians Duch\^{e}ne, Marzuola, and Weinstein \cite{Duchene-Marzuola-Weinstein-2010} constructed the corresponding wave operators $W^{\pm}_{\alpha, Y}$ relative to the couple $(H_{\alpha,Y},H_0)$ and proved that $W^{\pm}_{\alpha, Y}\in\mathcal{B}(W^{1,p}(\mathbb{R}))$ for $1<p<\infty$. Their proof is built on a detailed  decomposition of $W^{\pm}_{\alpha, Y}$ essentially based upon the high frequency vs low frequency behaviour of the Jost solutions, an eminently one-dimensional treatment that is hard to export to higher dimensions.

In this work we study $L^p$-bounds for the wave operators 
$W^{\pm}_{\alpha, Y}$ of the \emph{three-dimensional} multi-centre point interaction Hamiltonian. We provide a manageable formula for (the integral kernel of) $W^{\pm}_{\alpha, Y}$, which we obtain by manipulating the resolvent difference $(H_{\alpha,Y}-z^2{{\mathbbm{1}}})^{-1}-(H_0-z^2{{\mathbbm{1}}})^{-1}$: since this difference is an explicitly known finite rank operator for any dimensions $d=1,2,3$, our derivation can be naturally exported also to lower dimensions.

Based on our representation of $W^{\pm}_{\alpha, Y}$, we then establish our main result:

\begin{thm}\label{ref:main_thm} 
For any $y_1, \dots, y_N \in \mathbb{R}^3$ 
and $\alpha_1, \dots, \alpha_N\in \mathbb{R}$, the wave operators
\begin{equation}\label{eq:main_thm_W_op}
W^{\pm}_{\alpha, Y}\;=\;s\textrm{-}\!\!\!\!\lim_{t\to\pm \infty} e^{\ii tH_{\alpha,Y}}e^{-\ii tH_0}
\end{equation}
for the pair $(H_{\alpha,Y}, H_0)$ exist and are complete in $L^2(\mathbb{R}^3)$, they are \emph{bounded} 
in $L^p(\mathbb{R}^3)$ for $1<p<3$, and \emph{unbounded} for $p=1$ and for 
$p\geqslant 3$.
\end{thm}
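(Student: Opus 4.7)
The plan is to exploit the stationary representation of the wave operators together with the finite-rank Krein-type resolvent identity~\eqref{eq:resolvent_identity}. Starting from the Birman--Kato spectral formula that expresses $W^\pm_{\alpha,Y}-\mathbbm{1}$ as an integral over $\lambda\in(0,\infty)$ of the resolvent difference $R_{H_{\alpha,Y}}(\lambda\pm \ii 0)-R_{H_0}(\lambda\pm \ii 0)$ paired with the free spectral density $\pi^{-1}\mathrm{Im}\,R_{H_0}(\lambda+\ii 0)$, I would substitute the explicit Krein formula for the bracket. Since that bracket is a rank-$N$ operator of the shape
\[
\sum_{j,k=1}^N [\Gamma_{\alpha,Y}(\kappa)^{-1}]_{jk}\,\bigl|G_\kappa(\,\cdot\,-y_j)\bigr\rangle\bigl\langle G_{\bar\kappa}(\,\cdot\,-y_k)\bigr|
\]
with free Green's function $G_\kappa(x)=e^{\ii\kappa|x|}/(4\pi|x|)$ and $\lambda=\kappa^2$, a Fubini exchange yields a decomposition $W^\pm_{\alpha,Y}-\mathbbm{1}=\sum_{j,k=1}^N\Omega^\pm_{jk}$, in which each $\Omega^\pm_{jk}$ is an explicit integral operator whose kernel depends only on $|x-y_j|$, $|x'-y_k|$ and on the scalar multiplier $\kappa\mapsto[\Gamma_{\alpha,Y}(\kappa)^{-1}]_{jk}$.

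The key reduction is then to recognise each $\Omega^\pm_{jk}$ as a composition of three bounded factors: multiplication by $|x-y_j|^{-1}$, a one-dimensional Calder\'on--Zygmund operator in the radial variable (essentially a Hilbert transform modulated by the multiplier $[\Gamma_{\alpha,Y}(\kappa)^{-1}]_{jk}$), and a fractional integration against $|\,\cdot\,-y_k|^{-1}$. For this identification to close, one needs $\kappa\mapsto[\Gamma_{\alpha,Y}(\kappa)^{-1}]_{jk}$ to be an admissible symbol on $(0,\infty)$ with controlled behaviour as $\kappa\downarrow 0$ and $\kappa\to\infty$. This is a finite-dimensional algebraic task: $\Gamma_{\alpha,Y}(\kappa)$ is explicitly given, and in the generic situation (no zero-energy resonance of $H_{\alpha,Y}$, which one verifies directly from the form of $\Gamma_{\alpha,Y}(0)$) its inverse extends continuously and symbolically to $[0,\infty]$.

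Once the kernel is in this form, the $L^p$-boundedness for $1<p<3$ follows from an estimate of the schematic type
\[
\bigl\|\,|x-y_j|^{-1}\,T\bigl(|\,\cdot\,-y_k|^{-1}\varphi\bigr)\bigr\|_{L^p(\mathbb{R}^3)}\;\lesssim\;\|\varphi\|_{L^p(\mathbb{R}^3)}\,,
\]
with $T$ a Calder\'on--Zygmund operator. Reformulated as a weighted-norm inequality for $T$, this is the boundedness of $T$ on $L^p(\mathbb{R}^3,|x-y_j|^{-p}\,\ud x)$. The power weight $|x-y_j|^{a}$ belongs to the Muckenhoupt class $A_p(\mathbb{R}^3)$ exactly when $-3<a<3(p-1)$, so the choice $a=-p$ is admissible precisely in the range $1<p<3$; the classical weighted Calder\'on--Zygmund theorem then delivers the claim, and summing over $j,k$ yields the desired bound on $W^\pm_{\alpha,Y}$.

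The negative assertions are addressed by explicit test functions tailored to the Green's-function building block. For $p\geqslant 3$ one constructs a Schwartz input $\varphi$ whose image $(W^\pm_{\alpha,Y}-\mathbbm{1})\varphi$ carries a $|x-y_j|^{-1}$ local singularity at a centre $y_j$, incompatible with $L^3_{\mathrm{loc}}(\mathbb{R}^3)$; for $p=1$ one instead produces an image decaying only as $|x|^{-1}$ at spatial infinity, incompatible with $L^1$. The main obstacle of the programme is not the abstract scheme but the concrete extraction of the kernel of each $\Omega^\pm_{jk}$ together with the simultaneous symbolic control of $\kappa\mapsto[\Gamma_{\alpha,Y}(\kappa)^{-1}]_{jk}$ uniformly in $(\alpha,Y)$, which is exactly what unlocks the radial reduction and, subsequently, the Calder\'on--Zygmund/Muckenhoupt machinery.
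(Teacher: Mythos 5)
Your architecture is the paper's (stationary representation from the Krein resolvent formula, radial reduction, Calder\'on--Zygmund plus Muckenhaupt weights), but the central positive estimate is closed by a step that does not hold. If you rewrite $\|\,|x-y_j|^{-1}T(|\cdot-y_k|^{-1}\varphi)\|_{L^p}\lesssim\|\varphi\|_{L^p}$ by setting $\psi=|\cdot-y_k|^{-1}\varphi$, what you need is $\|T\psi\|_{L^p(|x-y_j|^{-p}\ud x)}\lesssim\|\psi\|_{L^p(|x-y_k|^{p}\ud x)}$: a \emph{two-weight} (and, for $j\neq k$, two-centre) inequality with opposite-sign exponents, not ``boundedness of $T$ on $L^p(|x-y_j|^{-p}\ud x)$'', so the one-weight criterion $-3<a<3(p-1)$ in $A_p(\mathbb{R}^3)$ cannot be invoked. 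Moreover the operator is not a three-dimensional Calder\'on--Zygmund operator: after the radial reduction its kernel has the shape $|x|^{-1}|y|^{-1}\big(Z(|x|-|y|)-Z(|x|+|y|)\big)$ and blows up on the set $|x|=|y|$ even when $|x-y|$ is large, violating $|K(x,y)|\lesssim|x-y|^{-3}$. The honest single-weight reduction is one-dimensional, acting on $rM_u$ with weight $\rho^{2-p}$, which is $A_p(\mathbb{R})$ only for $\frac{3}{2}<p<3$; the range $1<p\leqslant\frac{3}{2}$ is exactly where the paper must work harder (integration by parts producing the weight $\rho^{2-2p}$ plus the separate H\"older/Schur kernel bound of Lemma \ref{lem:kernel-1} near the centre), and your scheme offers no substitute. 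Two more gaps on the positive side: the theorem covers \emph{all} $\alpha$, including resonant configurations such as $N=1$, $\alpha=0$, so you cannot restrict to the generic non-resonant case --- the admissible symbol is $F(\lambda)=\lambda\,\Gamma_{\alpha,Y}(-\lambda)^{-1}$, and it is the spectral factor $\lambda$ together with the first-order-pole statement of Lemma \ref{lem:pole_of_order_1} that keeps it bounded at $\lambda=0$; and for $N\geqslant2$ the entries of $\Gamma_{\alpha,Y}^{-1}$ carry oscillations $e^{\ii\lambda|y_j-y_k|}$ which produce \emph{shifted} convolution kernels whose weighted bounds near the centres are not automatic and occupy all of Section \ref{sec:Lp-bounds_N}.

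On the negative part: your $p\geqslant3$ idea (a genuine $|x-y_j|^{-1}$ singularity in the image, which is not in $L^3_{\mathrm{loc}}$) is in the right spirit --- the paper realises it via the intertwining relation and the explicit resolvent difference rather than through $W^{\pm}_{\alpha,Y}-{\mathbbm{1}}$ directly --- but the $p=1$ claim is wrong as stated: an image decaying like $|x|^{-1}$ at infinity would fail to lie in $L^p$ for every $p\leqslant 3$, contradicting the boundedness you have just asserted for $1<p<3$. In fact, away from the resonant case the image of a nice datum decays fast, and the $L^1$ failure is a short-scale/high-energy effect: the paper captures it by rescaling $u_\varepsilon=\varepsilon^{-3}u(\cdot/\varepsilon)$ concentrated at one centre, passing to the limit $F_{jk}(\lambda/\varepsilon)\to-4\pi\ii\,\delta_{j,1}\delta_{k,1}$, and reducing to the failure of ${\mathbbm{1}}-\ii\Hg$ (Hilbert transform) to map even, compactly supported $L^1$ functions into $L^1$. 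Your sketch contains no mechanism of this kind, so the $p=1$ assertion remains unproved.
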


\begin{rem}\label{re:consistent}
The fact that $L^p$-boundedness holds only for $p\in(1,3)$ is consistent with the analogous result for actual Schr\"{o}dinger operators. Indeed, it is well known \cite{Yajima-DocMath2016,Yajma2016_3D_WaveOps_ThreshSing} that 
the wave operators for three-dimensional Schr\"{o}dinger 
operators $-\Delta+V$ admitting a zero-energy resonance 
are $L^p$-bounded  if and only if
$p\in(1,3)$, and moreover it is also well known \cite[Theorem II.1.2.1]{albeverio-solvable} that $H_{\alpha,Y}$ is actually the strong resolvent 
limit in $L^2(\mathbb{R}^3)$, as $\varepsilon\downarrow 0$, 
of Schr\"odinger operators of the form 
\begin{equation}\label{eq:scaled_approx}
H^{(\varepsilon)}\;=\; -\Delta + 
\varepsilon^{-2} \sum_{j=1}^N \lambda_j(\varepsilon)V_j\Big(\frac{x-y_j}{\varepsilon}\Big)
\end{equation}
for suitable real-analytic $\lambda_j(\varepsilon)$'s with 
$\lambda(0)=1$ and real potentials $V_j$ of finite Rollnik norm such that
$-\Delta +V_j$ has a zero-energy resonance for each 
$j\in\{1,\dots,N\}$.
To fully substantiate such a parallelism between singular and regular Schr\"{o}dinger operators, it would be of great interest to monitor the convergence, as bounded operators in $L^p(\mathbb{R}^3)$ for $p\in(1,3)$, of the wave operators for the pair $(H_{\varepsilon},H_{0})$ to the wave operator $W_{\alpha,Y}^{\pm}$. Along this line, in Section \ref{sec:convergence_of_wave} we present the proof of this result in the special case $N=1$, $\alpha=0$.
\end{rem}

As a direct consequence of Theorem \ref{ref:main_thm} and of the bound \eqref{eq:p-boundedness_lifted}, the dispersive properties for the free propagator $e^{-\ii tH_0}$, encoded in the estimates
\begin{equation}\label{eq:p-q-f}
\|e^{-\ii tH_0}u\|_p\;\leqslant\;
(4{\pi}|t|)^{-3(\frac12-\frac1{p})}\|u\|_{p'}, \qquad p\in[2,+\infty]\,,\quad t\neq 0\,, 
\end{equation} 
lift to analogous estimates for the Schr\"{o}dinger dynamics generated by $H_{\alpha,Y}$, albeit for an unavoidably smaller range of $p$'s than in \eqref{eq:p-q-f}. Thus, we find:

\begin{cor}\label{cor:cor-1}
There is a constant $C>0$ such that, for each $p\in[2,3)$,
\begin{equation}\label{eq:p-q}
\|e^{-\ii tH_{\alpha,Y}}P_{\rm ac}(H_{\alpha,Y})u\|_p\;\leqslant\;C\,
|t|^{-3(\frac12-\frac1{p})}\|u\|_{p'}, \qquad  t\neq 0\,.
\end{equation}
\end{cor}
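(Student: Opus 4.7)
The plan is to derive the corollary as a straightforward consequence of Theorem \ref{ref:main_thm} combined with the intertwining identity \eqref{eq:intertwining} (applied to $f(\lambda)=e^{-\ii t\lambda}$) and the free dispersive estimate \eqref{eq:p-q-f}, following exactly the scheme indicated by the general bound \eqref{eq:p-boundedness_lifted}. Concretely, I would start from the factorisation
\begin{equation*}
e^{-\ii tH_{\alpha,Y}}P_{\mathrm{ac}}(H_{\alpha,Y})\;=\;W^{\pm}_{\alpha,Y}\,e^{-\ii tH_0}\,(W^{\pm}_{\alpha,Y})^{*}\,,
\end{equation*}
and then estimate the three factors separately in the $L^{p'}\!\to\!L^{p'}\!\to\!L^{p}\!\to\!L^{p}$ chain.

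The first ingredient is Theorem \ref{ref:main_thm}, which yields $W^{\pm}_{\alpha,Y}\in\mathcal{B}(L^{p}(\mathbb{R}^3))$ for every $p\in(1,3)$, together with the dual statement $(W^{\pm}_{\alpha,Y})^{*}\in\mathcal{B}(L^{q}(\mathbb{R}^3))$ for every $q\in(3/2,\infty)$ obtained by duality. For $p\in[2,3)$ one has $p'\in(3/2,2]\subset(3/2,\infty)$, hence $(W^{\pm}_{\alpha,Y})^{*}$ is a bounded map $L^{p'}\!\to\! L^{p'}$; since also $p\in(1,3)$, the operator $W^{\pm}_{\alpha,Y}$ is bounded $L^{p}\!\to\! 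L^{p}$. The middle factor is controlled by \eqref{eq:p-q-f}, which for $p\in[2,\infty)$ gives $\|e^{-\ii tH_0}\|_{\mathcal{B}(L^{p'},L^{p})}\leqslant (4\pi|t|)^{-3(\frac12-\frac1p)}$.

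Chaining the three bounds produces
\begin{equation*}
\|e^{-\ii tH_{\alpha,Y}}P_{\mathrm{ac}}(H_{\alpha,Y})u\|_{p}\;\leqslant\;\|W^{\pm}_{\alpha,Y}\|_{\mathcal{B}(L^{p})}\,(4\pi|t|)^{-3(\frac12-\frac1p)}\,\|(W^{\pm}_{\alpha,Y})^{*}\|_{\mathcal{B}(L^{p'})}\,\|u\|_{p'}
\end{equation*}
for all $p\in[2,3)$ and all $t\neq 0$, from which \eqref{eq:p-q} follows with $C:=(4\pi)^{-3(\frac12-\frac1p)}\|W^{\pm}_{\alpha,Y}\|_{\mathcal{B}(L^{p})}\|(W^{\pm}_{\alpha,Y})^{*}\|_{\mathcal{B}(L^{p'})}$; uniformity of the constant across the range $[2,3)$ (if desired) is obtained by taking the supremum of the operator norms on any compact sub-interval, or by interpolating the endpoint bounds.

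There is really no technical obstacle here: the only point to check is that the range $p\in[2,3)$ is precisely the intersection of the range $p\in[2,\infty)$ dictated by the free dispersive estimate \eqref{eq:p-q-f} with the range determined by the simultaneous requirement $p,p'\in(1,3)$ imposed by Theorem \ref{ref:main_thm}, namely $p\in(3/2,3)$. The restriction $p<3$ is sharp by virtue of the unboundedness part of Theorem \ref{ref:main_thm}, which shows that this scheme cannot be pushed further by simply invoking the intertwining property.
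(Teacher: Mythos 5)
Your proof is correct and is essentially the paper's own argument: the corollary is presented there precisely as a direct consequence of Theorem \ref{ref:main_thm} and the lifting bound \eqref{eq:p-boundedness_lifted}, i.e.\ the intertwining factorisation $e^{-\ii tH_{\alpha,Y}}P_{\mathrm{ac}}(H_{\alpha,Y})=W^{\pm}_{\alpha,Y}\,e^{-\ii tH_0}\,(W^{\pm}_{\alpha,Y})^{*}$ from \eqref{eq:intertwining} chained with the free dispersive estimate \eqref{eq:p-q-f}. One negligible remark: the constraint coming from Theorem \ref{ref:main_thm} is only $p\in(1,3)$, since $(W^{\pm}_{\alpha,Y})^{*}\in\mathcal{B}(L^{p'})$ already follows by duality from $W^{\pm}_{\alpha,Y}\in\mathcal{B}(L^{p})$, so the extra requirement $p'\in(1,3)$ mentioned in your closing paragraph is superfluous, though it does not alter the resulting range $p\in[2,3)$.
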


In turn, by means of a well-known argument 
\cite{GinibreVelo_GWP_NLS_revisited_1985,
Yajima1987_existence_soll_SE}, the dispersive estimates 
\eqref{eq:p-q} imply Strichartz estimates for $H_{\alpha,Y}$ 
for the same range of $p$.
We shall call a pair of exponents 
$(p,q)$ \emph{admissible for} $H_{\alpha,Y}$ if 
\begin{equation}\label{eq:admie}
p\in[2,3)\qquad\textrm{and}\qquad 0\;\leqslant\;\frac{2}{q}\;=\;3\,\Big(\frac{1}{2}-\frac{1}{p}\Big)\;<\;\frac{1}{2}\,,
\end{equation}
that is, $q=\frac{4p}{3(p-2)}\in(4,+\infty]$. 

\begin{cor} Let  $(p,q)$ and $(r,s)$ be two admissible pairs 
for $H_{\alpha,Y}$.
Then, for a 
constant $C>0$, 
\begin{equation} \label{stri-1}
\|e^{-\ii tH_{\alpha,Y}}P_{\mathrm{ac}}(H_{\alpha,Y})u\|_{L^q(\mathbb{R}_t, L^p(\mathbb{R}_x^3))}\;\leqslant\;
C\|u\|_{L^2(\mathbb{R}^3)}
\end{equation} 
and  
\begin{equation} \label{stri-2}
\left\| 
\int_0^t e^{-\ii(t-s)H_{\alpha,Y}}P_{\mathrm{ac}}(H_{\alpha,Y}) \,u(s)\,\ud s
\right\|_{L^q(\mathbb{R}_t, L^p(\mathbb{R}^3_x))}
\;\leqslant\;
C\|u\|_{L^{s'}(\mathbb{R}_t, L^{r'}(\mathbb{R}^3_x))}\,.
\end{equation} 
\end{cor}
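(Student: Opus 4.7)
The plan is to deduce both \eqref{stri-1} and \eqref{stri-2} from the dispersive bound \eqref{eq:p-q} of Corollary \ref{cor:cor-1}, together with the $L^2$-isometry of $e^{-\ii tH_{\alpha,Y}}P_{\mathrm{ac}}(H_{\alpha,Y})$, via the classical non-endpoint Strichartz machinery of Ginibre--Velo and Yajima \cite{GinibreVelo_GWP_NLS_revisited_1985,Yajima1987_existence_soll_SE}. Setting
\[
U(t)\;:=\;e^{-\ii tH_{\alpha,Y}}P_{\mathrm{ac}}(H_{\alpha,Y})\,,\qquad \sigma(p)\;:=\;3\Big(\tfrac{1}{2}-\tfrac{1}{p}\Big)\,,
\]
I first record that $\|U(t)f\|_{L^2_x}\leqslant\|f\|_{L^2_x}$ and, for any admissible pair $(p,q)$, $\|U(t)f\|_{L^p_x}\leqslant C|t|^{-\sigma(p)}\|f\|_{L^{p'}_x}$ with $\sigma(p)=2/q\in[0,1/2)$.

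The first step is \eqref{stri-1}, obtained by the usual $TT^{\!*}$ principle. I would introduce the operator $T_{p,q}\colon L^2_x\to L^q_tL^p_x$, $T_{p,q}f:=U(\cdot)f$, whose formal adjoint reads $T_{p,q}^{*}F=\int_{\mathbb{R}}U(s)^{*}F(s)\,\ud s$, so that
\[
(T_{p,q}T_{p,q}^{*}F)(t)\;=\;\int_{\mathbb{R}} U(t-s)F(s)\,\ud s\,.
\]
The dispersive bound followed by Hardy--Littlewood--Sobolev fractional integration in the time variable (whose exponent $\sigma(p)=2/q$ lies in $[0,1/2)\subset[0,1)$ by admissibility) yields $T_{p,q}T_{p,q}^{*}\colon L^{q'}_tL^{p'}_x\to L^{q}_tL^{p}_x$ boundedly, whence \eqref{stri-1} follows by duality; the degenerate endpoint $(p,q)=(2,\infty)$, where $\sigma=0$, is covered directly by the $L^2$-isometry.

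The second step deduces \eqref{stri-2} from \eqref{stri-1}. The untruncated variant
\[
\left\|\int_{\mathbb{R}} U(t-s)\,u(s)\,\ud s\right\|_{L^q_tL^p_x}\;=\;\|T_{p,q}T_{r,s}^{*}u\|_{L^q_tL^p_x}\;\leqslant\;C\,\|u\|_{L^{s'}_tL^{r'}_x}
\]
is immediate from the two homogeneous bounds for the admissible pairs $(p,q)$ and $(r,s)$. The retarded integral $\int_0^t$ in \eqref{stri-2} is then recovered via the Christ--Kiselev lemma, whose hypothesis $q>s'$ is automatically met since admissibility forces $q,s\in(4,+\infty]$, hence $s'<4/3<q$.

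The only subtle point, rather than an actual obstacle, is that the admissibility range \eqref{eq:admie} is strictly narrower than in the classical free Schr\"odinger case, because the dispersive bound of Corollary \ref{cor:cor-1} is available only for $p<3$. This restriction keeps the whole argument comfortably away from the Keel--Tao endpoint, so that no atomic decomposition or refined real interpolation is required, and the proof proceeds as a routine application of Hardy--Littlewood--Sobolev and Christ--Kiselev to the bounds furnished by Corollary \ref{cor:cor-1}.
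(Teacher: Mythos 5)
Your argument is correct and coincides with what the paper intends: the paper gives no written proof of this corollary, but derives it by invoking precisely the classical Ginibre--Velo/Yajima machinery (dispersive estimate \eqref{eq:p-q} plus $L^2$-conservation, $TT^{*}$ with Hardy--Littlewood--Sobolev, and a standard device such as Christ--Kiselev for the retarded estimate), which is exactly what you carry out. Your checks of the degenerate pair $(2,\infty)$ and of the condition $s'<q$ are the right points to verify, so nothing is missing.
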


Under the \emph{additional} assumption that the matrix $\Gamma_{\alpha,Y}(\lambda)$ that we define in equation \eqref{ga-def} in Section \ref{sec:preliminaries} be invertible for all $\lambda\in[0,+\infty)$, with locally bounded inverse, suitably weighted dispersive estimates for the propagator $e^{-\ii t H_{\alpha,Y}}$ were obtained by D'Ancona, Pierfelice, and Teta \cite{Dancona-Pierfelice-Teta-2006} in the form
\begin{equation}\label{eq:DPT-dispersive}
\|w^{-1} 
e^{-\ii t H_{\alpha,Y}}P_{\mathrm{ac}}(H_{\alpha,Y})u\|_\infty\;\leqslant\; 
C\,|t|^{-3/2}
\|w\,u\|_{1}\,,\qquad t\neq 0 \,,
\end{equation} 
for the weight function
\begin{equation}
w(x)\;:=\;\sum_{j=1}^N \left(1+ \frac1{|x-y_j|}\right)\,. 
\end{equation}
The restriction on $\Gamma_{\alpha,Y}(\lambda)$ is in practice the requirement that zero is not a resonance for $H_{\alpha,Y}$; thus, for $N=1$, \eqref{eq:DPT-dispersive} was proved for $\alpha\neq 0$ and it was replaced by a slower dispersion rate $|t|^{-1/2}$ in the resonant case $N=1$, $\alpha=0$. We also observe that by interpolation \eqref{eq:DPT-dispersive} can be turned into the weighted dispersive estimate
\begin{equation}\label{eq:p-q-d}
\|w^{-\left(1-\frac2{p}\right)} 
e^{-\ii t H_{\alpha,Y}}P_{\mathrm{ac}}(H_{\alpha,Y})u\|_p\;\leqslant\; 
C_p\,|t|^{-3(\frac12-\frac1{p})}
\|w^{\frac{2}{p'}-1}u\|_{p'}
\end{equation} 
for the \emph{whole} range $p\in[2,+\infty]$.

As opposite to \eqref{eq:p-q-d}, our Corollary \ref{cor:cor-1} removes both the weight and the assumption on $\Gamma_{\alpha,Y}(\lambda)$ in the regime $p\in[2,3)$. In fact, we can also improve the weight in \eqref{eq:p-q-d} for $p\in[3,+\infty]$ by interpolating between \eqref{eq:p-q} of our Corollary \ref{cor:cor-1} and \eqref{eq:DPT-dispersive} given by \cite{Dancona-Pierfelice-Teta-2006}.

We also highlight that in the parallel work \cite{Iandoli-Scandone-2017} by one of us in collaboration with Iandoli, the non-weighted dispersive estimate \eqref{eq:p-q} is recovered by simpler and more direct arguments (i.e., without using any result from the scattering theory for $H_{\alpha,Y}$) in the special case $N=1$.

The first key ingredient of our analysis is a fairly explicit resolvent formula for $H_{\alpha,Y}$, which is well known to be a rank-$N$ perturbation (in the resolvent sense) of the free Hamiltonian. This is in a way the same spirit as in the above-mentioned work \cite{Dancona-Pierfelice-Teta-2006} for generic $N$, except that the main difficulty therein was to produce reliable estimates on the propagator $e^{-\ii t H_{\alpha,Y}}$ in the lack of an explicit representation of its kernel (instead, when $N=1$ the dispersive estimate \eqref{eq:DPT-dispersive} was obtained in \cite{Dancona-Pierfelice-Teta-2006} directly from the explicit kernel of the propagator $e^{-\ii t H_{\alpha,Y}}$, a kernel found by Scarlatti and Teta \cite{Scarlatti-Teta-1990} and by Albeverio, Brze\'zniak, and D{\setbox0=\hbox{a}{\ooalign{\hidewidth\lower1.5ex\hbox{`}\hidewidth\crcr\unhbox0}}}browski \cite{Albeverio_Brzesniak-Dabrowski-1995}).

In our case we aim at representing the (kernel of the) wave operators $W^{\pm}_{\alpha, Y}$ in the first place, based on the explicit resolvent difference $(H_{\alpha,Y}-z^2{\mathbbm{1}})^{-1}-(H_{0}-z^2{\mathbbm{1}})^{-1}$. Then, as a second key ingredient, for the $L^p\to L^p$ estimate of $W^{\pm}_{\alpha, Y}$ we appeal to a large extent to some tool from harmonic analysis, the Calder\'on-Zygmund operators and the Muckenhaupt weighted inequalities.

We organised the material as follows. 
In Section \ref{sec:preliminaries} we recall the precise definition of 
$H_{\alpha, Y}$ and we collect several technical results needed in the proof of Theorem \ref{ref:main_thm}, including in particular properties of Calder\'on-Zygmund operators and the Muckenhaupt weighted inequalities. In Section \ref{sec:Stationary_Repr_W_ops} we produce the explicit stationary representation of the wave operators $W^{\pm}_{\alpha, Y}$ which the proof of Theorem \ref{ref:main_thm} is based on. The 
$L^p$-boundedness part of Theorem \ref{ref:main_thm} is proved in 
Section \ref{sec:Lp-bounds} for the single centre case, and in 
Section \ref{sec:Lp-bounds_N} for the multi-centre case. The 
$L^p$-unboundedness part is proved in Section \ref{sec:unbound}. 
Last, in Section \ref{sec:convergence_of_wave} we discuss the 
convergence of the wave operators relative to the family of 
Hamiltonians \eqref{eq:scaled_approx} to the wave operators 
$W^{\pm}_{\alpha, Y}$ (limit of shrinking potentials).

\section{Preliminaries and 
notation}\label{sec:preliminaries}  

In this Section we recall the precise definition of 
$H_{\alpha,Y}$ and its basic properties from  \cite[Section II.1.1]{albeverio-solvable} and \cite{Posilicano2000_Krein-like_formula} (see also \cite{DFT-brief_review_2008, Dancona-Pierfelice-Teta-2006}). Here and henceforth the number $N\in\mathbb{N}$ and the $N$-point set $Y=\{y_1,\dots,y_N\}$ introduced in \eqref{eq:set_Y} are fixed, and the multi-dimensional parameter $\alpha\equiv(\alpha_1,\dots,\alpha_N)$ is assumed to run over $(-\infty,+\infty)^N$.

We begin with a few remarks on our notation.
We write $\mathbb{C}$ for the complex plane and $\mathbb{C}^{+}$ for the open 
upper half plane. By $\delta_{j,\ell}$ we denote the Kronecker delta, namely the quantity 1 for $j=\ell$ and 0 otherwise. As customary, $\langle\lambda\rangle\equiv(1+\lambda^2)^{\frac{1}{2}}$ for $\lambda\in\mathbb{R}$. The representation of any point $x\in\mathbb{R}^3$ in polar coordinates will be $x=r\omega$, where $r\equiv|x|\geqslant 0$ and $\omega\in\mathbb{S}^2$.
For $u, v\in L^2(\mathbb{R}^3)$, 
$u\otimes v$ denotes the rank-1 operator 
$f \mapsto u\langle v, f\rangle$, where $\langle\cdot,\cdot\rangle$ is the usual scalar product in $L^2(\mathbb{R}^3)$, anti-linear in the first entry and linear in the second. 
For the Fourier transform in $\mathbb{R}^d$ we use the convention
\[
(\mathcal{F}f)(\xi)\;\equiv\;\widehat{f}(\xi)\;=\;\frac{1}{\;(2\pi)^{d/2}}\int_{\mathbb{R}^d}e^{-\ii x\xi}f(x)\,\ud x\,.
\]
We often write $f\leqslant_{|\,\cdot\,|} g$ when $|f|\leqslant |g|$. 
$E^{(T)}(\ud \lambda)$ denotes the spectral measure of the self-adjoint operator $T$.
When not specified otherwise, $C$ denotes a universal positive constant and ${\mathbbm{1}}$ is the identity operator on the space that is clear from the context (this includes also the case of the $N\times N$ identity matrix).

For $z\in \mathbb{C}$ and $x,y,y'\in \mathbb{R}^3$, we set 
\begin{equation}\label{eq:def_of_the_Gs}
\begin{split}
\mathcal{G}_z(x)\;&:=\;\frac{e^{\ii z|x|}}{\,4\pi |x|\,}\,,\qquad \mathcal{G}_z^y(x)\;:=\;\frac{e^{\ii z|x-y|}}{\,4\pi |x-y|\,}\;=\;\mathcal{G}_z(x-y)\,, \\
\mathcal{G}_z^{yy'}\;&:=\;\begin{cases}
\displaystyle\frac{e^{\ii z|y-y'|}}{\,4\pi |y-y'|\,} & \textrm{if }\;y'\neq y \\ 
\qquad 0 & \textrm{if }\;y'= y\,, \\
\end{cases}
\end{split}
\end{equation}
and 
\begin{equation}\label{ga-def}
\Gamma_{\alpha,Y}(z)\;:=\;\Big(\Big(\alpha_j-\frac{\ii z}{\,4\pi\,}\Big)\delta_{j,\ell}-\mathcal{G}_z^{y_jy_\ell}\Big)_{\!j,\ell=1,\dots,N}\,.
\end{equation}
Thus, the function $z\mapsto \Gamma_{\alpha,Y}(z)$ has values in the $N\times N$ symmetric matrices and is clearly entire, and $z\mapsto \Gamma_{\alpha,Y}(z)^{-1}$ is meromorphic in $z\in\mathbb{C}$.

It is known that $\Gamma_{\alpha,Y}(z)^{-1}$ has at most $N$ poles in the closed upper half plane $\mathbb{C}^{+}\cup \mathbb{R}$, which are all located along the positive imaginary semi-axis
\cite[Theorem II.1.1.4]{albeverio-solvable}. We denote by $\mathcal{E}$ the set of such poles.

The following facts are known.

\begin{thm}\label{thm:general_properties}~
\begin{itemize}
 \item[(i)] For $z\in\mathbb{C}^+\!\setminus\!\mathcal{E}$ 
the identity
\begin{equation}\label{eq:resolvent_identity}
(H_{\alpha,Y} -z^2{\mathbbm{1}})^{-1} -(H_0-z^2{\mathbbm{1}})^{-1} \;=\; 
\sum_{j,k=1}^N  (\Gamma_{\alpha,Y}(z)^{-1})_{jk} \,\mathcal{G}_{z}^{y_j}\otimes   
\overline{\mathcal{G}_{z}^{y_k}}
\end{equation} 
defines the resolvent of a self-adjoint operator $H_{\alpha,Y}$ in $L^2(\mathbb{R}^3)$. $H_{\alpha,Y}$ is an extension of the operator $\mathring{H}_Y=-\Delta\upharpoonright C_0^\infty(\mathbb{R}^3 \!\setminus\! Y)$ defined in \eqref{eq:little_op}.
\item[(ii)] The domain of $H_{\alpha,Y}$ has the following representation, for any $z\in\mathbb{C}^+\!\setminus\!\mathcal{E}$:
\begin{equation}\label{eq:domain_of_HaY}
\mathcal{D}(H_{\alpha,Y})\;=\;\Big\{\psi= \phi_z + \sum_{j,k=1}^N (\Gamma_{\alpha,Y}(z)^{-1})_{jk} \, \phi_z(y_k)
{\mathcal{G}}_{z}^{y_j}  
\,\Big|\,\phi_z \in H^2(\mathbb{R}^3) \Big\}\,.
\end{equation} 
The summands in the decomposition of each $\psi\in\mathcal{D}(H_{\alpha,Y})$ depend on the chosen $z$, however, $\mathcal{D}(H_{\alpha,Y})$ does not. Equivalently, for any $z\in\mathbb{C}^+\!\setminus\!\mathcal{E}$,
\begin{equation}\label{eq:domain_of_HaY_bc}
\mathcal{D}(H_{\alpha,Y})\;=\;\left\{\psi= \phi_z +\sum_{j=1}^N q_j\,{\mathcal{G}}_{z}^{y_j}\left|\!\!
\begin{array}{c}
\phi_z \in H^2(\mathbb{R}^3) \\
(q_1,\dots,q_N)\in\mathbb{C}^N \\
\begin{pmatrix} \phi_z(y_1) \\ \vdots \\ \phi_z(y_N)\end{pmatrix}=\Gamma_{\alpha,Y}(z)\begin{pmatrix} q_1 \\ \vdots \\ q_N\end{pmatrix}
\end{array}\!\!\!\right.\right\}\,.
\end{equation}
At fixed $z$, the decompositions above are unique.
\item[(iii)] With respect to the decompositions \eqref{eq:domain_of_HaY}-\eqref{eq:domain_of_HaY_bc}, one has
\begin{equation}\label{eq:action_of_H}
 (H_{\alpha,Y} -z^2{\mathbbm{1}})\,\psi\;=\;(H_0 -z^2{\mathbbm{1}})\,\phi_z\,.
\end{equation}
Moreover, $H_{\alpha,Y}$ has the following locality property: if 
$\psi\in\mathcal{D}(H_{\alpha,Y})$ is such that 
$\psi|_\mathcal{U}\equiv 0$ for some open 
$\mathcal{U}\subset\mathbb{R}^3$, then 
$(H_{\alpha,Y}\psi)|_\mathcal{U}\equiv 0$.
\item[(iv)] The spectrum $\sigma(H_{\alpha,Y})$ of 
$H_{\alpha,Y}$ consists of at most 
$N$ strictly negative eigenvalues and the absolutely continuous 
part $\sigma_{\mathrm{ac}}(H_{\alpha,Y})=[0,\infty)$. 
Non-negative eigenvalues and the singular continuous 
spectrum are absent. 
There is a one to one correspondence between 
the poles $\ii\lambda$, $\lambda>0$ of 
$\Gamma_{\alpha,Y}(z)^{-1}$ in $\mathbb{C}^{+}$ and the 
negative eigenvalues $-\lambda^2$ of $H_{\alpha,Y}$, counting the 
multiplicity. The eigenfunctions that belong to 
the eigenvalue $E_0=-\lambda_0^2<0$ of $H_{\alpha,Y}$ have the form
\[
 \psi_0\;=\;\sum_{j=1}^N c_j\,\mathcal{G}_{\ii\lambda_0}^{y_j}
\]
where $(c_0,\dots,c_N)$ are eigenvectors with eigenvalue zero of $\Gamma_{\alpha,Y}(\ii\lambda_0)$. The ground state, if it exists, is non-degenerate.
\end{itemize}
\end{thm}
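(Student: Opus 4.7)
The theorem splits cleanly into the algebraic parts (i)--(iii), which follow from the Krein-type formula once a few pointwise-evaluation issues are settled, and the spectral part (iv), which needs finite-rank perturbation theory together with a Perron--Frobenius argument.

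The plan for (i)--(iii) is to \emph{define} $R(z)$ by the right-hand side of \eqref{eq:resolvent_identity} and show it is the resolvent of a self-adjoint extension of $\mathring{H}_Y$. The key observation is that $H^2(\mathbb{R}^3)$ embeds continuously into the continuous functions, so that for any $f\in L^2(\mathbb{R}^3)$ the identity
\[
\langle\overline{\mathcal{G}_z^{y_k}},f\rangle\;=\;\bigl((H_0-z^2{\mathbbm{1}})^{-1}f\bigr)(y_k)
\]
makes sense and the rank-$N$ correction in \eqref{eq:resolvent_identity} is well defined. Self-adjointness of the resulting operator $H_{\alpha,Y}$ then follows from the standard criterion $R(z)^*=R(\bar z)$ (which is equivalent to $\Gamma_{\alpha,Y}(z)^*=\Gamma_{\alpha,Y}(\bar z)$) together with the first resolvent identity $R(z)-R(w)=(z^2-w^2)R(z)R(w)$; the latter reduces, after expansion, to a matrix identity of the form $\Gamma_{\alpha,Y}(z)-\Gamma_{\alpha,Y}(w)=(z^2-w^2)M(z,w)$ where $M(z,w)_{j\ell}$ is the kernel value at $(y_j,y_\ell)$ of $R_0(z)R_0(w)$. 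The diagonal $j=\ell$ entries cancel precisely thanks to the $-\ii z/(4\pi)$ summand built into the definition of $\Gamma_{\alpha,Y}(z)$, which is what absorbs the on-diagonal singularity of $\mathcal{G}_z$.

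Parts (ii) and (iii) are then bookkeeping: given $g\in L^2$ and $\psi=R(z)g$, setting $\phi_z:=R_0(z)g\in H^2(\mathbb{R}^3)$ and $q_j:=\sum_k(\Gamma_{\alpha,Y}(z)^{-1})_{jk}\phi_z(y_k)$ yields both \eqref{eq:domain_of_HaY} and \eqref{eq:domain_of_HaY_bc}, whose uniqueness follows from $H^2(\mathbb{R}^3)\cap\mathrm{span}\{\mathcal{G}_z^{y_j}\}_{j=1}^N=\{0\}$ in three dimensions. Applying $H_{\alpha,Y}-z^2{\mathbbm{1}}$ to $\psi=R(z)g$ gives \eqref{eq:action_of_H}. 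The inclusion $\mathring{H}_Y\subset H_{\alpha,Y}$ follows because on $u\in C_0^\infty(\mathbb{R}^3\setminus Y)$ the evaluation functionals defining the correction term vanish. The locality property is obtained by noting that $\psi|_{\mathcal{U}}\equiv 0$ forces $q_j=0$ whenever $y_j\in\mathcal{U}$ (by continuity of $\phi_z$), so that on $\mathcal{U}$ we have $\phi_z=-\sum_{y_\ell\notin\mathcal{U}}q_\ell\mathcal{G}_z^{y_\ell}$, which is annihilated by $-\Delta-z^2$ there; combined with \eqref{eq:action_of_H} this gives $(H_{\alpha,Y}\psi)|_{\mathcal{U}}\equiv 0$.

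For (iv) the plan is in four steps. First, Weyl's theorem applied to the finite-rank resolvent difference yields $\sigma_\mathrm{ess}(H_{\alpha,Y})=[0,\infty)$. Second, the pole--eigenvalue correspondence is proved in the easy direction by constructing $\psi_0=\sum_j c_j\mathcal{G}_{\ii\lambda_0}^{y_j}$ from $c\in\ker\Gamma_{\alpha,Y}(\ii\lambda_0)$, which by \eqref{eq:domain_of_HaY_bc} lies in $\mathcal{D}(H_{\alpha,Y})$ with $\phi_{\ii\lambda_0}=0$, so that \eqref{eq:action_of_H} yields $H_{\alpha,Y}\psi_0=-\lambda_0^2\psi_0$; the converse follows from a residue calculation on $R(z)$ at $z=\ii\lambda_0$. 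Third, the absence of positive eigenvalues and of singular continuous spectrum is obtained by combining the limiting absorption principle for $H_0$ with analytic Fredholm applied to $\lambda\mapsto\Gamma_{\alpha,Y}(\lambda+\ii 0)^{-1}$, which stays invertible on $(0,\infty)$ because $\mathrm{Im}\,\Gamma_{\alpha,Y}(\lambda+\ii 0)=-(\lambda/4\pi){\mathbbm{1}}$ is strictly negative definite there. Fourth, non-degeneracy of the ground state: at the largest $\lambda_0$ with $\det\Gamma_{\alpha,Y}(\ii\lambda_0)=0$, the real symmetric matrix $-\Gamma_{\alpha,Y}(\ii\lambda_0)$ has strictly positive off-diagonal entries $\mathcal{G}_{\ii\lambda_0}^{y_jy_\ell}>0$, so the Perron--Frobenius theorem for Metzler matrices yields a simple largest eigenvalue of $-\Gamma_{\alpha,Y}(\ii\lambda_0)$, i.e.\ a simple smallest eigenvalue of $\Gamma_{\alpha,Y}(\ii\lambda_0)$, which must be the zero eigenvalue by the maximality of $\lambda_0$. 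The main obstacle is the threshold behaviour entering step three: $\Gamma_{\alpha,Y}(0)$ may fail to be invertible (precisely the resonant situation flagged in Remark~\ref{re:consistent}), and one has to control $\Gamma_{\alpha,Y}(\lambda+\ii 0)^{-1}$ as $\lambda\downarrow 0$; this does not affect the final statement but is where the technical work concentrates. Everything else reduces to algebraic manipulation of \eqref{eq:resolvent_identity}.
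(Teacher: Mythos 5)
A preliminary remark on the comparison: the paper itself does not prove Theorem \ref{thm:general_properties} at all — it quotes it, attributing (i)–(iii) to Grossmann–H{\o}egh-Krohn–Mebkhout and to \cite[Theorems II.1.1.3]{albeverio-solvable}, and (iv) to \cite[Theorem II.1.1.4]{albeverio-solvable}. Your outline is essentially that standard proof: take the right-hand side of \eqref{eq:resolvent_identity} as a candidate resolvent, verify the pseudo-resolvent axioms (with the $-\ii z/(4\pi)$ term absorbing the diagonal singularity of $\mathcal{G}_z$), read off domain, action and locality, and obtain (iv) from Weyl's theorem, the pole--eigenvalue correspondence, and Perron--Frobenius for the ground state. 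The architecture is right.

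There is, however, one step whose justification is wrong precisely in the multi-centre case it is supposed to handle. You claim $\mathrm{Im}\,\Gamma_{\alpha,Y}(\lambda+\ii 0)=-(\lambda/4\pi)\,\mathbbm{1}$ for $\lambda>0$; in fact the imaginary part has non-zero off-diagonal entries, namely $-\frac{\sin(\lambda|y_j-y_\ell|)}{4\pi|y_j-y_\ell|}$ for $j\neq\ell$ (and $-\lambda/4\pi$ on the diagonal), so for $N\geqslant 2$ it is not a multiple of the identity and your one-line argument for invertibility of $\Gamma_{\alpha,Y}(\lambda)$ on $(0,\infty)$ — the step that rules out embedded eigenvalues and singular continuous spectrum — collapses. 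The conclusion is still true, but one needs the genuinely multi-centre observation that $\bigl(\frac{\sin(\lambda|y_j-y_\ell|)}{\lambda|y_j-y_\ell|}\bigr)_{j\ell}$ is, up to a positive factor, the Gram matrix of the linearly independent functions $\omega\mapsto e^{\ii\lambda\omega\cdot y_j}$ on $\mathbb{S}^2$, hence strictly positive definite; then $\Gamma_{\alpha,Y}(\lambda)c=0$ forces $\langle c,\mathrm{Im}\,\Gamma_{\alpha,Y}(\lambda)\,c\rangle=0$ and so $c=0$. With that repaired, your scheme goes through, but tidy up three smaller points: the correct adjoint relation is $\Gamma_{\alpha,Y}(z)^{*}=\Gamma_{\alpha,Y}(-\bar z)$ (consistent with $\overline{\mathcal{G}_z}=\mathcal{G}_{-\bar z}$), not $\Gamma_{\alpha,Y}(\bar z)$; to get a self-adjoint operator you must also check $\ker R(z)=\{0\}$ (equivalently dense range), which follows since $R(z)g=0$ splits into an $H^2$ part and a singular part that must vanish separately; and you cannot invoke \eqref{eq:domain_of_HaY_bc} at $z=\ii\lambda_0\in\mathcal{E}$ to place $\psi_0$ in the domain — decompose instead at an admissible $z\notin\mathcal{E}$, writing $\phi_z=\sum_j c_j(\mathcal{G}_{\ii\lambda_0}^{y_j}-\mathcal{G}_z^{y_j})\in H^2(\mathbb{R}^3)$ and $q_j=c_j$, and verify the boundary condition using $\Gamma_{\alpha,Y}(\ii\lambda_0)c=0$. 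Finally, the assertion that at the largest $\lambda_0$ the zero eigenvalue of $\Gamma_{\alpha,Y}(\ii\lambda_0)$ is the smallest one deserves its one-line proof: $\Gamma_{\alpha,Y}(\ii\lambda)$ is positive definite for large $\lambda$ and its determinant does not vanish on $(\lambda_0,\infty)$, so by continuity of eigenvalues it remains positive definite there and is positive semidefinite at $\lambda_0$.
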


Part (i) of Theorem \ref{thm:general_properties} above was first proved in  \cite{Grossmann-HK-Mebkhout-1980,Grossmann-HK-Mebkhout-1980_CMPperiodic} -- see also the discussion in \cite[equation (II.1.1.33)]{albeverio-solvable}.
Parts (ii) and (iii) originate from \cite{Grossmann-HK-Mebkhout-1980_CMPperiodic} and are discussed in \cite[Theorem II.1.1.3]{albeverio-solvable}, in particular \eqref{eq:domain_of_HaY_bc} is highlighted in
\cite{DFT-brief_review_2008}. Part (iv) is an extension, proved in \cite[Theorem II.1.1.4]{albeverio-solvable}, of some of the corresponding results established in \cite{Grossmann-HK-Mebkhout-1980_CMPperiodic}.

By exploiting the boundary condition \eqref{eq:domain_of_HaY_bc} between the regular and the singular part of a generic $\psi\in\mathcal{D}(H_{\alpha,Y})$, it is straightforward to see that
\begin{equation}\label{eq:bc_condition_domain}
\lim_{r_j\downarrow 0} 
\Big(\frac{\partial(r_j \psi)}{\partial r_j}- 4\pi \alpha_j r_j \psi\Big) =\,0\,,\quad 
r_j:=|x-y_j|\,, \quad j\in\{1, \dots, N\}\,,
\end{equation} 
whence also
\begin{equation}\label{eq:asymptotics_of_psi_in_domain}
\lim_{x\to y_j} \Big( 
\psi(x) - \frac{q_j}{4\pi|x-y_j|}  - \alpha_j q_j \Big)\;=\;0 , \qquad 
j\in\{1, \dots, N\}\,.
\end{equation}
Thus, the elements of $\mathcal{D}(H_{\alpha,Y})$ satisfy the `physical' (Bethe-Peierls) boundary condition
\begin{equation}
\psi(x)\;\underset{x\to y_j}{\sim}\;\frac{q_j}{\,4\pi\,}\Big(\frac{1}{\,|x-y_j|\,}-\frac{1}{\,a_j}\Big)\,,\qquad a_j:=-(4\pi\alpha_j)^{-1}
\end{equation}
at each centre of the point interaction (see \eqref{eq:BPcontact} in the Introduction). In fact, $\mathcal{D}(H_{\alpha,Y})$ is nothing but the space of those $L^2$-functions $\psi$ such that the distribution $\Delta\psi$ belongs to $L^2(\mathbb{R}^3\!\setminus\!Y)$ and the boundary condition \eqref{eq:bc_condition_domain} is satisfied.

We also record two simple consequences of Theorem \ref{thm:general_properties} which will turn out to be useful in our discussion.

\begin{lem} \label{reality} 
The operator $H_{\alpha,Y}$ is a \emph{real} self-adjoint 
operator, that is, for a real-valued function $\psi\in\mathcal{D}(H_{\alpha,Y})$, 
$H_{\alpha,Y}\psi$ is also real-valued.  
\end{lem}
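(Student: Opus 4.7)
The strategy is to exploit the resolvent identity \eqref{eq:resolvent_identity} at a purely imaginary spectral parameter, where every ingredient turns out to be real, and then transfer this reality from the resolvent to the operator itself. Concretely, I will pick $\mu>0$ large enough that $\ii\mu \notin \mathcal{E}$ (which certainly holds once $-\mu^2$ lies below the bottom of $\sigma(H_{\alpha,Y})$), and evaluate \eqref{eq:resolvent_identity} at $z=\ii\mu$.

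First I would observe that all the building blocks of \eqref{eq:resolvent_identity} at $z=\ii\mu$ are real. Indeed, for every $x,y\in\mathbb{R}^3$ the Green function
\[
\mathcal{G}_{\ii\mu}(x-y)\;=\;\frac{e^{-\mu|x-y|}}{\,4\pi|x-y|\,}
\]
is real-valued; consequently $\mathcal{G}_{\ii\mu}^{y_j}(x)$ is real and $\overline{\mathcal{G}_{\ii\mu}^{y_k}(y)}=\mathcal{G}_{\ii\mu}^{y_k}(y)$. Moreover, from the very definition \eqref{ga-def}, the matrix entries
\[
\Gamma_{\alpha,Y}(\ii\mu)_{j\ell}\;=\;\Big(\alpha_j+\frac{\mu}{\,4\pi\,}\Big)\delta_{j,\ell}-\mathcal{G}_{\ii\mu}^{y_jy_\ell}
\]
are all real, whence the inverse matrix $\Gamma_{\alpha,Y}(\ii\mu)^{-1}$ is real as well. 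Since the free resolvent $(H_0+\mu^2{\mathbbm{1}})^{-1}$ has the (real) Yukawa kernel $\mathcal{G}_{\ii\mu}(x-y)$, plugging these facts into \eqref{eq:resolvent_identity} shows that the integral kernel $K_\mu(x,y)$ of $(H_{\alpha,Y}+\mu^2{\mathbbm{1}})^{-1}$ is real-valued.

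Next, from this kernel reality I deduce that the resolvent commutes with complex conjugation $C:\psi\mapsto\overline{\psi}$, i.e.\ $C\,(H_{\alpha,Y}+\mu^2{\mathbbm{1}})^{-1}=(H_{\alpha,Y}+\mu^2{\mathbbm{1}})^{-1}\,C$, so that $(H_{\alpha,Y}+\mu^2{\mathbbm{1}})^{-1}$ sends real-valued $L^2$-functions to real-valued $L^2$-functions. To conclude, let $\psi\in\mathcal{D}(H_{\alpha,Y})$ be real-valued and set $\phi:=(H_{\alpha,Y}+\mu^2{\mathbbm{1}})\psi$, so that $\psi=(H_{\alpha,Y}+\mu^2{\mathbbm{1}})^{-1}\phi$. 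Writing $\phi=\phi_1+\ii\phi_2$ with real $\phi_1,\phi_2$, and applying the resolvent to each part, linearity and the reality-preserving property give $\psi=(H_{\alpha,Y}+\mu^2{\mathbbm{1}})^{-1}\phi_1+\ii\,(H_{\alpha,Y}+\mu^2{\mathbbm{1}})^{-1}\phi_2$ with both summands real-valued. Since $\psi$ is real, the imaginary part $(H_{\alpha,Y}+\mu^2{\mathbbm{1}})^{-1}\phi_2$ must vanish, whence $\phi_2=0$ by injectivity of the resolvent. Hence $\phi$ is real, and therefore $H_{\alpha,Y}\psi=\phi-\mu^2\psi$ is real too, which is exactly the claim.

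There is no real obstacle here: the whole argument rests on the harmless observation that $z=\ii\mu$ makes the coefficients in \eqref{eq:resolvent_identity} real. The only minor point to check is that such a $\mu$ exists outside the set $\mathcal{E}$ of poles of $\Gamma_{\alpha,Y}(z)^{-1}$, which is immediate from Theorem \ref{thm:general_properties}(iv) since $\mathcal{E}$ is finite and lies on the imaginary axis at bounded height.
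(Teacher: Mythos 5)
Your argument is correct: at $z=\ii\mu$ with $\ii\mu\notin\mathcal{E}$ all ingredients of \eqref{eq:resolvent_identity} are indeed real (the Yukawa kernel $\mathcal{G}_{\ii\mu}$, the matrix $\Gamma_{\alpha,Y}(\ii\mu)$ and hence its inverse, and the kernel $\mathcal{G}_{\ii\mu}^{y_j}(x)\,\mathcal{G}_{\ii\mu}^{y_k}(y)$ of the rank-one terms), so $(H_{\alpha,Y}+\mu^2\mathbbm{1})^{-1}$ commutes with complex conjugation, and your transfer from the resolvent back to the operator (split $\phi=(H_{\alpha,Y}+\mu^2\mathbbm{1})\psi$ into real and imaginary parts, use that the resolvent is injective and reality-preserving) is sound. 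The route, however, is genuinely different from the paper's: there the reality is read off the domain decomposition \eqref{eq:domain_of_HaY_bc}, by noting via the asymptotics \eqref{eq:asymptotics_of_psi_in_domain} that the charges $q_1,\dots,q_N$ of a real $\psi$ are real, hence the regular part $\phi_z=\psi-\sum_j q_j\mathcal{G}_{\ii\lambda}^{y_j}$ is real, and then the action formula \eqref{eq:action_of_H} gives that $(H_{\alpha,Y}+\lambda^2)\psi=(H_0+\lambda^2)\phi_z$ is real. Your proof uses only part (i) of Theorem \ref{thm:general_properties} (the explicit resolvent) and an abstract ``real resolvent kernel $\Rightarrow$ real operator'' argument, which is more robust and would apply verbatim to any self-adjoint operator whose resolvent has a real kernel at a single point of the resolvent set; the paper's proof, by contrast, exploits the concrete boundary-condition description of $\mathcal{D}(H_{\alpha,Y})$ and in passing exhibits explicitly that the regular part and the charges of a real element of the domain are themselves real, a fact that is useful to keep in mind elsewhere. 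Both hinge on the same elementary observation that everything is real at purely imaginary spectral parameter, so either proof is acceptable.
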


\begin{proof} Let $z=\ii\lambda$, $\lambda>0$, be such that 
$\ii\lambda \not\in \mathcal{E}$ and let $\psi$ be a real-valued function in $\mathcal{D}(H_{\alpha,Y})$. 
Then, with the notation of the decomposition \eqref{eq:domain_of_HaY_bc} of $\psi$, the asymptotics \eqref{eq:asymptotics_of_psi_in_domain} show that the coefficients $q_1,\dots,q_N$ are all real.
The entries of $\Gamma_{\alpha,Y}(\ii\lambda)$ are real too, because $\mathfrak{Re}\,z>0$. Then \eqref{eq:domain_of_HaY_bc} implies that $\phi_z$ is real-valued and so must be $H_{\alpha,Y}\psi+\lambda^2\psi$, owing to \eqref{eq:action_of_H}.
\end{proof}

\begin{lem}\label{lem:pole_of_order_1} If $z=0$ is a pole 
of $\Gamma_{\alpha,Y}(z)^{-1}$, then it is a pole of  
first order and in a neighbourhood of $z=0$ one has 
\[
\Gamma_{\alpha,Y}(z)^{-1}\;=\;\frac{\Theta}{z} + \Gamma_{\alpha,Y}^{\mathrm{(reg)}}(z)
\]
for some constant matrix $\Theta$ and some analytic matrix-valued function $\Gamma_{\alpha,Y}^{\mathrm{(reg)}}(z)$. 
\end{lem}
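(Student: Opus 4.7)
The plan is a two-step reduction. First, I will show that $\ker M$ is at most one-dimensional, where $M:=\Gamma_{\alpha,Y}(0)$, by exploiting the absence of non-negative eigenvalues of $H_{\alpha,Y}$ recorded in Theorem \ref{thm:general_properties}(iv). Second, with this kernel controlled, I will perform a Schur-complement (Feshbach) reduction of $\Gamma_{\alpha,Y}(z)$ along the splitting $\mathbb{C}^N=\ker M\oplus(\ker M)^\perp$, and read off the first-order pole of $\Gamma_{\alpha,Y}(z)^{-1}$ from the fact that the leading coefficient of the Schur complement is a nonzero multiple of the rank-one projection onto $\ker M$.

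For the first step, Taylor-expanding the entries of \eqref{ga-def} at $z=0$ gives $\Gamma_{\alpha,Y}(z)=M-\frac{\ii z}{4\pi}E+O(z^2)$, with $M$ real symmetric and $E$ the $N\times N$ matrix of all ones. Pick any $v\in\ker M\!\setminus\!\{0\}$ and consider the formal zero mode $\psi_v:=\sum_{j=1}^N v_j\,\mathcal{G}_0^{y_j}$. Choosing an auxiliary spectral parameter $\ii\lambda\in\mathbb{C}^+\!\setminus\!\mathcal{E}$ with $\lambda>0$ and writing $\psi_v=\phi_{\ii\lambda}+\sum_j v_j\,\mathcal{G}_{\ii\lambda}^{y_j}$, one has $\phi_{\ii\lambda}=\sum_j v_j(\mathcal{G}_0^{y_j}-\mathcal{G}_{\ii\lambda}^{y_j})\in H^2(\mathbb{R}^3)$, since $\mathcal{G}_0^{y_j}-\mathcal{G}_{\ii\lambda}^{y_j}$ is smooth and exponentially decaying. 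A short computation of $\phi_{\ii\lambda}(y_j)$ then reduces the matching condition in \eqref{eq:domain_of_HaY_bc} to $Mv=0$, hence it is automatically satisfied, and \eqref{eq:action_of_H} together with $(-\Delta+\lambda^2)(\mathcal{G}_0^{y_j}-\mathcal{G}_{\ii\lambda}^{y_j})=\lambda^2\mathcal{G}_0^{y_j}$ yields $H_{\alpha,Y}\psi_v=0$ whenever $\psi_v\in L^2(\mathbb{R}^3)$. But $\psi_v(x)=(e^Tv)/(4\pi|x|)+O(|x|^{-2})$ at infinity with $e:=(1,\ldots,1)^T$, so $\psi_v\in L^2$ iff $e^Tv=0$. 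Since Theorem \ref{thm:general_properties}(iv) rules out any zero eigenvalue, we must have $e^Tv\neq 0$ for every $v\in\ker M\!\setminus\!\{0\}$, which forces $\dim\ker M\leqslant 1$: two independent kernel vectors would combine linearly into a nonzero kernel vector annihilated by $e^T$.

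For the second step, assume $M$ singular, let $v$ span $\ker M$, and set $P:=vv^T/|v|^2$, $Q:=I-P$. Because $QMQ$ is invertible on $QH$, the inverse $R(z):=(Q\Gamma_{\alpha,Y}(z)Q)^{-1}$ is analytic on $QH$ in a neighbourhood of $z=0$, and the standard Schur-complement formula gives $P\Gamma_{\alpha,Y}(z)^{-1}P=S(z)^{-1}$ with
\[
S(z)\;:=\;P\Gamma_{\alpha,Y}(z)P-P\Gamma_{\alpha,Y}(z)Q\,R(z)\,Q\Gamma_{\alpha,Y}(z)P,
\]
whereas the off-diagonal and $QQ$ blocks of $\Gamma_{\alpha,Y}(z)^{-1}$ are $S(z)^{-1}$ or $R(z)$ multiplied by extra factors of order $O(z)$ (coming from $PMQ=QMP=0$). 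Using $PMP=0$ and $PEP=((e^Tv)^2/|v|^2)\,P$, the Taylor expansion yields
\[
S(z)\;=\;-\frac{\ii z(e^Tv)^2}{4\pi|v|^2}\,P+O(z^2),
\]
whose leading coefficient is a nonzero multiple of $P$ because $e^Tv\neq 0$. Hence $S(z)^{-1}$ has a simple pole on the one-dimensional range of $P$, the other blocks of $\Gamma_{\alpha,Y}(z)^{-1}$ stay bounded as $z\to 0$, and the claimed Laurent expansion follows with $\Theta=(4\pi\ii/(e^Tv)^2)\,vv^T$, a rank-one matrix.

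The main obstacle is Step 1: one must honestly realize the formal zero mode $\psi_v$ as an element of $\mathcal{D}(H_{\alpha,Y})$—verifying the boundary condition in \eqref{eq:domain_of_HaY_bc}—in order to invoke the nonexistence of non-negative eigenvalues. A purely algebraic attempt to bound the order of the pole via $\det\Gamma_{\alpha,Y}(z)$ and its adjugate is too crude, because the rank-one structure of $\Gamma'_{\alpha,Y}(0)=-\frac{\ii}{4\pi}E$ allows, a priori, poles of order higher than $1$ when $\dim\ker M\geqslant 2$; it is precisely the spectral input of Theorem \ref{thm:general_properties}(iv) that excludes this pathology. The Schur-complement bookkeeping in Step 2 is then routine.
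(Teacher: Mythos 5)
Your argument is correct in substance, but it follows a genuinely different route from the paper's. The paper proves the lemma in a few lines by a purely operator-theoretic limit: since neither $H_{\alpha,Y}$ nor $H_0$ has a zero eigenvalue, $z^2(T-z^2{\mathbbm 1})^{-1}\to 0$ weakly as $z\to 0$, and feeding this into the resolvent identity \eqref{eq:resolvent_identity} forces $\lim_{z\to 0}\sum_{j,k}z^2(\Gamma_{\alpha,Y}(z)^{-1})_{jk}\langle u,\mathcal G_z^{y_j}\rangle\langle\mathcal G_{-\ii\bar z}^{y_k},v\rangle=0$; if the pole had order $\geqslant 2$, the leading Laurent coefficient $\widetilde\Theta$ would satisfy $\sum_{j,k}\widetilde\Theta_{jk}|x-y_j|^{-1}|y-y_k|^{-1}\equiv 0$, whence $\widetilde\Theta=0$ by linear independence of the Green functions. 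You instead work entirely at the level of the $N\times N$ matrix: you use the domain description \eqref{eq:domain_of_HaY_bc}--\eqref{eq:action_of_H} to show that any $v\in\ker\Gamma_{\alpha,Y}(0)$ with $e^Tv=0$ would produce a genuine zero-energy eigenfunction $\sum_j v_j\mathcal G_0^{y_j}\in L^2$, contradicting Theorem \ref{thm:general_properties}(iv); this gives $\dim\ker\Gamma_{\alpha,Y}(0)\leqslant 1$ and $e^Tv\neq 0$, and then the Feshbach/Schur reduction with $\Gamma'_{\alpha,Y}(0)=-\frac{\ii}{4\pi}ee^T$ yields the simple pole. Both proofs hinge on the same spectral input (no zero eigenvalue), but yours also needs parts (ii)--(iii) of Theorem \ref{thm:general_properties}; in exchange it gives more than the paper's: the residue $\Theta=\frac{4\pi\ii}{(e^Tv)^2}\,vv^T$ is exhibited explicitly as a rank-one matrix, and one sees that a zero-energy resonance, if present, is simple. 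The paper's proof is shorter and avoids all domain bookkeeping. Your closing remark is also apt: a naive determinant/adjugate bound cannot work, precisely because $\Gamma'_{\alpha,Y}(0)$ is rank one.

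One justification in Step 1 is inaccurate, though easily repaired: $\mathcal G_0^{y_j}-\mathcal G_{\ii\lambda}^{y_j}=\frac{1-e^{-\lambda|x-y_j|}}{4\pi|x-y_j|}$ is \emph{not} exponentially decaying (it decays only like $|x|^{-1}$, so it is not even in $L^2(\mathbb{R}^3)$), nor is it $C^\infty$ at $y_j$. Hence the individual summands of $\phi_{\ii\lambda}$ are not in $H^2$. This does not damage the proof, because you only invoke the domain membership in the case $e^Tv=0$: there $\psi_v\in L^2$, so $\phi_{\ii\lambda}=\psi_v-\sum_j v_j\mathcal G_{\ii\lambda}^{y_j}\in L^2$, and $(-\Delta+\lambda^2)\phi_{\ii\lambda}=\lambda^2\psi_v\in L^2$ gives $\phi_{\ii\lambda}\in H^2$ by elliptic regularity. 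With that one-line fix, the argument stands.
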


\begin{proof} We recall first that for a generic self-adjoint operator $T$ in a Hilbert space
$\mathcal{H}$ for which zero is not an eigenvalue, one has 
\[
\begin{split}
\lim_{\k\downarrow 0 }\|\:\ii \k^2(T+\ii\k^2)^{-1}u\|^2\;&=\;\lim_{\k\downarrow 0 } \int_{\mathbb{R}}
\left|\frac{\ii\k^2}{\lambda+\ii\k^2}\right|^2\langle u,E^{(T)}(\ud \lambda)u\rangle \\
&=\;\|E^{(T)}(\{0\})u\|^2 \;=\; 0\qquad\forall u\in\cH\,.
\end{split}
\]
In our case, neither $H_{\a,Y}$ nor $H_0$ have zero eigenvalue: therefore, applying the above fact to the resolvent identity \eqref{eq:resolvent_identity}, one finds
\begin{equation*} 
\lim_{z\to 0} \,\sum_{j,k=1}^N z^2 (\Gamma_{\alpha,Y}(z)^{-1})_{jk} 
\langle u, \mathcal{G}_{z}^{y_j}\rangle\,\langle \mathcal{G}_{-\ii\overline{z}}^{y_k},v\rangle \;=\;0
\end{equation*} 
for any $u, v \in C_0^\infty(\mathbb{R}^3)$. Suppose 
that $\Gamma_{\alpha,Y}(z)^{-1}$ has a pole of order $\geqslant 2$ at $z=0$ with matrix residue $\widetilde{\Theta}$: then the identity above implies
\[
\sum_{j,k=1}^N \widetilde{\Theta}_{jk}\langle u, \mathcal{G}_{0}^{y_j}\rangle\,\langle \mathcal{G}_{0}^{y_k},v\rangle \;=\;0\,\qquad\forall u, v \in C_0^\infty(\mathbb{R}^3)\,.
\] 
It follows that 
\[
\sum_{j,k=1}^N\, \frac{\widetilde{\Theta}_{jk}}{\,|x-y_j|\,|y-y_k|\,} \;=\;0\,, \qquad x,y \in \mathbb{R}^3.
\]
Since $\widetilde{\Theta}$ is a symmetric matrix, this implies 
$\widetilde{\Theta}=0$ and the pole must be of first order. 
\end{proof}

Last, we collect in the remaining part of this Section some results from one-dimensional harmonic analysis which we shall make crucial use of in the course of our discussion. For the definition of 
Calder\'on-Zygmund operators 
we refer to 
\cite[Definitions 7.4.1, 7.4.2]{Grafakos_ClassialFourier} 
and to 
\cite[Definitions 4.1.2 and 4.1.8]{Grafakos_ModernFourier}, 
whereas for the definition of $A_p$ Muckenhaupt weights we refer to 
\cite[Definitions 7.1.3]{Grafakos_ClassialFourier}. 
We shall use interchangeably the same symbol for a Calder\'on-Zygmund operator and for its integral kernel.

The following properties are known.

\begin{thm}\label{thm:harmonic_analysis_results}~
\begin{itemize}
 \item[(i)] The convolution operator on $\mathbb{R}$ with 
a function $L(x)$ is a Calder\'on-Zygmund operator if 
$\widehat{L}(\xi)$ is bounded and, for a constant $C>0$, one has  
\begin{equation*} 
|L(x)|\;\leqslant\; C\,|x|^{-1} \qquad\textrm{and}\qquad
\Big|\frac{\ud L}{\ud x}(x)\Big|\;\leqslant\; C \,|x|^{-2} \qquad\textrm{for}\;
x\not=0\, .
\end{equation*} 
 \item[(ii)] If $L$ is a Calder\'on-Zygmund operator 
and $w$ is an $A_p$-weight for some $p\in(1,\infty)$, then 
$L$ is bounded in $L^p(\mathbb{R}, w(x)dx)$ in the sense that
\begin{equation}\label{eq:boundedness_of_CZ_ops_with_Apweghts}
\int_{\mathbb{R}} |(Lu)(x)|^p\, w(x)\,\ud x \;\leqslant\; \int_{\mathbb{R}} |u(x)|^p \,w(x) \,\ud x \qquad\forall u \in C_0^\infty(\mathbb{R})\,.
\end{equation}
\item[(iii)] If $w$ is an $A_p$-weight for some $p\in(1,\infty)$ and 
\begin{equation}
(\mathcal{M}(u))(x) \;:=\; \sup_{r>0}\,\frac{1}{2r}\int_{|x-y|<r} 
|u(y)|\,\ud y
\end{equation}
is the Hardy-Littlewood maximal function of some $u \in C_0^\infty(\mathbb{R})$, then 
\begin{equation}\label{eq:boundedness_of_HL_with_Apweghts}
\int_{\mathbb{R}} |(\mathcal{M}(u))(x)|^p\, w(x)\,\ud x \;\leqslant\; \int_{\mathbb{R}} |u(x)|^p \,w(x) \,\ud x\,.
\end{equation}
If, for some function $L(x)$ one has $|L(x)|\leqslant A(x)$ in 
$\mathbb{R}$ for some $A\in L^1(\mathbb{R})$ which is 
bounded, non-negative, even, and non-increasing 
on $(0,+\infty)$, then 
$|(L * u)(x)|\leqslant C(\mathcal{M}(u))(x)$, hence the convolution 
operator on $\mathbb{R}$ with 
the function $L(x)$ is bounded in $L^p(\mathbb{R}, w(x)dx)$.
%
 \item[(iv)] The function $|x|^a$ is an $A_p$-weight on 
$\mathbb{R}$ if and only if $a\in(-1,p-1)$.
\end{itemize}
\end{thm}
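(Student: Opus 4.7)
These four statements collect classical facts from one-dimensional harmonic analysis, and my plan is to present them not as one monolithic proof but as four short, independent arguments—establishing the elementary parts from first principles and citing standard references for the deeper weighted theorems.

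For part (i), I would verify directly the two defining properties of a Calder\'on--Zygmund operator for the convolution $L*\,\cdot\,$. Boundedness of $\widehat L$ gives $L^2$-boundedness by Plancherel, and the pointwise size bound $|L(x)|\leqslant C|x|^{-1}$ is part of the hypothesis. The H\"ormander regularity condition follows from the derivative estimate $|L'(x)|\leqslant C|x|^{-2}$: the mean value theorem yields $|L(x-y)-L(x-y')|\leqslant C\,|y-y'|\,|x|^{-2}$ on $|x|\geqslant 2|y-y'|$, which integrates to a finite constant independent of $y,y'$. For part (iv), a direct computation of the $A_p$ characteristic on intervals $I$ centred at or straddling the origin shows that
\[
\sup_{I}\Big(\tfrac{1}{|I|}\int_I|x|^a\,\ud x\Big)\Big(\tfrac{1}{|I|}\int_I|x|^{-a/(p-1)}\,\ud x\Big)^{p-1}
\]
is finite precisely when both power weights are locally integrable, i.e.\ $a>-1$ and $-a/(p-1)>-1$; a scaling argument $x\mapsto \lambda x$ confirms that the bound is uniform across all scales and is therefore sufficient, not just necessary.

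For part (ii), which is the classical Hunt--Muckenhoupt--Wheeden theorem, and for the maximal-function statement in part (iii), which is the foundational Muckenhoupt theorem, I would simply invoke the standard proofs via the Calder\'on--Zygmund decomposition adapted to weighted cubes (or, alternatively, the modern sparse-domination framework), referring to \cite{Grafakos_ClassialFourier}. The remaining majorization statement in part (iii)—that a convolution operator whose kernel is dominated by an integrable, even, radially non-increasing function $A$ is pointwise controlled by $\mathcal{M}$—is a standard layer-cake argument: writing $A(x)=\int_0^\infty\mathbbm{1}_{\{A>t\}}(x)\,\ud t$ and noting that each super-level set is a symmetric interval $[-r(t),r(t)]$, one obtains $(A*|u|)(x)\leqslant \|A\|_1\,(\mathcal{M}u)(x)$, whence the $L^p(w\,\ud x)$-boundedness of the convolution operator follows from the maximal-function bound just quoted.

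The only genuine obstacle is part (ii): a truly self-contained proof of the weighted Calder\'on--Zygmund theorem requires substantial machinery (good-$\lambda$ inequalities \`a la Coifman--Fefferman, or sparse domination), none of which belongs to the present paper. Since this section is purely a collection of tools for later sections, I would present (ii) and the first half of (iii) as quotations of textbook results, giving self-contained verifications only for (i), the majorant reduction in (iii), and (iv).
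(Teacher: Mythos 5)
Your proposal is correct and matches the paper's treatment: the paper offers no proof of this theorem at all, merely citing \cite{Grafakos_ModernFourier} for (i), \cite{Grafakos_ClassialFourier} for (ii) and (iv), and \cite{Stein_HA_1993} for both halves of (iii). Your additional self-contained verifications of (i), (iv), and the layer-cake majorization in (iii) are the standard arguments and are sound, so if anything you supply more detail than the paper does.
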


Concerning part (i) we refer to 
\cite[Remark 4.1.1]{Grafakos_ModernFourier}. 
Part (ii) is a corollary of 
\cite[Theorem 7.4.6]{Grafakos_ClassialFourier}.
The first and second statement of part (iii) 
are respectively \cite[Theorem 1, Section V.3]{Stein_HA_1993} and the Proposition in page 57 
of \cite[Section II.2.1]{Stein_HA_1993}.
For part (iv) we refer to 
\cite[Example 7.1.7]{Grafakos_ClassialFourier}.

\section{Stationary representation of wave operators}
\label{sec:Stationary_Repr_W_ops}

Following a standard procedure \cite{Kuroda1978_intro_scatt_theory}, 
in order to prove the $L^p$-boundedness of $W^{+}_{\a, Y}$ 
we want to represent $W^{+}_{\a, Y}$ by means of the boundary 
values attained by the resolvents of $H_{\alpha,Y}$ and $H_0$ 
on the reals.

To this aim, we introduce the operators $\Omega_{jk}$, $j,k\in\{1,\dots,N\}$, acting on $L^2(\mathbb{R}^3))$, defined by
\begin{equation}\label{eq:def_of_Omega_jk}
\begin{split}
&(\Omega_{jk}f)(x)\;:=\; 
\lim_{\delta\downarrow 0}\frac{1}{\pi\ii}
\int_0^{+\infty}\!\!\ud\lambda\,\lambda\,e^{-\delta\lambda}\, \\
& \qquad \times \left(
\int_{\mathbb{R}^3} 
(\Gamma_{\alpha,Y}(-\lambda)^{-1})_{jk}\,\mathcal{G}_{-\lambda}(x)
\big(\mathcal{G}_{\lambda}(y)- \mathcal{G}_{-\lambda}(y)\big)\,
f(y)\ud y\,\right)\,,
\end{split}
\end{equation}
and we also introduce the translation operators $T_{x_0}:L^2(\mathbb{R}^3)\to L^2(\mathbb{R}^3)$, $x_0\in\mathbb{R}^3$, defined by
\begin{equation}
(T_{x_0}f)(x)\;:=\;f(x-x_0)\,.
\end{equation}

First of all, we show that the $\Omega_{ij}$'s are well-defined. It is convenient to re-write $\Omega_{jk}$ by using the spherical mean $M_u$ of a given function $u$, namely
\begin{equation}\label{eq:spherical_mean}
M_u(r)\;:=\; \frac1{4\pi}
\int_{{\mathbb S}^2}u(r\w)\,\ud\omega \,, \qquad r\in\mathbb{R}\,.
\end{equation} 
Observe that $\mathbb{R}\ni r\mapsto M_u(r)$ is even. 
It is also convenient to define the matrix-valued function $\lambda\mapsto F(\lambda):=(F_{jk}(\lambda))_{jk}$ by 
\begin{equation}\label{eq:def_Fjk}
F_{jk}(\lambda)\;:=\;\mathbf{1}_{(0,+\infty)}(\lambda)\,
\lambda\,(\Gamma_{\alpha,Y}(-\lambda)^{-1})_{jk}\,, \quad 
j,k\in\{1,\dots,N\}\,, 
\end{equation}
where $\mathbf{1}_\Lambda$ denotes the characteristic function of the 
set $\Lambda$.  

\begin{lem}\label{lm:3-2}~
\begin{itemize}
\item[(i)] 
The function $\lambda\mapsto F(\lambda)$ 
of \eqref{eq:def_Fjk} is \emph{smooth} and \emph{uniformly bounded} 
on $\mathbb{R}$, and
\begin{equation}\label{eq:conv_Fjk_F}
\lim_{\lambda\to+\infty}F(\lambda)\;=\;-4\pi\ii\mathbbm{1}\,.
\end{equation}
\item[(ii)] The limit \eqref{eq:def_of_Omega_jk} exists in 
$L^2({\mathbb R}^3)$ and $\Omega_{jk}$ may be written in the form 
\begin{equation}\label{Omega_intermediate_components}
 (\Omega_{jk}{u})(x)\;=\;\frac{1}{\,\ii(2\pi)^{\frac{3}{2}}|x|\,}
\int_{\mathbb R} \!\! e^{-\ii\lambda|x|}
\,F_{jk}(\lambda) \widehat{(rM_u)}(-\lambda)\ \ud\lambda\,.
\end{equation}
\end{itemize}
\end{lem}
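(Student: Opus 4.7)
The two assertions are essentially independent: part~(i) is an analytic study of the matrix-valued function $\lambda\mapsto\lambda\Gamma_{\alpha,Y}(-\lambda)^{-1}$ at $\lambda=0^{+}$ and $\lambda=+\infty$, while part~(ii) rewrites the three-dimensional integral in \eqref{eq:def_of_Omega_jk} through the spherical mean and then takes the $\delta\downarrow 0$ limit in $L^2(\mathbb{R}^3)$.

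For part~(i), I would start from the explicit expression \eqref{ga-def}. Its diagonal $(\alpha_j+\ii\lambda/(4\pi))\delta_{j\ell}$ grows linearly in $\lambda$, while the off-diagonal entries $-\mathcal{G}_{-\lambda}^{y_jy_\ell}$ are uniformly bounded in $\lambda\in\mathbb{R}$ by $(4\pi|y_j-y_\ell|)^{-1}$. Hence for $|\lambda|$ large
\begin{equation*}
\Gamma_{\alpha,Y}(-\lambda)\;=\;\tfrac{\ii\lambda}{4\pi}\bigl(\mathbbm{1}+R(\lambda)\bigr),\qquad \|R(\lambda)\|=O(\lambda^{-1}),
\end{equation*}
and inversion by Neumann series yields $\lambda\Gamma_{\alpha,Y}(-\lambda)^{-1}=-4\pi\ii\,\mathbbm{1}+O(\lambda^{-1})$, which is precisely \eqref{eq:conv_Fjk_F}. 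Because $\Gamma_{\alpha,Y}(\cdot)$ is entire and $\Gamma_{\alpha,Y}(\cdot)^{-1}$ has its poles only on the positive imaginary semi-axis of $\mathbb{C}^{+}\cup\mathbb{R}$ (Theorem~\ref{thm:general_properties}), $\lambda\mapsto\Gamma_{\alpha,Y}(-\lambda)^{-1}$ is real-analytic on $\mathbb{R}\setminus\{0\}$; and by Lemma~\ref{lem:pole_of_order_1} the possible singularity at $\lambda=0$ is at most a simple pole, so $\lambda\,\Gamma_{\alpha,Y}(-\lambda)^{-1}$ extends smoothly across $\lambda=0^{+}$ with finite limit. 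Together with the limit at $+\infty$, this gives smoothness of $F$ on $\mathbb{R}\setminus\{0\}$ and uniform boundedness on $\mathbb{R}$.

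For part~(ii), I would first compute the inner spatial integral by separating angular and radial variables. Using spherical coordinates and the spherical mean \eqref{eq:spherical_mean},
\begin{equation*}
\int_{\mathbb{R}^3}\bigl(\mathcal{G}_\lambda(y)-\mathcal{G}_{-\lambda}(y)\bigr)u(y)\,\ud y
\;=\;\int_0^{+\infty}r\bigl(e^{\ii\lambda r}-e^{-\ii\lambda r}\bigr)M_u(r)\,\ud r.
\end{equation*}
Since $M_u$ extends naturally to an even function on $\mathbb{R}$, the function $r\mapsto rM_u(r)$ is odd, and with the convention $\widehat{h}(\xi)=(2\pi)^{-1/2}\!\int h(r)e^{-\ii\xi r}\,\ud r$ the right-hand side equals $\sqrt{2\pi}\,\widehat{rM_u}(-\lambda)$. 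Substituting this together with $\mathcal{G}_{-\lambda}(x)=e^{-\ii\lambda|x|}/(4\pi|x|)$ into \eqref{eq:def_of_Omega_jk}, using the arithmetic identity $\sqrt{2\pi}/(4\pi^2)=(2\pi)^{-3/2}$, and extending the integration to all of $\mathbb{R}$ via $F_{jk}\equiv 0$ on $(-\infty,0]$, I obtain
\begin{equation*}
(\Omega_{jk}u)(x)\;=\;\frac{1}{\ii(2\pi)^{3/2}|x|}\lim_{\delta\downarrow 0}\int_{\mathbb{R}}e^{-\delta\lambda_{+}}F_{jk}(\lambda)\,e^{-\ii\lambda|x|}\,\widehat{rM_u}(-\lambda)\,\ud\lambda.
\end{equation*}

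The principal obstacle is justifying the $\delta\downarrow 0$ limit in $L^2(\mathbb{R}^3)$. The key inputs are: $u\in L^2(\mathbb{R}^3)$ implies $rM_u\in L^2(\mathbb{R})$ by Cauchy--Schwarz on $\mathbb{S}^2$, and hence $\widehat{rM_u}\in L^2(\mathbb{R})$ by Plancherel; combined with the uniform bound on $F_{jk}$ from part~(i), this gives $F_{jk}(\cdot)\widehat{rM_u}(-\cdot)\in L^2(\mathbb{R})$. Dominated convergence in $L^2(\mathbb{R})$ then produces $e^{-\delta\lambda_{+}}F_{jk}\widehat{rM_u}(-\cdot)\to F_{jk}\widehat{rM_u}(-\cdot)$ in $L^2(\mathbb{R})$, and Plancherel in the radial variable $r=|x|$ transfers this to $L^2$-convergence of $r\mapsto r\,(\Omega_{jk}u)(r\omega)$. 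Since $(\Omega_{jk}u)(x)$ is manifestly radial in $x$, the identity $\|f\|_{L^2(\mathbb{R}^3)}^2=4\pi\int_0^\infty|rf(r)|^2\,\ud r$ for radial $f$ lifts the convergence to $L^2(\mathbb{R}^3)$, simultaneously yielding the existence of the limit and the representation \eqref{Omega_intermediate_components}.
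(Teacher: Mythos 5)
Your proposal is correct and follows essentially the same route as the paper: part (i) via the meromorphy of $\Gamma_{\alpha,Y}(\cdot)^{-1}$, the first-order pole at $z=0$ from Lemma \ref{lem:pole_of_order_1}, and the large-$\lambda$ expansion of \eqref{ga-def}; part (ii) via the spherical-mean identity \eqref{eq:rMu}, the $L^2$ bound on $rM_u$, and the boundedness of $F_{jk}$ to pass to the limit $\delta\downarrow 0$ in $L^2$. Your write-up is, if anything, slightly more explicit than the paper's about the Plancherel/radial-norm step and about smoothness only away from $\lambda=0$, but the substance is the same.
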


If we introduce the distributional Fourier transform of 
$F_{jk}(\lambda)$ as  
\begin{equation}\label{ljk}
L_{jk}(\r)\;:=\;\frac{1}{\sqrt{2\pi}}\lim_{\delta \downarrow 0} 
\int_{0}^{+\infty}\!\!\ud\lambda\,  
e^{-\delta\lambda} e^{-\ii\lambda \rho} F_{jk}(\lambda)\, ,  
\end{equation}
it follows from \eqref{Omega_intermediate_components} that 
\begin{equation}\label{Omega_intermediate-a_components}
 (\Omega_{jk}{u})(x)\;=\;\frac{1}{\,\ii(2\pi)^{\frac{3}{2}}|x|\,}\,
(L_{jk} \ast rM_u)(|x|)\,.
\end{equation}

\begin{proof}[Proof of Lemma \ref{lm:3-2}] (i) Recall from Theorem \ref{thm:general_properties}(iv) 
that $\mathbb{C}\ni z\mapsto\Gamma_{\alpha,Y}^{-1}(z)$ is a 
meromorphic function, whose poles in the complex upper half-plane are 
all located on the positive imaginary axis. In particular, the only 
pole on the real line can be $z=0$, in which case it is a pole of 
order one, owing to Lemma \ref{lem:pole_of_order_1}. This implies 
that $\lambda\mapsto \lambda \,\Gamma_{\alpha,Y}^{-1}(-\lambda)$ is 
smooth and bounded on compact sets of $(0,+\infty)$, and so is 
$\lambda\mapsto F(\lambda)$ on compact sets of $\mathbb{R}$. 
Concerning the behaviour as $\lambda\to +\infty$, we see from 
\eqref{ga-def} that
\[
\Gamma_{\alpha,Y}(-\lambda)\;=\;-(4\pi\ii)^{-1}\lambda\mathbbm{1}
+R(\lambda)
\]
for some symmetric matrix $R(\lambda)$ that is uniformly bounded 
for $\lambda\in (0,\infty)$. Thus, as $\lambda\to +\infty$, 
\[
\frac{\Gamma_{\alpha,Y}(-\lambda)}{\lambda}\;=\;-(4\pi\ii)^{-1}\mathbbm{1}+\frac{R(\lambda)}{\lambda}\; \to \;-(4\pi\ii)^{-1}\mathbbm{1}\,,
\]
which proves \eqref{eq:conv_Fjk_F}.

(ii) Let $u\in C^\infty_0(\mathbb{R}^3)$. Then, for 
$\lambda\in\mathbb{R}$,
\begin{equation*}
\int_{\mathbb{R}^3} \mathcal{G}_{\lambda}(y) u(y)\,\ud y \;=\;\int_{\mathbb{R}^3}\frac{e^{\ii\lambda|y|}}{\,4\pi |y|\,}u(y) \,\ud y 
\;=\;\int_0^{+\infty} \!\!e^{\ii\lambda r} rM_u(r) \,\ud r\,. 
\end{equation*}
Since $\mathbb{R}\ni r\mapsto M_u(r)$ is even, the identity 
above yields
\begin{equation}\label{eq:rMu}
\begin{split}
\int_{\mathbb{R}^3}\big(\mathcal{G}_{\lambda}(y)-\mathcal{G}_{-\lambda}(y)\big)u(y)\,
\ud y
\;&=\; \int_{\mathbb{R}} e^{i\lambda r} \,rM_u (r)\,\ud r \\
&=\;\sqrt{2\pi}\,\widehat{(rM_u)}(-\lambda)\, 
\end{split}
\end{equation}
and \eqref{eq:def_of_Omega_jk} may be rewritten as 
\begin{equation}\label{eq:def_of_Omega_jk-a}
(\Omega_{jk}u)(x)\;=\; 
\lim_{\delta\downarrow 0}\frac{1}{(2\pi)^{\frac32}\ii |x|}
\int_0^{+\infty}\!\! e^{-\delta\lambda}\, 
F_{jk}(\lambda) e^{-i\lambda|x|}\,\widehat{(rM_u)}(-\lambda) 
\ud\lambda \, . 
\end{equation}
Here $\widehat{(rM_u)}(-\lambda)$ is a square integrable function of 
$\lambda \in {\mathbb R}$ because Parseval's identity and H\"older's 
inequality yield  
\[
\|(\widehat{r M_u})(-\lambda)\|_{L^2(\mathbb R)}
= \|r M_u\|_{L^2(\mathbb R)} 
\leqslant (\sqrt{\pi})^{-1}\|u\|_{L^2({\mathbb R}^3)}.
\]
Since $F_{jk}(\lambda)$ is bounded, the Fourier inversion formula 
implies that the limit $\delta\downarrow 0$ in 
\eqref{eq:def_of_Omega_jk-a} exists in $L^2(\mathbb{R}^3_x)$ 
and \eqref{Omega_intermediate_components} follows. 
\end{proof} 

The main result of this Section is the following representation formula for the wave operator.

\begin{prop}~
Let $u,v \in L^2(\mathbb{R}^3)$. Then, 
\begin{equation} \label{eq:stationary_rep_w}
\langle W^{+}_{\a, Y} u,v\rangle\;=\;\langle u,v\rangle 
+\sum_{j,k=1}^N \langle T_{y_j}
\Omega_{jk}T_{y_k}^{\,*}u,v\rangle\,.
\end{equation}
\end{prop}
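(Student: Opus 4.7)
The plan is to derive \eqref{eq:stationary_rep_w} via an Abel-summed stationary representation of $W^+_{\alpha,Y}$ combined with the explicit finite-rank resolvent identity \eqref{eq:resolvent_identity}. The key computational tool is the elementary identity
\[
\int_{\mathbb{R}^3}\mathcal{G}_{z}^{y_k}(x)\,e^{\ii x\xi}\,\ud x\;=\;\frac{e^{\ii y_k\xi}}{|\xi|^2-z^2}\,,\qquad\mathfrak{Im}\,z>0\,,
\]
which supplies the precise algebraic cancellation needed to match the two sides.

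For $u\in\mathcal{S}(\mathbb{R}^3)$ and $v\in L^2(\mathbb{R}^3)$, Abel summability of the defining limit of $W^+_{\alpha,Y}$ and the functional-calculus identity $\int_0^{+\infty}\!e^{-\delta t+\ii t(H_{\alpha,Y}-\mu)}\,\ud t=\ii R_{\alpha,Y}(\mu-\ii\delta)$ yield
\[
\langle W^+_{\alpha,Y}u,v\rangle\;=\;\lim_{\delta\downarrow 0}\ii\delta\int_0^{+\infty}\langle R_{\alpha,Y}(\mu-\ii\delta)\,\ud E_0(\mu)u,v\rangle\,.
\]
Splitting $R_{\alpha,Y}(\mu-\ii\delta)=R_0(\mu-\ii\delta)+K(\mu-\ii\delta)$ through \eqref{eq:resolvent_identity} with $z_\delta(\mu):=-\sqrt{\mu-\ii\delta}\in\mathbb{C}^+$, the $R_0$-contribution is jointly diagonalised with $\ud E_0$ and yields $\ii\delta\,(\ii\delta)^{-1}u=u$, reproducing the $\langle u,v\rangle$ summand in \eqref{eq:stationary_rep_w}. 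For the rank-$N$ correction, expressing $\ud E_0(\mu)u$ in polar Fourier variables $\xi=\lambda\omega$ and invoking the displayed identity on the energy shell $|\xi|^2=\mu$ produces the \emph{exact} cancellation
\[
\ii\delta\,\langle\overline{\mathcal{G}^{y_k}_{z_\delta(\mu)}},\ud E_0(\mu)u\rangle\;=\;(\ud E_0(\mu)u)(y_k)\,,
\]
the density-form evaluation of the spectral measure at the centre $y_k$. Passage to $\delta\downarrow 0$, justified by dominated convergence using the uniform bound on $\lambda\,\Gamma_{\alpha,Y}(-\lambda)^{-1}$ from Lemma \ref{lm:3-2}(i) together with the Schwartz decay of $\widehat u$, then gives
\[
(W^+_{\alpha,Y}u-u)(x)\;=\;\sum_{j,k=1}^N\int_0^{+\infty}(\Gamma_{\alpha,Y}(-\lambda)^{-1})_{jk}\,\mathcal{G}^{y_j}_{-\lambda}(x)\,(\ud E_0(\lambda^2)u)(y_k)\,.
\]

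It remains to recognise the right-hand side as $\sum_{j,k}T_{y_j}\Omega_{jk}T_{y_k}^*u$. From \eqref{eq:def_of_Omega_jk} the shift $T_{y_k}^*u(y)=u(y+y_k)$ produces the factor $\mathcal{G}^{y_k}_{\lambda}(y')-\mathcal{G}^{y_k}_{-\lambda}(y')$, which by the plane-wave expansion $\tfrac{\sin(\lambda r)}{\lambda r}=\tfrac{1}{4\pi}\int_{\mathbb{S}^2}e^{\ii\lambda\omega z}\ud\omega$ ($r=|z|$) equals $\tfrac{\ii\lambda}{8\pi^2}\int_{\mathbb{S}^2}e^{\ii\lambda\omega(y'-y_k)}\ud\omega$. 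This rewrites the $y'$-integral as the polar Fourier integral matching $(\ud E_0(\lambda^2)u)(y_k)$, and a subsequent translation by $T_{y_j}$ in $x$ completes the identification. A standard density argument from $\mathcal{S}(\mathbb{R}^3)$ to $L^2(\mathbb{R}^3)$, using the $L^2$-boundedness of $W^+_{\alpha,Y}$, then yields \eqref{eq:stationary_rep_w}.

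The principal technical obstacle is the interchange of the $\delta\downarrow 0$ limit with the spectral integral: the functions $\mathcal{G}^{y_j}_{z_\delta(\mu)}$ lie in $L^2$ only while $\mathfrak{Im}\,z_\delta(\mu)>0$ and acquire non-$L^2$ oscillatory boundary values as $\delta\downarrow 0$. The decisive uniform bound $\lambda\,\Gamma_{\alpha,Y}(-\lambda)^{-1}=O(1)$ from Lemma \ref{lm:3-2}(i), combined with the rapid decay of $\widehat u$ for Schwartz $u$, furnishes the $L^1(\lambda)$-integrable majorant that enables dominated convergence.
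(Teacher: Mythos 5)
Your argument is correct and is, at bottom, the same Abel-limit stationary computation as the paper's, just organized through the one-resolvent formula $W^+_{\alpha,Y}u=\lim_{\delta\downarrow0}\ii\delta\int_0^{+\infty}R_{\alpha,Y}(\mu-\ii\delta)\,\ud E_0(\mu)u$ instead of the symmetric two-resolvent form \eqref{abel} obtained via Parseval. Your two exact on-shell cancellations (the free part giving $\ii\delta\cdot(\ii\delta)^{-1}u=u$, and $\ii\delta\,\langle\overline{\mathcal{G}^{y_k}_{z_\delta}},\ud E_0(\mu)u\rangle=(\ud E_0(\mu)u)(y_k)$ on the shell $|\xi|^2=\mu$) play precisely the role of the paper's Poisson-kernel computation and of the identity \eqref{real}, and your plane-wave expansion of $\mathcal{G}_\lambda-\mathcal{G}_{-\lambda}$ correctly reconciles the limiting expression with \eqref{eq:def_of_Omega_jk}: the constants do match, since the Jacobian $2\lambda$ of $\mu=\lambda^2$ reproduces the paper's factor $\frac{1}{\pi\ii}\cdot\lambda\cdot 2\pi\ii$. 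What your version buys is a cleaner algebraic skeleton (the $\ii\delta$ cancels identically, so no approximate-identity limit is needed in the rank-$N$ term); what the paper's version buys is that every quantity in the domination step has already been estimated in the form needed.

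One step does need tightening. You justify the interchange of $\delta\downarrow0$ with the $\mu$-integral by the uniform bound on $\lambda\,\Gamma_{\alpha,Y}(-\lambda)^{-1}$ from Lemma \ref{lm:3-2}(i), but that lemma concerns \emph{real} arguments only, whereas your pre-limit integrand contains $\Gamma_{\alpha,Y}(-\sqrt{\mu-\ii\delta}\,)^{-1}$ at complex arguments approaching the negative real axis. In the resonant case a $\delta$-uniform majorant near $\mu=0$ requires the simple-pole structure of $\Gamma_{\alpha,Y}(z)^{-1}$ at $z=0$ (Lemma \ref{lem:pole_of_order_1}), which gives $\|\Gamma_{\alpha,Y}(z)^{-1}\|\leqslant C|z|^{-1}$ in a punctured complex neighbourhood of the origin, hence a bound $C\mu^{-1/2}$ uniform in $\delta$; this is exactly how the paper argues after \eqref{eq:compu-2}, and with this replacement (plus the absence of poles of $\Gamma_{\alpha,Y}(z)^{-1}$ near the negative real axis away from $0$ and the $O(|z|^{-1})$ decay at infinity) your dominated-convergence step is sound. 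Finally, for the closing density argument you need $L^2$-boundedness of the right-hand side as well as of $W^+_{\alpha,Y}$; this is immediate from \eqref{Omega_intermediate_components} and Plancherel with $F_{jk}$ bounded, i.e.\ from the ingredients of Lemma \ref{lm:3-2}(ii), but it should be said.
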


\begin{proof} 
It suffices to prove \eqref{eq:stationary_rep_w} for $u, v \in C_0^\infty(\mathbb{R}^3)$. 

The limit \eqref{eq:main_thm_W_op} when $t\to +\infty$ 
equals its Abel limit, thus we re-write
\begin{equation}\label{eq:after_Abel_limit}
\langle W^{+}_{\a, Y} u,v\rangle\;=\; 
\lim_{\ep\downarrow 0}\,2\ep \int_0^{+\infty} 
\langle e^{-\ii t (H_0-\ii\ep{\mathbbm{1}})}u, 
e^{-\ii t(H_{\alpha,Y}-\ii\ep{\mathbbm{1}})}v\rangle \,\ud t\,. 
\end{equation}
Let now $\mu\in\mathbb{R}$. Exploiting the Fourier transform
\[
(H_0-(\mu+\ii\varepsilon){\mathbbm{1}})^{-1}\;
=\;\ii\int_0^{+\infty}e^{\ii\mu t}\,
e^{-\ii t (H_0-\ii\varepsilon{\mathbbm{1}})}\,\ud t\qquad (\varepsilon>0)
\]
(and the analogue for $H_{\alpha,Y}$), Parseval's formula in 
the r.h.s.~of \eqref{eq:after_Abel_limit} yields  
\begin{equation} \label{abel}
\langle W^{+}_{\a, Y} u,v\rangle\;=\; 
\lim_{\ep\downarrow 0}\frac{\ep}{\pi} \int_{\mathbb{R}}
\langle R_0(\lambda+\ii\ep)u, R_{\alpha,Y}(\lambda+\ii\ep)v\rangle\,\ud\lambda\,. 
\end{equation}
Here and henceforth
\begin{equation}
\begin{array}{rlll}
  R_0(\mu)\!\!&:=\;(H_0-\mu\mathbbm{1})^{-1} & & \mu\in\mathbb{C}\setminus\![0,+\infty)\,, \\
  R_{\alpha,Y}(\mu)\!\!&:=\;(H_{\alpha,Y}-\mu\mathbbm{1})^{-1} & & \mu\in\mathbb{C}\setminus\sigma(H_{\alpha,Y}) \,,
\end{array}
\end{equation}
that is, the resolvents of the operators $H_0$ and $H_{\alpha,Y}$.

Substituting $R_{\alpha,Y}(\lambda+\ii\ep)$ in the r.h.s.~of 
\eqref{abel}  with the resolvent identity 
\eqref{eq:resolvent_identity}, one obtains
\begin{equation}\label{eq:after_replacing_res_id}
\begin{split}
\langle W^{+}_{\a, Y} u,v\rangle\;&=\;  
\lim_{\ep\downarrow 0}\frac{\ep}{\pi} 
\int_{\mathbb{R}}\langle R_0(\lambda+\ii\ep)u, R_0(\lambda+\ii\ep)v\rangle\,\ud\lambda \\
& \qquad+ \;\lim_{\ep\downarrow 0}\frac{\ep}{\pi} 
\sum_{j,k=1}^N
\int_{\mathbb{R}} (\Gamma_{\alpha,Y}(\sqrt{\lambda+\ii\ep})^{-1})_{jk} \,\times \\
&\qquad\qquad\times\,\langle \,R_0(\lambda+\ii\ep)u\,,\,
\mathcal{G}_{\sqrt{\lambda+\ii\ep}}^{y_j}\otimes   
\overline{\mathcal{G}_{\sqrt{\lambda+\ii\ep}}^{y_k}}\;v\,\rangle\,\ud\lambda\,.
\end{split}
\end{equation}
The first summand in the r.h.s.~of 
\eqref{eq:after_replacing_res_id} gives
\[
\begin{split}
\frac{\ep}{\pi} 
\int_{\mathbb{R}}\langle R_0(\lambda+\ii\ep)&u, R_0(\lambda+\ii\ep)v\rangle\,\ud\lambda \;=\;\frac{\ep}{\pi} 
\int_{\mathbb{R}}\langle u, R_0(\overline{\lambda+\ii\ep})R_0(\lambda+\ii\ep)v\rangle\,\ud\lambda  \\
&=\;\frac{\ep}{\pi}\int_{\mathbb{R}}\ud\lambda\int_{\sigma(H_0)}\langle u,E^{(H_0)}(\ud h) v\rangle\,\frac{1}{\,(h-\lambda)^2+\varepsilon^2\,} \\
&=\;\int_{\sigma(H_0)}\langle u,E^{(H_0)}(\ud h)v\rangle\;\frac{1}{\pi}\int_{\mathbb{R}}\ud\lambda\,\frac{\varepsilon}{\,(h-\lambda)^2+\varepsilon^2\,}\;=\;\langle u,v\rangle\,,
\end{split}
\]
thus \eqref{eq:after_replacing_res_id} reads 
\begin{equation}\label{eq:compu-1}
\begin{split}
\langle W^{+}_{\a, Y} u,v\rangle\;=\;\langle u,v\rangle +\; &\lim_{\ep\downarrow 0}\frac{\ep}{\pi} \sum_{j,k=1}^N
\int_{\mathbb{R}} (\Gamma_{\alpha,Y}(\sqrt{\lambda+\ii\ep})^{-1})_{jk}\,\times \\
&\qquad\times\,\langle u,R_0(\overline{\lambda+\ii\ep})\,\mathcal{G}_{\sqrt{\lambda+\ii\ep}}^{y_j}\rangle\,\langle  
\overline{\mathcal{G}_{\sqrt{\lambda+\ii\ep}}^{y_k}},v\,\rangle\,\ud\lambda\,.
\end{split}
\end{equation}

We recall that 
$\sqrt{z}$ is chosen in the upper complex half plane 
and, for $z\in \mathbb{C} \setminus [0,\infty)$, 
\begin{equation} \label{intre}
\!\!\!\!\!\!\!\!\!\mathcal{G}_{\sqrt{z}}^{y_j}(x)\;=\;\frac1{(2\pi)^3}\int_{\mathbb{R}^3}\frac{\:e^{\ii p(x-y_j)}}{p^2-z}\, \ud p\; 
\left(\equiv \lim_{L\to \infty}\frac1{(2\pi)^3}
\int_{|p|<L}\!\!\frac{\:e^{\ii p(x-y_j)}}{p^2-z} \, \ud p \right). \!\!\!\!
\end{equation} 
Thus, for $z\equiv\lambda+\ii\varepsilon$, both $\sqrt{\lambda+\ii\ep}$ and $\sqrt{\lambda-\ii\ep}$ belong to $\mathbb{C}^{+}$, and we compute
\begin{equation}\label{real}
\begin{split}
\!\!\!\frac{\ep}{\pi}\, R_0&(\lambda-\ii\ep)\,\mathcal{G}_{\sqrt{\lambda+\ii\ep}}^{y_j}(x)\;=\;\frac1{(2\pi)^3}\,\frac{\ep}{\pi}\int_{\mathbb{R}^3}\frac{e^{\ii p(x-y_j)}}
{(p^2-\lambda+\ii\ep)(p^2-\lambda-\ii\ep)}\,\ud p \\
&=\;\frac1{(2\pi)^3}\,\frac1{2\pi \ii}\int_{\mathbb{R}^3}{e^{\ii p(x-y_j)}}
\Big(\frac1{(p^2-\lambda-\ii\ep)}-\frac1{(p^2-\lambda+\ii\ep)}\Big) \, \ud p \\
&=\;\frac1{2\pi \ii}
\Big(\mathcal{G}_{\sqrt{\lambda+\ii\ep}}^{y_j}(x)- 
\mathcal{G}_{\sqrt{\lambda-\ii\ep}}^{y_j}(x)\Big)\,.
\end{split}
\end{equation}
The second summand in the r.h.s.~of \eqref{eq:compu-1} can be then 
written as
\begin{equation}\label{eq:compu-2}
\begin{split}
\lim_{\ep\downarrow 0}\sum_{j,k=1}^N\frac{1}{2\pi\ii}
\int_{\mathbb{R}}\ud\lambda&\, \Big(\int_{\mathbb{R}^3}\ud y\,\overline{u(y)}\,\Big(\mathcal{G}_{\sqrt{\lambda+\ii\ep}}^{y_j}(y)- 
\mathcal{G}_{\sqrt{\lambda-\ii\ep}}^{y_j}(y)\Big)  \\
&\times\,(\Gamma_{\alpha,Y}(\sqrt{\lambda+\ii\ep})^{-1})_{jk}\,\Big(\int_{\mathbb{R}^3}\ud x \,\mathcal{G}_{\sqrt{\lambda+\ii\ep}}^{y_k}(x)\,v(x)\Big)\,.
\end{split}
\end{equation}

Because $u$ and $v$ are smooth and with compact support, 
an integration by parts shows that both the $\ud x$-integral 
and the $\ud y$-integral in \eqref{eq:compu-2} above are 
bounded by $C\langle\lambda\rangle^{-\frac{1}{2}}$ \emph{uniformly in} 
$\varepsilon$. Moreover, as established in 
Lemma \ref{lem:pole_of_order_1}, the matrix 
$\Gamma_{\alpha,Y}(\sqrt{\lambda+\ii\ep})^{-1}$ has the singularity 
$(\sqrt{\lambda+\ii\varepsilon})^{-1}$ near $\lambda=0$ 
(in the limit $\varepsilon\downarrow 0$) if $H_{\alpha,Y}$ has 
a zero-energy resonance, whereas it is bounded otherwise, 
with $\|\Gamma_{\alpha,Y}(\sqrt{\lambda+\ii\ep})^{-1}\|
\leqslant\;C\langle\lambda\rangle^{-\frac{1}{2}}$. 
Therefore the $\lambda$-integrand is uniformly bounded by 
$C\lambda^{-\frac{1}{2}}\langle\lambda\rangle^{-1}$, dominated convergence is 
applicable in \eqref{eq:compu-2} above, the 
$\ud\lambda$-integration and the 
$\varepsilon\downarrow 0$-limit can be exchanged, 
and \eqref{eq:compu-2} becomes  
\begin{equation}\label{eq:compu-3}
\begin{split}
\sum_{j,k=1}^N\frac{1}{2\pi\ii}
\int_{\mathbb{R}}\ud\lambda&\, \Big(\int_{\mathbb{R}^3}\ud y\,\overline{u(y)}\,\Big(\mathcal{G}_{\sqrt{\lambda+\ii0}}^{y_j}(y)- 
\mathcal{G}_{\sqrt{\lambda-\ii0}}^{y_j}(y)\Big) \\
&\times\,(\Gamma_{\alpha,Y}(\sqrt{\lambda+\ii0})^{-1})_{jk}\,\Big(\int_{\mathbb{R}^3}\ud x \,\mathcal{G}_{\sqrt{\lambda+\ii0}}^{y_k}(x)\,v(x)\Big)\,.
\end{split}
\end{equation}

Owing to the difference $\mathcal{G}_{\sqrt{\lambda+\ii0}}^{y_j}- 
\mathcal{G}_{\sqrt{\lambda-\ii0}}^{y_j}$, we see that the $\lambda$-integration in \eqref{eq:compu-3} is only effective when $\lambda\geqslant 0$. Indeed, if $\lambda<0$, then $\sqrt{\lambda\pm\ii 0}=\ii\sqrt{|\lambda|}$ and the integrand vanishes.
We then consider \eqref{eq:compu-3} only with $\lambda\in[0,+\infty)$ and with the change of variable $\lambda\mapsto\lambda^2$ we obtain
\begin{equation}\label{eq:compu-4}
\begin{split}
&\textrm{second summand in the r.h.s.~of \eqref{eq:compu-1}}\;= \\
&=\;\sum_{j,k=1}^N\frac{1}{\pi\ii}
\int_0^{+\infty}\!\!\ud\lambda\,\lambda\, 
\Big(\int_{\mathbb{R}^3}\ud y\,\overline{u(y)}\,
\Big(\mathcal{G}_{\lambda}^{y_j}(y)- 
\mathcal{G}_{-\lambda}^{y_j}(y)\Big) \\
&\qquad\qquad\qquad\times\,
(\Gamma_{\alpha,Y}(\lambda)^{-1})_{jk}\,
\Big(\int_{\mathbb{R}^3}\ud x \,
\mathcal{G}_{\lambda}^{y_k}(x)\,v(x)\Big) \\
&=\;\lim_{\delta\downarrow 0}\sum_{j,k=1}^N\frac{1}{\pi\ii}
\int_0^{+\infty}\!\!\ud\lambda\,\lambda\, e^{-\delta\lambda}\,
\Big(\int_{\mathbb{R}^3}\ud y\,\overline{u(y+y_k)}\,
\big(\mathcal{G}_{\lambda}(y)- 
\mathcal{G}_{-\lambda}(y)\big) \\
&\qquad\qquad\qquad\times\,(\Gamma_{\alpha,Y}(\lambda)^{-1})_{jk}\,
\Big(\int_{\mathbb{R}^3}\ud x \,
\mathcal{G}_{\lambda}^{y_j}(x)\,v(x)\Big) \\
&=\;\lim_{\delta\downarrow 0}\int_{\mathbb{R}^3}\ud x\,v(x)
\sum_{j,k=1}^N\,\int_0^{+\infty}\!\!\ud\lambda\,\lambda\, e^{-\delta\lambda}\!\int_{\mathbb{R}^3}\ud y \\
&\qquad\times\Big(\overline{
\frac{1}{\pi\ii}\,
(\Gamma_{\alpha,Y}(-\lambda)^{-1})_{jk}\,
\mathcal{G}_{-\lambda}^{y_j}(x)\,\big(\mathcal{G}_{\lambda}(y)- 
\mathcal{G}_{-\lambda}(y)\big)u(y+y_k)}\Big).
\end{split}
\end{equation}
In the first step of \eqref{eq:compu-4} above we used the fact that $\sqrt{\lambda^2\pm i0}= \pm \lambda $ for $\lambda>0$. In the second step, the insertion of the exponential cut-off $e^{-\delta\lambda}$ is justified by the fact that the $\lambda$-integrand is uniformly bounded by $C\langle\lambda\rangle^{-\frac{5}{2}}$, as discussed above; we also exchanged $j\leftrightarrow k$, using the fact that $\Gamma_{\alpha,Y}(\lambda)^{-1}$ is symmetric, and made the change of variable $y\mapsto y+y_k$, using \eqref{eq:def_of_the_Gs}. In the third step we used the properties $\overline{\mathcal{G}_\lambda(x)}=\mathcal{G}_{-\lambda}(x)$ and $\overline{\Gamma_{\alpha,Y}(\lambda)^{-1}}=\Gamma_{\alpha,Y}(-\lambda)^{-1}$ that follow, respectively, from \eqref{eq:def_of_the_Gs} and \eqref{ga-def}.
The identity \eqref{eq:stationary_rep_w} then follows immediately from 
\eqref{eq:compu-4}.  
\end{proof}

Summarising so far, we produced the representation 
\eqref{eq:def_of_Omega_jk}-\eqref{eq:stationary_rep_w} 
of the kernel of the wave operator $W^{+}_{\a, Y}$.
Because of the obvious $L^p$-boundedness of $T_{x_0}$, in order to prove Theorem \ref{ref:main_thm} 
it suffices to study the $L^p$-boundedness or 
unboundedness of each $\W_{jk}$, that is, to consider the quantities 
\begin{equation}\label{forthe}
\|\W_{jk}u\|_{L^p(\mathbb{R}^3)}^p \;=\; \frac{4\pi}{(2\pi)^{3p/2}}
\int_0^{+\infty} |(L_{jk}\ast \rho M_u)(\r)|^p \,\r^{2-p}\,\ud\r\,,
\end{equation}
whose expression follows from \eqref{Omega_intermediate-a_components}.

For a more compact notation, it is convenient to introduce the matrix functions 
\begin{equation}\label{eq:def_F_matrix}
L(\r)\;:=\; \left(L_{jk}(\r)\right)_{jk}\,, 
\quad \Omega(\r)\;:=\; \left(\Omega_{jk}(\r)\right)_{jk}\,,
\end{equation}
in terms of which
\begin{equation}\label{Omega_intermediate}
 (\Omega{u})(x)\;=\;\frac{1}{\,\ii(2\pi)^{\frac{3}{2}}|x|\,}
\int_0^{+\infty}\!\! e^{-\ii\lambda|x|}
\,F(\lambda) \widehat{(rM_u)}(-\lambda)\ \ud\lambda
\end{equation}
and
\begin{equation}\label{Omega_intermediate-a}
 (\Omega{u})(x)\;=\;\frac{1}{\,\ii(2\pi)^{\frac{3}{2}}|x|\,}\,
(L \ast rM_u)(|x|)\,.
\end{equation}
The additional formulas 
\eqref{Omega_intermediate}/\eqref{Omega_intermediate-a} have the 
virtue of reducing the problem to the estimate of singular integral 
operators in one dimensions and will play an important role in our 
next arguments -- although in certain steps we need to go back to 
the more complicated, but more flexible expression 
\eqref{eq:def_of_Omega_jk}.


\section{$L^p$-bounds for the single centre case.}\label{sec:Lp-bounds}

In this Section and in the two following ones we present the proof of Theorem \ref{ref:main_thm}. 
In fact, only the statements concerning the boundedness and the unboundedness  of $W^{\pm}_{\a, Y}$ need be proved, because the existence of $W^{\pm}_{\a, Y}$ in $L^2(\mathbb{R}^3)$ and their completeness follow at once from the Birman-Kato-Pearson Theorem \cite{rs3}, due to the fact (Theorem \ref{thm:general_properties}(i), identity \eqref{eq:resolvent_identity}) that the resolvent difference $R_{\alpha,Y}(z)-R_0(z)$ is a rank-$N$ operator. 

We also observe that, by virtue of Lemma \ref{reality}, the complex 
conjugation $u\mapsto \mathcal{C}u:=\overline{u}$
reverses the direction of time, i.e., 
\begin{equation}
\mathcal{C}^{-1}e^{-\ii t H_{\alpha,Y}}\mathcal{C} \;=\; e^{\ii t H_{\alpha,Y}}, \qquad 
\mathcal{C}^{-1}e^{-\ii t H_0}\mathcal{C} \;=\; e^{\ii t H_0}\,,
\end{equation}
whence
\begin{equation}\label{eq:W-_W+_charge}
W^{-}_{\a, Y} \;=\;\mathcal{C}^{-1}\, W^{+}_{\a, Y}\,\mathcal{C}\,.
\end{equation} 
Thus, once the $L^p$-boundedness is proved for $W^{+}_{\a, Y}$ and 
all $p\in(1,3)$, the same result follows for $W^{-}_{\a, Y}$ via 
\eqref{eq:W-_W+_charge}. Analogously, it suffices to prove the $L^p$-unboundedness of $W^{+}_{\a, Y}$, for $p=1$ and $p\in[3,\infty)$, in order to have same result for $W^{-}_{\a, Y}$.

We start with the proof of the boundedness part of Theorem \ref{ref:main_thm} in the special case of $N=1$ centre. This case is simpler, for the oscillating terms $\mathcal{G}_\lambda^{y_j,y_k}$ are now absent, nevertheless it 
retains most of the essential ideas needed in the proof of the 
general case, which is the object of the following Section \ref{sec:Lp-bounds_N}.

We shall control the two regimes $p\in(1,\frac{3}{2})$ and $p\in(\frac{3}{2},3)$ separately. Then the overall $L^p$-boundedness for $p\in(1,3)$ follows by interpolation.

\subsection{$L^p$-boundedness of $W^{+}_{\a, Y}$ for $N=1$ and $p\in(\frac{3}{2},3)$}

In this regime the proof is based on Theorem \ref{thm:harmonic_analysis_results} and on the following fact.

\begin{lem}\label{lem:kernel-0} Suppose that $[0,+\infty)\ni\lambda\mapsto W(\lambda)$ is a smooth and bounded function such that $\lambda\mapsto W'(\lambda)$ and $\lambda\mapsto\lambda W''(\lambda)$ are both integrable.  
Let $Z(\r)$, $\r\in \mathbb{R}$, be the Fourier transform of $W(\lambda)$, 
in the sense of distributions, defined by 
\[
Z(\r)\;=\; \frac1{\sqrt{2\pi}}
\int_0^{+\infty} \!\ud\lambda\,e^{-\ii\lambda\r} \,W(\lambda)\, .
\]
Then,
the convolution operator with $Z(\r)$ is a Calder\'on-Zygmund operator on $\mathbb{R}$.
In particular, the operator $u \mapsto L\ast u$, where $L$ is defined in \eqref{ljk} for the case $N=1$, 
is of Calder\'on-Zygmund type. 
\end{lem}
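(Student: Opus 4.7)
The plan is to verify that $Z$ meets the three sufficient conditions listed in Theorem~\ref{thm:harmonic_analysis_results}(i), namely that $\widehat{Z}$ is bounded and that $|Z(\rho)|\leqslant C|\rho|^{-1}$, $|Z'(\rho)|\leqslant C|\rho|^{-2}$ for $\rho\neq 0$. The boundedness of $\widehat{Z}$ is essentially tautological: by Fourier inversion (in the tempered-distribution sense), $\widehat{Z}$ equals, up to a constant and a reflection, the function $W(\lambda)\mathbf{1}_{(0,+\infty)}(\lambda)$, which is bounded by hypothesis. So the real work is in the two pointwise bounds, which will be obtained via Abel regularization and successive integration by parts.

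For the bound on $Z$, I would work with the regularized integral
\[
Z_{\delta}(\rho)\;=\;\frac{1}{\sqrt{2\pi}}\int_0^{+\infty}e^{-(\delta+\ii\rho)\lambda}W(\lambda)\,\ud\lambda\qquad(\delta>0),
\]
integrate by parts once (the boundary term at $+\infty$ vanishes thanks to $\delta>0$), and obtain
\[
Z_{\delta}(\rho)\;=\;\frac{1}{\sqrt{2\pi}(\delta+\ii\rho)}\Big(W(0)+\int_0^{+\infty}e^{-(\delta+\ii\rho)\lambda}W'(\lambda)\,\ud\lambda\Big),
\]
which yields $|Z_{\delta}(\rho)|\leqslant C(|W(0)|+\|W'\|_{L^1})|\rho|^{-1}$ uniformly in $\delta$. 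Passing to the limit $\delta\downarrow 0$ transfers the bound to $Z$. For $Z'$, differentiating $Z_{\delta}$ brings down a factor of $-\ii\lambda$, and then two consecutive integrations by parts (the first on $\lambda W(\lambda)$, the second on each of the pieces $W(\lambda)$ and $\lambda W'(\lambda)$ that this first step produces) give
\[
Z'_{\delta}(\rho)\;=\;\frac{-\ii}{\sqrt{2\pi}(\delta+\ii\rho)^2}\Big(W(0)+\int_0^{+\infty}e^{-(\delta+\ii\rho)\lambda}\bigl(2W'(\lambda)+\lambda W''(\lambda)\bigr)\,\ud\lambda\Big),
\]
whence $|Z'_{\delta}(\rho)|\leqslant C(|W(0)|+\|W'\|_{L^1}+\|\lambda W''\|_{L^1})|\rho|^{-2}$.

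The delicate point, and the one I expect to be the main obstacle, is the vanishing of the boundary term $\lambda W'(\lambda)e^{-(\delta+\ii\rho)\lambda}\big|_0^{\infty}$ that appears when integrating $\lambda W'(\lambda)$ by parts. The bound $\|\lambda W''\|_{L^1}<\infty$ together with $\|W'\|_{L^1}<\infty$ implies that the distributional derivative of $\lambda W'(\lambda)$, namely $W'(\lambda)+\lambda W''(\lambda)$, is in $L^1([0,+\infty))$, so $\lambda W'(\lambda)$ is of bounded variation and hence bounded on $[0,+\infty)$; the regularizing factor $e^{-\delta\lambda}$ then kills the term at $+\infty$, and the evaluation at $\lambda=0$ gives zero. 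This is what makes the uniform-in-$\delta$ estimate for $Z'_{\delta}$ legitimate.

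For the second statement, specializing to $N=1$, I would compute explicitly. Since $\mathcal{G}_z^{y_1y_1}=0$ by convention, \eqref{ga-def} reduces to the scalar $\Gamma_{\alpha,Y}(z)=\alpha-\ii z/(4\pi)$, hence
\[
W(\lambda)\;:=\;F_{11}(\lambda)\;=\;\frac{4\pi\lambda}{\,4\pi\alpha+\ii\lambda\,}\qquad(\lambda\geqslant 0).
\]
This function is smooth on $[0,+\infty)$ (its denominator never vanishes for $\lambda\geqslant 0$, for any $\alpha\in\mathbb{R}$), bounded by $4\pi$, with $W(0)=0$, $W(+\infty)=-4\pi\ii$. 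Direct differentiation yields $W'(\lambda)=O(\langle\lambda\rangle^{-2})$ and $W''(\lambda)=O(\langle\lambda\rangle^{-3})$, so both $W'\in L^1$ and $\lambda W''\in L^1$. All hypotheses of the first part of the lemma hold, and the Calder\'on-Zygmund property of $L=L_{11}$ follows.
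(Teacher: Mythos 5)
Your proof is correct and follows essentially the same route as the paper's: both verify the kernel bounds of Theorem~\ref{thm:harmonic_analysis_results}(i) by integration by parts in $\lambda$ (producing exactly the terms involving $W(0)$, $W'$, and $W'+\lambda W''$, each giving the $|\rho|^{-1}$ and $|\rho|^{-2}$ decay), and then check the hypotheses for the explicit $N=1$ function \eqref{eq:FN1}. Your explicit Abel regularization $e^{-\delta\lambda}$ and the bounded-variation argument for the boundary term at $+\infty$ are simply a more careful rendering of the paper's formal computation (which implicitly rests on the distributional definition \eqref{ljk}); the only slip is the parenthetical claim that the denominator $4\pi\alpha+\ii\lambda$ never vanishes on $[0,+\infty)$, which fails for $\alpha=0$ at $\lambda=0$, though in that degenerate case $W$ is the constant $-4\pi\ii$ and the conclusion is trivially unaffected.
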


\begin{proof}
 The operator of convolution with $Z$ is bounded in $L^2(\mathbb{R})$ because $Z$ is the Fourier 
transform of a bounded function $W$. Integration by parts, using 
$e^{-\ii\lambda\r}=\ii\r^{-1}\partial_\lambda e^{-\ii\lambda\r}$, yields    
\begin{equation*}  
Z(\r)\;=\; \frac{\ii}{\rho\sqrt{2\pi}\,}W(0)- 
\frac{\ii}{\rho\sqrt{2\pi}\,}
\int_{0}^{+\infty}\!\!\ud\lambda\, e^{-\ii\lambda\r}\,W'(\lambda) 
\;\leqslant_{|\,\cdot\,|}\; \frac{C}{|\r|}\, , \quad \r\not=0\,,
\end{equation*}
and differentiating further in $\r$ yields 
\begin{equation*}
\begin{split}
 Z'(\r)\;&=\; -\frac{\,\ii\,W(0)}{\rho^2\sqrt{2\pi}\,}+\frac{\ii}{\rho^2\sqrt{2\pi}}\int_{0}^{+\infty}\!\!\ud\lambda\, e^{-\ii\lambda\r}\,W'(\lambda)\\
 &\qquad -\frac{1}{\rho\sqrt{2\pi}}\int_{0}^{+\infty}\!\!\ud\lambda\, e^{-\ii\lambda\r}\,\lambda\,W'(\lambda)\,.
\end{split}
\end{equation*}
The first two summands in the r.h.s.~above are obviously bounded in absolute value by $C|\r|^{-2}$ 
for $\r\not=0$; so too is the third summand, as follows from integration by parts:
\[
\begin{split}
 \frac{-1}{\rho\sqrt{2\pi}}\int_{0}^{+\infty}\!\!\ud\lambda\, e^{-\ii\lambda\r}\,\lambda\,W'(\lambda)\;&=\;\frac{\ii}{\rho^2\sqrt{2\pi}\,}\int_{0}^{+\infty}\!\!\ud\lambda\, e^{-\ii\lambda\r} 
(W'(\lambda)+ \lambda W''(\lambda)) \\
&\leqslant_{|\,\cdot\,|}\; \frac{C}{\r^2}\, , \quad \r\not=0\,.
\end{split}
\]
Thus, we conclude from Theorem \ref{thm:harmonic_analysis_results}(i) that $u\mapsto Z*u$ is a Calder\'on-Zygmund operator on $\mathbb{R}$. Concerning the second statement of the thesis, we see that in the case $N=1$ 
\eqref{eq:def_Fjk} reads
\begin{equation}\label{eq:FN1}
F(\lambda)\;=\;\lambda\Big(\alpha + \frac{\ii\lambda}{4\pi}\Big)^{-1}\,.
\end{equation}
$F$ is therefore bounded and smooth on $[0,+\infty)$  and  both $F'(\lambda)$ and $\lambda\, F''(\lambda)$ are integrable, whence the conclusion for the operator of convolution by $L$ defined in \eqref{ljk}. 
\end{proof}

The proof of the $L^p$-boundedness of $W^{+}_{\a, Y}$ for $N=1$ and $p\in(\frac{3}{2},3)$ then becomes particularly simple. First, we recall from 
\eqref{forthe} that
\begin{equation*}
\|\W u\|_{L^p(\mathbb{R}^3)}^p \;=\; \frac{4\pi}{(2\pi)^{3p/2}}
\int_0^{+\infty} |(L\ast \rho M_u)(\r)|^p \,\r^{2-p}\,\ud\r\,,
\end{equation*}
where $\rho^{2-p}$ is an $A_p$-weight for $p\in(\frac{3}{2},3)$ (Theorem \ref{thm:harmonic_analysis_results}(iv)) and the convolution with $L$ is a Calder\'on-Zygmund operator on $\mathbb{R}$ (Lemma \ref{lem:kernel-0}). Then it follows from Theorem \ref{thm:harmonic_analysis_results}(ii) that
\begin{equation}\label{eq:Opb_final}
\begin{split}
\|\W u\|_{L^p(\mathbb{R}^3)}^p \;&\leqslant\;
\int_0^{+\infty} |(\rho M_u)(\r)|^p \,\r^{2-p}\,\ud\r\;=\;
\int_0^{+\infty} |M_u(\r)|^p \,\r^{2}\,\ud\r \\
&\leqslant\;C_p\,\|u\|_{L^p(\mathbb{R}^3)}^p
\end{split}
\end{equation}
for some constant $C_p>0$, whence the conclusion.

\subsection{$L^p$-boundedness of $W^{+}_{\a, Y}$ for $N=1$ and $p\in(1,\frac{3}{2})$}\label{subsec:N1_p132}

In the regime $p\in(1,\frac{3}{2})$ the general harmonic analysis 
treatment provided by Theorem \ref{thm:harmonic_analysis_results} 
only allows us to find an $L^p$-bound to \emph{part} 
of the function (see \eqref{Omega_intermediate_components} above)
\begin{equation*}
 (\Omega{u})(x)\;=\;\frac{1}{\,\ii(2\pi)^{\frac{3}{2}}|x|\,}
\int_0^{+\infty}\!\! e^{-\ii\lambda|x|}
\,F(\lambda) \widehat{(rM_u)}(-\lambda)\ \ud\lambda\,,
\end{equation*}
whereas for the remaining part we need to produce further analysis. 

Integrating by parts the above expression of $\Omega u$, using 
$e^{-\ii\lambda\r}=\ii\r^{-1}\partial_\lambda e^{-\ii\lambda\r}$, yields    
\begin{equation}\label{eq:O=O1+O2}
 \Omega u \;=\;\Omega_1 u+\Omega_2 u\,,
\end{equation}
where
\begin{equation}\label{eq:defO1O2}
 \begin{split}
  (\Omega_1 u)(x)\;&:=\;\frac{-\ii}{(2\pi)^{\frac{3}{2}}|x|^2 }
\int_0^{+\infty}\!\! e^{-\ii\lambda |x| }
\,F(\lambda) \widehat{(r^2 M_u)}(-\lambda)\ \ud\lambda\,, \\
 (\Omega_2 u)(x)\;&:=\;\frac{-1}{(2\pi)^{\frac{3}{2}}|x|^2 }
\int_0^{+\infty}\!\! e^{-\ii\lambda|x| }
\,F'(\lambda) \widehat{(r M_u)}(-\lambda)\ \ud\lambda\,.
 \end{split}
\end{equation}

Now, concerning $\Omega_1 u$, we re-write
\begin{equation}\label{eq:O1rewritten}
 (\Omega_1 u)(x)\;=\;\frac{-\ii}{(2\pi)^{\frac{3}{2}}|x|^2 } 
(L\ast r^2 M_u)(|x|)
\end{equation}
with $L$ given by \eqref{ljk}. Owing to Lemma \ref{lem:kernel-0}, $u \mapsto L\ast u$ is a Calder\'on-Zygmund operator, and owing to Theorem \ref{thm:harmonic_analysis_results}(iv), $|x|^{2-2p}$ is an $A_p$-weight on $\mathbb{R}$ for $p\in(1,\frac{3}{2})$. Therefore, 
\begin{equation}\label{eq:bound_on_O1}
 \begin{split}
  \|\Omega_1 u\|_{L^p(\mathbb{R}^3)}^p\;&=\;\frac{4\pi}{(2\pi)^{3p/2}}
\int_0^{+\infty} |(L\ast \rho^2 M_u)(\r)|^p \,\r^{2-2p}\,\ud\r \\
&\leqslant\;\int_0^{+\infty} |\rho^2 M_u(\r)|^p \,\r^{2-2p}\,\ud\r\;=\;\int_0^{+\infty} | M_u(\r)|^p \,\r^{2}\,\ud\r \\
&\leqslant\;C_p\,\| u\|_{L^p(\mathbb{R}^3)}^p
 \end{split}
\end{equation}
for some constant $C_p>0$,
where in the second step we applied Theorem \ref{thm:harmonic_analysis_results}(ii). This proves the $L^p$-boundedness of $\Omega_1$.

Concerning $\Omega_2 u$, instead, we re-write 
\begin{equation}\label{eq:O2_rewritten} 
(\Omega_2 u)(x)\;=\;\frac{-1}{(2\pi)^{\frac{3}{2}}\r^2 }
(\mathcal{L} * r M_u)(\r)\,,   
\end{equation}
where $\mathcal{L}$ is the Fourier transform of the function ${\bf 1}_{(0,\infty)}F'(\lambda)$, and
\begin{equation}
F'(\lambda)\;=\; \alpha\Big(\alpha+\frac{\ii\,\lambda}{4\pi}\Big)^{-2}\,.
\end{equation}
Thus, in the non-trivial case $\alpha\neq0$ $F'$ is smooth and bounded, and correspondingly both $F''$ and $\lambda F'''$ are integrable. This implies, through Lemma \ref{lem:kernel-0}, that $u\mapsto \mathcal{L}*u$ is a Calder\'on-Zygmund operator on $\mathbb{R}$. Since $|x|^{2-2p}$ is an $A_p$-weight on $\mathbb{R}$ for $p\in(1,\frac{3}{2})$ (Theorem \ref{thm:harmonic_analysis_results}(iv)), then Theorem \ref{thm:harmonic_analysis_results}(ii) yields
\begin{equation}\label{eq:O2partial}
\begin{split}
 \|\Omega_2 u\|_{L^p(\mathbb{R}^3)}^p\;&=\;\frac{4\pi}{(2\pi)^{3p/2}}
\int_0^{+\infty} |(\mathcal{L}* \rho\, M_u)(\r)|^p \,\r^{2-2p}\,\ud\r \\
&\leqslant\;\int_0^{+\infty} |\rho\, M_u(\r)|^p \,\r^{2-2p}\,\ud\r\;\leqslant\;C\!\int_{\mathbb{R}^3}\frac{\;|u(x)|^p}{|x|^p}\,\ud x
\end{split}
\end{equation}
for some constant $C>0$. This shows that
\begin{equation}\label{eq:first_bound_on_O2}
 \|\Omega_2 \mathbf{1}_{\{|x|\geqslant 1\}}u\|^p_{L^p(\mathbb{R}^3)}\;\leqslant\;C\,\|\mathbf{1}_{\{|x|\geqslant 1\}}u\|^p_{L^p(\mathbb{R}^3)}\,.
\end{equation}
For $L^p$-functions supported on $|x|\leqslant 1$ a further 
argument is needed. In other words, so far from \eqref{eq:bound_on_O1} 
and \eqref{eq:first_bound_on_O2} we have
\begin{equation}\label{eq:O2partial_bis}
\begin{split}
 &\|\Omega u\|^p_{L^p(\mathbb{R}^3)} \\
 &\quad\leqslant\;2\|\Omega_1 u\|^p_{L^p(\mathbb{R}^3)}+2\|\Omega_2 \mathbf{1}_{\{|x|\geqslant 1\}}u\|^p_{L^p(\mathbb{R}^3)}+2\|\Omega_2 \mathbf{1}_{\{|x|\leqslant 1\}}u\|^p_{L^p(\mathbb{R}^3)} \\
 &\quad\leqslant \;C_p\|u\|^p_{L^p(\mathbb{R}^3)}+
C\|\mathbf{1}_{\{|x|\geqslant 1\}}u\|^p_{L^p(\mathbb{R}^3)}+2\|\Omega_2 \mathbf{1}_{\{|x|\leqslant 1\}}u\|^p_{L^p(\mathbb{R}^3)}\,,
\end{split}
\end{equation}
and we are left with producing the estimate
\begin{equation}\label{eq:needed_one}
 \|\Omega_2 \mathbf{1}_{\{|x|\leqslant 1\}}u\|_{L^p(\mathbb{R}^3)}\;\leqslant\;C_p\|\mathbf{1}_{\{|x|\leqslant 1\}}u\|^p_{L^p(\mathbb{R}^3)}\,.
\end{equation}

To this aim, let us establish first the following result.

\begin{lem} \label{lem:kernel-1} 
Suppose that $[0,+\infty)\ni y\mapsto Y(y)$ is a bounded $C^1$-function such that $\lambda\mapsto\lambda^\theta Y(\lambda)$ and $\lambda\mapsto(1+ \lambda)^\theta Y'(\lambda)$ 
are both integrable for all $\theta\in (0,1)$, and let 
\begin{equation}\label{Yint}
T(x,y)\;:=\; \frac{1}{|x|^2}\int_{0}^{+\infty} 
\!\!\Big(\frac{e^{-\ii\lambda(|x|-|y|)}- e^{-\ii\lambda(|x|+|y|)}}{4\pi|y|}\Big) 
Y(\lambda) \,\ud\lambda 
\end{equation}
for $x,y\in\mathbb{R}^3$.
Then, for any $R>0$ and $p\in(1,\frac{3}{2})$, the integral operator $T$ on $\mathbb{R}^3$  
with the integral kernel 
$T(x,y)$ is 
$L^p(\Lambda_R)\to L^p(\mathbb{R}^3)$ bounded, with $\Lambda_R:=\{x\in\mathbb{R}^3\,|\,|x|\leqslant R\}$.
\end{lem}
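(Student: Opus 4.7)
The plan is to derive a pointwise estimate on the kernel $T(x,y)$ that is strong enough to allow a direct estimate of $\|Tu\|_{L^p(\mathbb{R}^3)}$ via Hölder's inequality in $y$ followed by integration in $x$, treating the near region $\{|x|\le 2R\}$ and the far region $\{|x|>2R\}$ separately. As a preparation I would first rewrite the kernel using $e^{-\ii\lambda(|x|-|y|)}-e^{-\ii\lambda(|x|+|y|)}=2\ii\, e^{-\ii\lambda|x|}\sin(\lambda|y|)$ to obtain the more convenient form
\[
T(x,y)\;=\;\frac{\ii}{2\pi |x|^2 |y|}\int_0^{+\infty}\!e^{-\ii\lambda|x|}\sin(\lambda|y|)\,Y(\lambda)\,\ud\lambda\,,
\]
in which the phase $e^{-\ii\lambda|x|}$ is available for an integration by parts in $\lambda$ and in which the factor $\sin(\lambda|y|)/|y|$ is bounded by $\lambda$ as $|y|\downarrow 0$.

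The first pointwise bound (useful when $|x|$ is small) will come from the estimate $|\sin(\lambda|y|)|\le (\lambda|y|)^{\theta}$ valid for any $\theta\in(0,1)$, combined with the hypothesis $\lambda^{\theta}Y(\lambda)\in L^1$, giving
\[
|T(x,y)|\;\le\; \frac{C_\theta}{|x|^2\,|y|^{1-\theta}}\,.
\]
The second pointwise bound (useful when $|x|$ is large) will come from integrating by parts in $\lambda$, with vanishing boundary terms (the term at $\lambda=0$ vanishes through $\sin$, and the term at $\lambda=\infty$ through the decay of $Y$ implied by $Y'\in L^1$). This produces the identity
\[
\int_0^{+\infty}\!e^{-\ii\lambda|x|}\sin(\lambda|y|)Y(\lambda)\,\ud\lambda\;=\;-\frac{\ii}{|x|}\!\int_0^{+\infty}\!\!e^{-\ii\lambda|x|}\bigl[|y|\cos(\lambda|y|)Y(\lambda)+\sin(\lambda|y|)Y'(\lambda)\bigr]\,\ud\lambda\,.
\]
Bounding the first summand by $|y|\,\|Y\|_{L^1}$ and the second by $|y|^\theta\|\lambda^\theta Y'\|_{L^1}$ (via $|\sin(\lambda|y|)|\le(\lambda|y|)^\theta$ and the hypothesis $(1+\lambda)^\theta Y'(\lambda)\in L^1$), and then using $|y|\le R^{1-\theta}|y|^\theta$ for $|y|\le R$, I obtain
\[
|T(x,y)|\;\le\; \frac{C_\theta}{|x|^3\,|y|^{1-\theta}}\,.
\]

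Third, I would choose $\theta$ so that $\theta>1-3/p'$ (which is possible since $p\in(1,\tfrac{3}{2})$ implies $p'>3$, hence $1-3/p'\in(0,1)$). Hölder in $y$ then yields
\[
\int_{\Lambda_R}\frac{|u(y)|}{|y|^{1-\theta}}\,\ud y\;\le\;C_{R,p,\theta}\,\|u\|_{L^p(\Lambda_R)}\,,
\]
since $(1-\theta)p'<3$. On $\{|x|\le 2R\}$ the small-$|x|$ bound gives $|Tu(x)|\le C|x|^{-2}\|u\|_{L^p}$ and the resulting $\int_0^{2R}r^{2-2p}\,\ud r$ is finite because $p<\tfrac{3}{2}$. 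On $\{|x|>2R\}$ the large-$|x|$ bound gives $|Tu(x)|\le C|x|^{-3}\|u\|_{L^p}$ and the resulting $\int_{2R}^{+\infty}r^{2-3p}\,\ud r$ is finite because $p>1$. Summing the two contributions proves the claim.

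The main obstacle will be exactly the step producing the additional factor $|x|^{-1}$ at infinity: the naive pointwise bound $|T(x,y)|\lesssim |x|^{-2}|y|^{-(1-\theta)}$ fails to be $L^p$-integrable on $\{|x|>2R\}$ for any $p<\tfrac{3}{2}$, so the phase $e^{-\ii\lambda|x|}$ must be exploited via integration by parts, and it is precisely here that the hypothesis on $(1+\lambda)^\theta Y'(\lambda)$ enters (it is not enough that $Y$ alone be nice, because the boundary contribution at $\lambda=0$ of a second integration by parts would reintroduce the value $Y(0)\ne 0$ responsible for the slow $|\rho|^{-1}$ decay of the Fourier transform of $Y$).
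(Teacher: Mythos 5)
Your argument is correct, and its skeleton is the one the paper uses: a pointwise bound $|T(x,y)|\leqslant C_\theta\,|x|^{-2}|y|^{\theta-1}$ valid near the origin (from $|\sin(\lambda|y|)|\leqslant(\lambda|y|)^{\theta}$ and the integrability of $\lambda^{\theta}Y$), an improved bound $|T(x,y)|\leqslant C_\theta\,|x|^{-3}|y|^{\theta-1}$ obtained by integrating by parts in $\lambda$ (where the hypothesis on $Y'$ enters), and then H\"older in $y$ against $|y|^{\theta-1}\in L^{p'}(\Lambda_R)$ followed by the radial integrations $\int r^{2-2p}\,\ud r$ and $\int r^{2-3p}\,\ud r$, which converge precisely for $1<p<\tfrac32$. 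The one genuine difference is the implementation of the integration by parts: the paper keeps the full phases $e^{-\ii\lambda(|x|\pm|y|)}$, which produces factors $(|x|\mp|y|)^{-1}$ and a boundary term proportional to $Y(0)$, and therefore must restrict to $|x|\geqslant 10$, $|y|\leqslant 1$ (i.e.\ $|x|$ large compared with $|y|$) to keep $|x|-|y|$ away from zero; your version factors out $e^{-\ii\lambda|x|}$ and keeps $\sin(\lambda|y|)$ intact, so the boundary term at $\lambda=0$ vanishes identically, no singular denominators appear, and the $|x|^{-3}|y|^{\theta-1}$ bound holds uniformly for all $x\neq 0$ and $|y|\leqslant R$ --- a mild but real simplification. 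One small precision: the vanishing of your boundary term at $\lambda=+\infty$ is not a consequence of $Y'\in L^1$ alone (that only gives existence of $\lim_{\lambda\to\infty}Y(\lambda)$); you also need $\lambda^{\theta}Y\in L^1$ to conclude that this limit is $0$, or alternatively you can integrate by parts on $[0,M]$ and pass to the limit using that both resulting integrands are absolutely integrable.
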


\begin{proof} We only consider the case $R=1$, the proof for generic $R$ is similar. Let us deal with the region $|x|\leqslant 10$ first. Since 
$|e^{-\ii\lambda(|x|-|y|)}- e^{-\ii\lambda(|x|+|y|)}|\leqslant 2 (\lambda|y|)^{1-\th}$ 
for any $\theta\in (0,1)$, and since $\lambda^{1-\theta}Y\in L^1(0,+\infty)$, then
\[
|T(x,y)|\;\leqslant\;\frac{1}{\;2\pi |x|^2 |y|^{\th}\,}
\int_0^{+\infty} |Y(\lambda)|\,\lambda^{1-\theta} \,\ud\lambda \;\leqslant\; 
\frac{C_\theta}{\,|x|^2 |y|^{\th}\,}\,, \qquad |x|\leqslant 10\,,
\]
for some constant $C_\theta>0$. For fixed $p$ in $(1,\frac{3}{2})$, we take $\theta\in(0,1)$ such that
$p'\th<3$, where $p'=\frac{p}{p-1}$ as usual. With this choice, $|y|^{-\th}\in L^{p'}(\Lambda_1)$ and 
$|x|^{-2}\in L^p(\Lambda_{10})$, with $\Lambda_R=\{x\in\mathbb{R}^3\,|\,|x|\leqslant R\}$ as in the statement of the Lemma. For each $f\in L^p(\Lambda_1)$, H\"older's inequality and the above bound for $|T(x,y)|$ then imply 
\[
\|Tf\|_{L^p(\Lambda_{10})} \;
\leqslant\; C_\theta \,\||x|^{-2}\|_{L^p(\Lambda_{10})}\cdot 
\|f\|_{L^p(\Lambda_1)}\cdot \||y|^{-\th}\|_{L^{p'}(\Lambda_1)}\;=\;\kappa^-_p\,\|f\|_{L^p(\Lambda_1)}
\]
for some constant $\kappa^-_p>0$. Next, let us consider the region $|x|\geqslant 10$. Integration by parts gives
\begin{align*}
 T(x,y)\;&=\;\frac{1}{\,4\pi|x|^2|y|\,}\int_0^{+\infty}\!\!\partial_\lambda\Big(\frac{e^{-\ii\lambda(|x|-|y|)}}{-\ii(|x|-|y|)}-\frac{e^{-\ii\lambda(|x|+|y|)}}{-\ii(|x|+|y|)}\Big)\,Y(\lambda)\,\ud\lambda \\
  &=\; \frac{1}{\,4\pi\,\ii\,|x|^2|y|\,} 
\Big(\frac{1}{|x|-|y|}
- \frac{1}{|x|+|y|}\Big)\,Y(0) \tag{I}\\
  & \quad +  \frac{1}{\,4\pi\,\ii\,|x|^2|y|\,} \int_0^{+\infty}\!\!\Big(\frac{e^{-\ii\lambda(|x|-|y|)}}{|x|-|y|}-\frac{e^{-\ii\lambda(|x|+|y|)}}{|x|+|y|}\Big)\,Y'(\lambda)\,\ud\lambda\,.
    \tag{II}
\end{align*}
Since $|x|\pm |y|\geqslant \frac{9}{10}|x| \geqslant 9$ whenever $|x|\geqslant 10$ and $|y|\leqslant 1$, and since $Y$ is bounded, then clearly
\[
 |\mathrm{(I)}|\;\leqslant\;\frac{C}{\;|x|^4}\;\leqslant\;\frac{C}{\;|x|^3\,|y|^\theta}
\]
for some constant $C>0$ and any $\theta\in(0,1)$. As for the summand (II), since
\[
 \begin{split}
  \frac{e^{-\ii\lambda(|x|-|y|)}}{|x|-|y|}-\frac{e^{-\ii\lambda(|x|+|y|)}}{|x|+|y|}\;&\leqslant_{|\,\cdot\,|}\;\frac{2\,|y|}{|x|^2-|y|^2}+ 
\frac{\;(2\lambda|y|)^{1-\theta}}{|x|+|y|} \\
&\leqslant\;C\Big(\frac{|y|}{\;|x|^2}+\frac{\;(\lambda|y|)^{1-\theta}}{|x|}\Big)
 \end{split}
\]
for some constant $C>0$ and any $\theta\in(0,1)$, and since $(1+ \lambda)^{1-\theta}\,Y'\in L^1(0,+\infty)$, then
\[
  |\mathrm{(II)}|\;\leqslant\;\frac{C}{\,|x|^2|y|\,}\Big( \frac{|y|}{\;|x|^2}+\frac{\;|y|^{1-\theta}}{|x|}\Big)\;=\;C\Big( \frac{1}{\;|x|^4}+\frac{1}{\;|x|^3|y|^\theta}\Big)\;\leqslant\frac{2C}{\;|x|^3\,|y|^\theta}
\]
and hence also
\[
 |T(x,y)|\;\leqslant\;\frac{C}{\;|x|^3\,|y|^\theta}\,,\qquad |x|\geqslant 10\,,
\]
for some constant $C>0$ and any $\theta\in(0,1)$. For fixed $p\in(1,\frac{3}{2})$, we take $\theta\in(0,1)$ such that $p'\theta<3$ and $f\in L^p(\Lambda_{1})$: with this choice, H\"{o}lder's inequality yields
\[
\begin{split}
 \|Tf\|_{L^p(\mathbb{R}^3\setminus\Lambda_{10})} 
\;&\leqslant\; C_\theta \,\||x|^{-3}\|_{L^p(\mathbb{R}^3\setminus\Lambda_{10})}\cdot 
\|f\|_{L^p(\Lambda_1)}\cdot \||y|^{-\th}\|_{L^{p'}(\Lambda_1)} \\
&=\;\kappa_p^+\,\|f\|_{L^p(\Lambda_1)}
\end{split}
\]
for some constant $\kappa^+_p>0$. Combining the above bounds yields the boundedness of $T$ as a map from $L^p(\Lambda_1)$ to $L^p(\mathbb{R}^3)$.
\end{proof}

Let us now complete the proof of the $L^p$-boundedness of $W^{+}_{\a, Y}$ for $N=1$ and $p\in(1,\frac{3}{2})$. We only need to show \eqref{eq:needed_one}. Upon re-writing the second equation in \eqref{eq:defO1O2} by means of \eqref{eq:rMu}, that is,
\begin{equation}\label{eq:omega2_rerewritten}
(\Omega_2 u)(x) \;=\; \frac{-1}{(2\pi)^2 |x|^2}
\int_0^{+\infty} e^{-\ii\lambda|x|}F'(\lambda) 
\Big(\int_{\mathbb{R}^3} \frac{e^{\ii\lambda|y|}-e^{-\ii\lambda|y|}}{4\pi |y|} \,
u(y)\,\ud y\Big) \ud\lambda\,,
\end{equation}
it is immediate to recognise that
\begin{equation}\label{eq:O-T-recognised}
 \Omega_2 u\;=\;-(2\pi)^{-2}\,T\!u\,,
\end{equation}
where $T$ is the integral operator given by \eqref{Yint} with $Y\equiv F'$, and $F'$ does satisfy the assumptions of Lemma \ref{lem:kernel-1}. From this, we conclude \eqref{eq:needed_one} at once.

\section{$L^p$-bounds for the general multi-centre case.}\label{sec:Lp-bounds_N}

The additional complication in the case $N\geqslant 2$ is due to the presence, 
in the function $F$ defined in \eqref{eq:def_Fjk} and \eqref{eq:def_F_matrix}, of the 
terms 
$\mathcal{G}_{\lambda}^{y_j y_k}$ (definitions \eqref{eq:def_of_the_Gs}-\eqref{ga-def}), 
which are \emph{oscillatory} in $\lambda$.

Let us start the discussion by re-writing
\begin{equation} \label{eq:G_inverse}
F(\lambda)\;=\; \lambda \,\Ga_{\alpha,Y}(-\lambda)^{-1}\;=\; \lambda 
\Big(\mathcal{A} + \frac{\ii\lambda}{\,4\pi\,}{\mathbbm{1}}- 
\widetilde{G}(-\lambda)\Big)^{-1} , \quad \lambda >0\,,
\end{equation}
with
\begin{align}
  \mathcal{A}\;&:=\;\mathrm{diag}(\alpha_1,\dots,\alpha_N)\,, \\
  \widetilde{G}(\lambda)\;&:=\; 
( \mathcal{G}_{\lambda}^{y_j y_k})_{j,k=1,\dots,N}\,.
\end{align}
We decompose $F(\lambda)$ into a small-$\lambda$ and a large-$\lambda$ part by means 
of two cut-off functions 
$\omega_<$ and $\omega_>$ such that 
\begin{equation}
\begin{split}
\omega_<\;&\in\; C^\infty_0(\mathbb{R})\,,\qquad \omega_{>}(\lambda)\;:=\;1-\omega_<(\lambda)\,, \\
\omega_<(\lambda)\;&=\;
\begin{cases}
\;1 & \textrm{if }\;|\lambda|\,\leqslant\,\gamma \\
\;0 & \textrm{if }\;|\lambda|\,\geqslant\,2\gamma\,,
\end{cases}
\end{split}
\end{equation}
where $\gamma>0$ is a sufficiently large number so that,  
\begin{equation}\label{eq:decompose-F-a}
 \|\mathcal{A}-\widetilde{G}(-\lambda)\|\;<\; |\lambda| (16\pi)^{-1}\,,
\qquad |\lambda|\:\geqslant\:\gamma
\end{equation}
($\|E\|$ being the operator norm of the 
matrix $E$ as an operator on ${\mathbb C}^N$),
and the r.h.s.~of \eqref{eq:G_inverse} is invertible. 
Explicitly,
\begin{equation}\label{eq:decompose-F}
 F\;=\;F^{<}+ F^{>}\,,\qquad F^{<}\;:=\;\w_{<}F\,,\qquad F^{>}\;:=\;\w_{>}F\,.
\end{equation}

From \eqref{eq:G_inverse} and \eqref{eq:decompose-F-a} we expand
\begin{equation}\label{eq:large_lambda_lG}
\begin{split}
 \!\!\!\!F^{>}(\lambda)\;&=\;-4\pi\ii\,\omega_>(\lambda)\,\Big\{\mathbbm{1}-\frac{4\pi\ii}{\lambda}\big(\mathcal{A}-\widetilde{G}(-\lambda)\big)+\Big(\frac{4\pi\ii}{\lambda}\big(\mathcal{A}-\widetilde{G}(-\lambda)\big)\Big)^2\Big\} \\
 &\quad -4\pi\ii\,\omega_>(\lambda)\,\Big(\frac{4\pi\ii}{\lambda}\big(\mathcal{A}-\widetilde{G}(-\lambda)\big)\Big)^3\Big(\mathbbm{1}-\frac{4\pi\ii}{\lambda}\big(\mathcal{A}-\widetilde{G}(-\lambda)\big)\Big)^{\!-1}\,.
\end{split}
\end{equation}
We collect all terms that do not contain 
$\widetilde{G}(-\lambda)$ or for which the oscillation of 
$\widetilde{G}(-\lambda)$ is harmless into the quantity
\begin{equation}
 \begin{split}
 F^{(0)}(\lambda)\;&:=\;F^{<}(\lambda)-4\pi\ii\,\omega_>(\lambda)\,
\Big\{\mathbbm{1}-\frac{4\pi\ii}{\lambda}\mathcal{A}-\frac{16\pi^2}{\lambda^2}\mathcal{A}^2 
 \Big\}  \\
 &\; -4\pi\ii\,\omega_>(\lambda) \Big(\frac{4\pi\ii}{\lambda}\big(\mathcal{A}-\widetilde{G}(-\lambda)\big)\Big)^3\Big(\mathbbm{1}-\frac{4\pi\ii}{\lambda}\big(\mathcal{A}-\widetilde{G}(-\lambda)\big)\Big)^{\!-1} ,
 \end{split}
\end{equation}
whereas
\begin{equation}\label{F-o-1}
 \begin{split}
  F^{(1)}(\lambda)\;&:=\;4\pi\ii\,\omega_>(\lambda) \\
  &\!\!\!\!\!\!\!\!\!\!\!\!\!\!\!\times
\Big\{-\frac{4\pi\ii}{\lambda}\widetilde{G}(-\lambda)-\frac{16\pi^2}{\lambda^2}
\big(\mathcal{A}\,\widetilde{G}(-\lambda)+\widetilde{G}(-\lambda)\,\mathcal{A}\big)
+\frac{16\pi^2}{\lambda^2}\widetilde{G}(-\lambda)^2\Big\}
 \end{split}
\end{equation}
contains the oscillations explicitly.

Thus,
\begin{equation}
 F\;=\;F^{(0)}+F^{(1)}\qquad\textrm{and}\qquad \Omega\;=\;\Omega^{(0)}+\Omega^{(1)}\,,
\end{equation}
where, by means of \eqref{Omega_intermediate},
\begin{equation}\label{Omega_01}
\begin{split}
 (\Omega^{(\ell)}{u})(x)\;&:=\;\frac{1}{\,\ii(2\pi)^{\frac{3}{2}}|x|\,}
\int_0^{+\infty}\!\! e^{-\ii\lambda|x|}
\,F^{(\ell)}(\lambda) \widehat{(rM_u)}(-\lambda)\, \ud\lambda\,, \\
&\quad\ell\in\{0,1\}\,.
\end{split}
\end{equation}

\subsection{$L^p$-boundedness of $\Omega^{(0)}$}

The $L^p$-boundedness of the map $u\mapsto \Omega^{(0)} u$ can be 
established via a straightforward adaptation of the arguments of 
Section \ref{sec:Lp-bounds}, of course understanding that this is 
done for each component $\Omega^{(0)}_{jk}$, and this is possible 
precisely thanks to the lack of relevant oscillations in $\Omega^{(0)}$.

This means that first we write, in analogy to 
\eqref{Omega_intermediate-a},
\begin{equation}\label{Omega_intermediate-a_0}
 (\Omega^{(0)}{u})(x)\;=\;\frac{1}{\,\ii(2\pi)^{\frac{3}{2}}|x|\,}
(\widehat{F^{(0)}} \ast rM_u)(|x|)\,,
\end{equation}
and it is easy to check that $F^{(0)}$ satisfies the properties of the function 
$W$ in Lemma \ref{lem:kernel-0}, from which, reasoning as in \eqref{eq:Opb_final}, 
\begin{equation}\label{eq:O1_b1}
\|\Omega^{(0)} u\|_{L^p(\mathbb{R}^3)}\;\leqslant\;C_p\,\|u\|_{L^p(\mathbb{R}^3)}\,,
\qquad p\in(\textstyle{\frac{3}{2}},3)\,,
\end{equation}
for some constant $C_p>0$.

Then, in analogy to \eqref{eq:O=O1+O2}, \eqref{eq:defO1O2}, 
\eqref{eq:O1rewritten}, \eqref{eq:O2_rewritten}, and 
\eqref{eq:omega2_rerewritten}, we split
\begin{equation}
\Omega^{(0)} u \;=\;\Omega^{(0)}_1 u+\Omega^{(0)}_2 u
\end{equation}
with
\begin{equation}\label{eq:defO1}
 \begin{split}
  (\Omega^{(0)}_1 u)(x)\;&:=\;\frac{-\ii}{(2\pi)^{\frac{3}{2}}|x|^2 }
\int_0^{+\infty}\!\! e^{-\ii\lambda |x| }
\,F^{(0)}(\lambda) \widehat{(r^2 M_u)}(-\lambda)\, \ud\lambda\,, \\
&=\; \frac{-\ii}{(2\pi)^{\frac{3}{2}}|x|^2 } 
(\widehat{F^{(0)}}\ast r^2 M_u)(|x|)
 \end{split}
\end{equation}
and
\begin{equation}\label{eq:defO2}
 \begin{split}
 (&\Omega^{(0)}_2 u)(x)\;:=\;\frac{-1}{(2\pi)^{\frac{3}{2}}|x|^2 }
\int_0^{+\infty}\!\! e^{-\ii\lambda|x| }
\,{F^{(0)}}'(\lambda) \widehat{(r M_u)}(-\lambda)\ \ud\lambda \\
&=\;\frac{-1}{(2\pi)^{\frac{3}{2}}\r^2 }
(\mathcal{L}^{(0)} * r M_u)(\r) \\
&=\;\frac{-1}{(2\pi)^2 |x|^2}
\int_0^{+\infty} \!\!e^{-\ii\lambda|x|}{F^{(0)}}'(\lambda) 
\Big(\int_{\mathbb{R}^3} \frac{e^{\ii\lambda|y|}-e^{-\ii\lambda|y|}}{4\pi |y|} \,
u(y)\,\ud y\Big) \ud\lambda\,,
 \end{split}
\end{equation}
where $\mathcal{L}^{(0)}$ is the Fourier transform of the function 
${\bf 1}_{(0,\infty)}{F^{(0)}}'$.

Since, as observed already, $F^{(0)}$ behaves like $W$ in Lemma \ref{lem:kernel-0}, 
we have, reasoning as in \eqref{eq:bound_on_O1},
\begin{equation}\label{eq:O1_b2}
\|\Omega^{(0)}_1 u\|_{L^p(\mathbb{R}^3)}\;\leqslant\;C_p\,\|u\|_{L^p(\mathbb{R}^3)}\,,
\qquad p\in(\textstyle{1,\frac{3}{2}})\,,
\end{equation}
and since ${\bf 1}_{(0,\infty)}{F^{(0)}}'$ too satisfies the properties of the 
function $W$ in Lemma \ref{lem:kernel-0}, we have, using the \emph{second} line 
in the r.h.s.~of \eqref{eq:defO2} and reasoning as in 
\eqref{eq:O2partial}-\eqref{eq:first_bound_on_O2},
\begin{equation}\label{eq:O1_b3}
\|\Omega^{(0)}_2 \mathbf{1}_{\{|x|\geqslant 1\}} u\|_{L^p(\mathbb{R}^3)}\;
\leqslant\;C_p\,\|\mathbf{1}_{\{|x|\geqslant 1\}} u\|_{L^p(\mathbb{R}^3)}\,,
\qquad p\in(\textstyle{1,\frac{3}{2}})\,,
\end{equation}
for some constant $C_p>0$. Last, since ${\bf 1}_{(0,\infty)}{F^{(0)}}'$ satisfies 
the properties of the function $Y$ in Lemma \ref{lem:kernel-1}, we have, using the 
\emph{third} line in the r.h.s.~of \eqref{eq:defO2} and reasoning as in 
\eqref{eq:omega2_rerewritten}-\eqref{eq:O-T-recognised} and \eqref{eq:needed_one},
\begin{equation}\label{eq:O1_b4}
\|\Omega^{(0)}_2 \mathbf{1}_{\{|x|\leqslant 1\}} u\|_{L^p(\mathbb{R}^3)}\;
\leqslant\;C_p\,\|\mathbf{1}_{\{|x|\leqslant 1\}} u\|_{L^p(\mathbb{R}^3)}\,,
\qquad p\in(\textstyle{1,\frac{3}{2}})\,.
\end{equation}

Combining together the bounds \eqref{eq:O1_b1}, \eqref{eq:O1_b2}, 
\eqref{eq:O1_b3}, and \eqref{eq:O1_b4}, plus interpolation 
so as to cover also the case $p=\frac{3}{2}$, yields finally
\begin{equation}
 \|\Omega^{(0)}u\|_{L^p(\mathbb{R}^3)}\;\leqslant\;C_p\,\|u\|_{L^p(\mathbb{R}^3)}\,,\qquad p\in(1,3)\,,
\end{equation}
for some constant $C_p>0$.

\subsection{$L^p$-boundedness of $\Omega^{(1)}$}

The proof of the $L^p$-boundedness of the map $u\mapsto \Omega^{(1)} u$ is 
somewhat more involved, however the basic idea of the proof is similar to that for 
$\Omega^{(0)}$. 
First  we re-write \eqref{Omega_01} in analogy to \eqref{Omega_intermediate} 
and \eqref{Omega_intermediate-a_0} as
\begin{equation}\label{Omega_intermediate-a_1}
 (\Omega^{(1)}{u})(x)\;=\;\frac{1}{\,\ii(2\pi)^{\frac{3}{2}}|x|\,}
(\widehat{F^{(1)}} \ast rM_u)(|x|)\,.
\end{equation}
Owing to \eqref{F-o-1}, the matrix elements of $F^{(1)}(\lambda)$ entering 
\eqref{Omega_01} and \eqref{Omega_intermediate-a_1} above are of the form
\begin{equation*}
 \frac{\omega_>(\lambda)}{\lambda}\,
\frac{e^{-\ii \lambda|y_j-y_k|}}{\,|y_j-y_k|\,}\,,\qquad 
\frac{\omega_>(\lambda)}{\lambda^2}\,\frac{e^{-\ii \lambda|y_j-y_k|}}{\,|y_j-y_k|\,}\,,
\qquad \frac{\omega_>(\lambda)}{\lambda^2}\,
\frac{e^{-\ii \lambda|y_j-y_k|}}{\,|y_j-y_k|\,}
\frac{e^{-\ii \lambda|y_r-y_s|}}{\,|y_r-y_s|\,}
\end{equation*}
(observe that the $\lambda$-dependence of the matrix elements of 
$\widetilde{G}$ in \eqref{F-o-1} is $\widetilde{G}(-\lambda)$). This means that 
denoting by $a>0$ any of the numbers $|y_j-y_k|$ or $|y_j-y_k|+|y_r-y_s|$ and by 
$X(\lambda)$ the function $\lambda^{-1}\omega_>(\lambda)$ or 
$\lambda^{-2}\omega_>(\lambda)$, formulas \eqref{Omega_01} and 
\eqref{Omega_intermediate-a_1} imply that $\Omega^{(1)}{u}$ is a linear 
combination of terms of the form
\begin{equation}\label{eq:def_Xi}
\begin{split}
 (\Xi\,u)(x)\,:=&\;\frac{1}{\,\ii\,|x|\,}
\int_0^{+\infty}\!\! e^{-\ii\lambda(|x|+a)}
X(\lambda)\, \widehat{(rM_u)}(-\lambda)\, \ud\lambda \\
=&\;\frac{1}{\,\ii\,|x|\,}
(\widehat{X} \ast rM_u)(|x|+a)\,,
\end{split}
\end{equation}
and we need to prove the $L^p$-boundedness of the map $u\mapsto\Xi\,u$. 
In fact, we shall establish it for each of the two terms of the bound
\begin{equation}\label{eq:Xi_splitted}
 \|\Xi\,u\|_{L^p(\mathbb{R}^3)}\;\leqslant\;\|\mathbf{1}_{\{|x|\geqslant R\}}\,\Xi\,u\|_{L^p(\mathbb{R}^3)}+\|\mathbf{1}_{\{|x|\leqslant R\}}\,\Xi\,u\|_{L^p(\mathbb{R}^3)}
\end{equation}
for a suitable $R>0$. 

Let us cast the discussion of such two terms into the following two Lemmas. The combination of \eqref{eq:Xi_splitted} above with \eqref{eq:Xi_ext} and \eqref{eq:Xi_int} below will then complete the proof of the $L^p$-boundedness of $\Omega^{(1)}$.

\begin{lem}
For any $p\in(1,3)$ and $R>a$ there exists a constant $C_p>0$ such that
\begin{equation}\label{eq:Xi_ext}
 \|\mathbf{1}_{\{|x|\geqslant R\}}\,\Xi\,u\|_{L^p(\mathbb{R}^3)}\;
\leqslant\;C_p\,\|u\|_{L^p(\mathbb{R}^3)}
\end{equation}
for all $u\in L^p(\mathbb{R}^3)$, where $\Xi\,u$ is defined in \eqref{eq:def_Xi}.
\end{lem}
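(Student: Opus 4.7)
The plan is to mirror Section \ref{sec:Lp-bounds}, exploiting the key observation that the hypothesis $R>a$ ensures $|x|$ and $|x|+a$ are comparable throughout the integration region (indeed $|x|\leqslant|x|+a\leqslant 2|x|$). I would treat the two regimes $p\in(\tfrac{3}{2},3)$ and $p\in(1,\tfrac{3}{2})$ separately and then interpolate to cover $p=\tfrac{3}{2}$.

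For $p\in(\tfrac{3}{2},3)$, starting from the convolution form \eqref{eq:def_Xi}, passing to polar coordinates and performing the change of variable $\sigma=|x|+a$, one obtains
\[
\|\mathbf{1}_{\{|x|\geqslant R\}}\Xi u\|_{L^p(\mathbb{R}^3)}^p\;\leqslant\;C\int_0^{+\infty}|(\widehat{X}\ast rM_u)(\sigma)|^p\,\sigma^{2-p}\,\ud\sigma\,.
\]
Since $X\in\{\omega_>(\lambda)/\lambda,\,\omega_>(\lambda)/\lambda^2\}$ is bounded, smooth, and satisfies $X'\in L^1$ and $\lambda X''\in L^1$, Lemma \ref{lem:kernel-0} implies that convolution with $\widehat{X}$ is of Calder\'on--Zygmund type; combined with the fact that $\sigma^{2-p}$ is $A_p$ for $p\in(\tfrac{3}{2},3)$ by Theorem \ref{thm:harmonic_analysis_results}(iv), Theorem \ref{thm:harmonic_analysis_results}(ii) closes the estimate by $\|u\|_{L^p}^p$ exactly as in \eqref{eq:Opb_final}.

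For $p\in(1,\tfrac{3}{2})$, the weight $\sigma^{2-p}$ is no longer $A_p$, so I would integrate by parts once in $\lambda$ via $e^{-i\lambda(|x|+a)}=\tfrac{i}{|x|+a}\partial_\lambda e^{-i\lambda(|x|+a)}$ (with vanishing boundary terms, since $X$ is supported in $[\gamma,+\infty)$ and decays at infinity). Using $\partial_\lambda\widehat{(rM_u)}(-\lambda)=i\,\widehat{(r^2M_u)}(-\lambda)$, this produces a decomposition $\Xi u=\Xi_1^\sharp u+\Xi_2^\sharp u$ with outer factor $1/(|x|(|x|+a))$, where $\Xi_2^\sharp u$ contains $(\widehat{X}\ast r^2M_u)(|x|+a)$ and $\Xi_1^\sharp u$ contains $(\mathcal{K}_1\ast rM_u)(|x|+a)$ (with $\mathcal{K}_1$ the Fourier transform of $\mathbf{1}_{(0,+\infty)}X'$); both convolution operators are Calder\'on--Zygmund by Lemma \ref{lem:kernel-0}. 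Since $|x|(|x|+a)\geqslant\tfrac{1}{2}(|x|+a)^2$ on $\{|x|\geqslant R\}$, passing to $\sigma=|x|+a$ yields the $A_p$-weight $\sigma^{2-2p}$, which is $A_p$ precisely for $p\in(1,\tfrac{3}{2})$. The $\Xi_2^\sharp$ contribution is then bounded by $C\|u\|_{L^p}^p$ exactly as in \eqref{eq:bound_on_O1}; the $\Xi_1^\sharp$ contribution, as in \eqref{eq:O2partial}, gives $C\int_{\mathbb{R}^3}|u(y)|^p/|y|^p\,\ud y$, which is dominated by $\|u\|_{L^p}^p$ on $\{|y|\geqslant 1\}$ but requires an extra direct argument on $\{|y|\leqslant 1\}$.

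For this last contribution, $\Xi_1^\sharp(\mathbf{1}_{\{|y|\leqslant 1\}}u)$, I would revert to the integral-kernel representation which, by \eqref{eq:rMu}, reads
\[
K(x,y)\;=\;\frac{c}{|x|(|x|+a)\,|y|}\int_0^{+\infty}\!\big(e^{-i\lambda(|x|+a-|y|)}-e^{-i\lambda(|x|+a+|y|)}\big)X'(\lambda)\,\ud\lambda\,.
\]
This is structurally identical to the kernel of Lemma \ref{lem:kernel-1}, with $|x|$ merely shifted by $a$ in the oscillating phase and with outer factor $1/(|x|(|x|+a))\leqslant 1/|x|^2$; moreover, $X'$ (of order $\lambda^{-2}$ or $\lambda^{-3}$ on its support) satisfies the integrability hypotheses imposed on $Y$ in that lemma. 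I would then transplant the proof of Lemma \ref{lem:kernel-1} verbatim: on $\{R\leqslant|x|\leqslant 10\}$, the bound $|e^{-i\alpha}-e^{-i\beta}|\leqslant 2(\lambda|y|)^{1-\theta}$ yields $|K(x,y)|\leqslant C_\theta/(|x|^2|y|^\theta)$, which is closed by H\"older's inequality on a bounded domain; on $\{|x|\geqslant 10\}$, integration by parts in $\lambda$ (using $|x|+a\pm|y|\geqslant\tfrac{9}{10}(|x|+a)$) upgrades the decay in $|x|$ to $|x|^{-3}|y|^{-\theta}$. I expect this adaptation of Lemma \ref{lem:kernel-1} to the shifted phase to be the main technical obstacle, and the assumption $R>a$ is precisely what keeps $|x|+a\pm|y|$ bounded away from $0$ and comparable to $|x|$, allowing the oscillation and denominator estimates to go through.
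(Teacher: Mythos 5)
Your proposal is correct and follows essentially the same route as the paper's proof: the regime $p\in(\frac{3}{2},3)$ via the change of variable $\sigma=|x|+a$ (exploiting $R>a$ for the comparability of $\sigma-a$ and $\sigma$) together with the Calder\'on--Zygmund/$A_p$ machinery, the regime $p\in(1,\frac{3}{2})$ via one integration by parts producing the $\widehat{X}\ast r^2M_u$ and $\widehat{X'}\ast rM_u$ pieces with outer factor $\frac{1}{|x|(|x|+a)}$, and interpolation at $p=\frac{3}{2}$. The only, harmless, divergence is in the last sub-step: rather than re-running the proof of Lemma \ref{lem:kernel-1} with the phase shifted by $a$ as you do, the paper dominates $\|\mathbf{1}_{\{|x|\geqslant R\}}\Xi_2 u\|_{L^p}^p$ by the norm of the \emph{unshifted} operator $\widetilde{\Xi}_2$ (again via $\frac{1}{2}\rho\leqslant\rho-a\leqslant\rho$) and then invokes Lemma \ref{lem:kernel-1} verbatim with $Y=X'$; note also that in your shifted variant the lower bound on $|x|+a\pm|y|$ in the large-$|x|$ region comes from $|x|\geqslant 10$ and $|y|\leqslant 1$, not from $R>a$, whose real role is the comparability used in the change of variable.
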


\begin{proof}
We consider first the case $p\in(\frac{3}{2},3)$. From \eqref{eq:def_Xi} and from 
the fact that $\rho\geqslant  R+a$ implies 
$\frac{1}{2}\rho\leqslant\rho-a\leqslant\rho$,
\[
 \begin{split}
  \|\mathbf{1}_{\{|x|\geqslant R\}}\,\Xi\,u\|^p_{L^p(\mathbb{R}^3)}\; &=\;
4\pi\!\int_R^{+\infty}\rho^{2-p}\big|(\widehat{X} \ast rM_u)(\rho+a)\big|^p\,\ud \rho \\
  &=\;4\pi\!\int_{ R+a}^{+\infty}(\rho-a)^{2-p}
\big|(\widehat{X} \ast rM_u)(\rho)\big|^p\,\ud \rho \\
  &\leqslant\;C_p\int_{0}^{+\infty}
\big|(\widehat{X} \ast rM_u)(\rho)\big|^p\,\rho^{2-p}\ud \rho\,.
 \end{split}
\]
Now, $\rho^{2-p}$ is an $A_p$-weight on $\mathbb{R}$ because $p\in(\frac{3}{2},3)$ 
(Theorem \ref{thm:harmonic_analysis_results}(iv)) and the convolution with 
$\widehat{X}$ is a Calder\'on-Zygmund operator on $\mathbb{R}$ because the function 
$X$ obviously satisfies the properties of the function $W$ in Lemma \ref{lem:kernel-0}. 
Then it follows from Theorem \ref{thm:harmonic_analysis_results}(ii) that
\begin{equation*}
\|\mathbf{1}_{\{|x|\geqslant R\}}\,\Xi\,u\|_{L^p(\mathbb{R}^3)}^p \;\leqslant\; 
C_p \int_0^{+\infty} |(\rho M_u)(\r)|^p \,\r^{2-p}\,\ud\r\;\
\leqslant\;C_p'\|u\|_{L^p(\mathbb{R}^3)}^p
\end{equation*}
for suitable $C_p'>0$. The Lemma is then proved in the case $p\in(\frac{3}{2},3)$.

Next we consider the case $p\in(1,\frac{3}{2})$. Integration by parts in 
\eqref{eq:def_Xi}, using 
$e^{-\ii\lambda(\r+a)}=\ii(\r+a)^{-1}\partial_\lambda 
e^{-\ii\lambda(\r+a)}$, yields   
\begin{equation*}
 \Xi\,u\;=\;\Xi_1u + \Xi_2u
\end{equation*}
with
\begin{equation*}
\begin{split}
(\Xi_1u)(x)\,:=&\;\frac{-\ii}{\,|x|(|x|+a)\,}
\int_0^{+\infty}\!\! e^{-\ii\lambda(|x|+a)}
X(\lambda)\, \widehat{(r^2M_u)}(-\lambda)\, \ud\lambda \\
=&\;\frac{-\ii}{\,|x|(|x|+a)\,}
(\widehat{X} \ast r^2M_u)(|x|+a)
\end{split}
\end{equation*}
and
\begin{equation*}
\begin{split}
(\Xi_2u)(x)\,:=&\;\frac{-1}{\,|x|(|x|+a)\,}
\int_0^{+\infty}\!\! e^{-\ii\lambda(|x|+a)}
X'(\lambda)\, \widehat{(rM_u)}(-\lambda)\, \ud\lambda \\
&=\;\frac{-1}{\,|x|(|x|+a)\,}
(\widehat{X'} \ast rM_u)(|x|+a)\,.
\end{split}
\end{equation*}
Up to a change of variable, the quantity 
$\|\mathbf{1}_{\{|x|\geqslant R\}}\,\Xi_1u\|_{L^p(\mathbb{R}^3)}$ is 
estimated precisely as the quantity $\|\Omega_1 u\|_{L^p(\mathbb{R}^3)}$ in 
Section \ref{subsec:N1_p132} -- see \eqref{eq:bound_on_O1} above. Indeed,
\[
 \begin{split}
  \|\mathbf{1}_{\{|x|\geqslant R\}}\,\Xi_1u\|_{L^p(\mathbb{R}^3)}^p\;&=\,\int_R^{+\infty}\!\!\frac{4\pi\rho^2}{\,\rho^p(\rho+a)^p}\,|(\widehat{X}*r^2M_u)(\rho+a)|^p\,\ud \rho \\
  &=\,\int_{R+a}^{+\infty}\!\!4\pi(\rho-a)^{2-p}\rho^{-p}\,
|(\widehat{X}*r^2M_u)(\rho)|^p\,\ud \rho \\
  &\leqslant\;C\!\int_{0}^{+\infty}\!\!|(\widehat{X}*r^2M_u)(\rho)|^p\,
\rho^{2-2p}\,\ud\rho \\
  &\leqslant\;C\!\int_{0}^{+\infty}\!\!|M_u(\rho)|^p\,\rho^{2}\,\ud\rho\;
\leqslant\;C_p\|u\|_{L^p(\mathbb{R}^3)}^p
 \end{split}
\]
for some constants $C,C_p>0$,
having used $\frac{1}{2}\rho \leqslant \rho-a\leqslant\rho$ in the third step 
and Theorem \ref{thm:harmonic_analysis_results}(ii) in the fourth step. This was 
possible because $\rho^{2-2p}$ is an $A_p$-weight on $\mathbb{R}$ for 
$p\in(1,\frac{3}{2})$ (Theorem \ref{thm:harmonic_analysis_results}(iv)) and 
because $f\mapsto \widehat{X}*f$ is a Calder\'on-Zygmund operator on 
$\mathbb{R}$ (the function $X$ does satisfy the assumptions on the function 
$W$ in Lemma \ref{lem:kernel-0}).

It remains to estimate the quantity 
$\|\mathbf{1}_{\{|x|\geqslant R\}}\,\Xi_2u\|_{L^p(\mathbb{R}^3)}$ 
in the regime $p\in(1,\frac{3}{2})$ and we proceed by splitting
\[
\begin{split}
 &\|\mathbf{1}_{\{|x|\geqslant R\}}\,\Xi_2u\|_{L^p(\mathbb{R}^3)}^p\;= \\
 &\qquad=\;\|\mathbf{1}_{\{|x|\geqslant R\}}\,\Xi_2\mathbf{1}_{\{|x|\geqslant R\}}u\|_{L^p(\mathbb{R}^3)}^p+\|\mathbf{1}_{\{|x|\geqslant R\}}\,\Xi_2\mathbf{1}_{\{|x|\leqslant R\}}u\|_{L^p(\mathbb{R}^3)}^p\,.
\end{split}
 \]
For estimating 
$\|\mathbf{1}_{\{|x|\geqslant R\}}\,\Xi_2 
\mathbf{1}_{\{|x|\geqslant R\}}u\|_{L^p(\mathbb{R}^3)}$ 
we observe that
\begin{align}
  \|\mathbf{1}_{\{|x|\geqslant R\}}\,\Xi_2u\|_{L^p(\mathbb{R}^3)}^p\;
&=\;\int_R^{+\infty}\!\!\frac{4\pi\rho^2}{\,\rho^p(\rho+a)^p}\,
|(\widehat{X'}*rM_u)(\rho+a)|^p\,\ud \rho  \notag \\
  &=\;\int_{R+a}^{+\infty}4\pi(\rho-a)^{2-p}\rho^{-p}\,|(\widehat{X'}*rM_u)(\rho)|^p\,\ud \rho \notag \\
  &  
\leqslant\;C\int_{0}^{+\infty}\!\!|(\widehat{X'}*rM_u)(\rho)|^p\,\rho^{2-2p}\,\ud\rho \tag{*}
\end{align}
for some constant $C>0$, where we used again 
$\frac{1}{2}\rho< \rho-a \leqslant\rho$. Then 
we can proceed exactly as in \eqref{eq:O2partial}-\eqref{eq:first_bound_on_O2}, 
because  $\rho^{2-2p}$ is an $A_p$-weight on $\mathbb{R}$ for $p\in(1,\frac{3}{2})$ and 
 $f\mapsto \widehat{X'}*f$ is a Calder\'on-Zygmund operator on 
$\mathbb{R}$; the conclusion is the same as in \eqref{eq:first_bound_on_O2}, that is,
\[
 \|\mathbf{1}_{\{|x|\geqslant R\}}\,\Xi_2\mathbf{1}_{\{|x|\geqslant R\}}u\|_{L^p(\mathbb{R}^3)}^p\;\leqslant\;C_p\,\|\mathbf{1}_{\{|x|\geqslant R\}}u\|_{L^p(\mathbb{R}^3)}^p
\]
for some constant $C_p>0$. We also observe from (*)
that
\begin{multline*}
  \|\mathbf{1}_{\{|x|\geqslant R\}}\,\Xi_2u\|_{L^p(\mathbb{R}^3)}^p\; \\
\leqslant\;\int_{\mathbb{R}^3}\ud x\:\Big|\,\frac{1}{\;|x|^2}
\int_0^{+\infty}\!\!e^{-\ii\lambda|x|}\,X'(\lambda)\,
\widehat{(r M_u)}(-\lambda)\,\ud\lambda\,\Big|^p\;
=\;\|\widetilde{\Xi}_2u\|_{L^p(\mathbb{R}^3)}^p\,,
\end{multline*} 
where $\widetilde{\Xi}_2u$ has precisely the same structure as 
$\Omega_2 u$ in \eqref{eq:defO1O2} with the function $X'$ here in place of the 
function $F'$ therein. Therefore, as argued in 
\eqref{eq:omega2_rerewritten}-\eqref{eq:O-T-recognised}, 
since $X'$ satisfies the assumptions on the function $Y$ in Lemma \ref{lem:kernel-1}, 
the conclusion is the same as in \eqref{eq:needed_one}, that is,
\[
 \|\mathbf{1}_{\{|x|\geqslant R\}}\,
\Xi_2\mathbf{1}_{\{|x|\leqslant R\}}u\|_{L^p(\mathbb{R}^3)}^p\;
\leqslant\;C_p\,\|\mathbf{1}_{\{|x|\leqslant R\}}u\|_{L^p(\mathbb{R}^3)}^p
\]
for some constant $C_p>0$. Therefore,
\[
 \|\mathbf{1}_{\{|x|\geqslant R\}}\,\Xi_2u\|_{L^p(\mathbb{R}^3)}^p\;
\leqslant\;C_p\,\|u\|_{L^p(\mathbb{R}^3)}^p
\]
and Lemma is then proved in the case $p\in(1,\frac{3}{2})$.

Last, by interpolation the Lemma is also proved in the case $p=\frac{3}{2}$.
\end{proof}

\begin{lem}
For any $p\in(1,3)$ and $R>100a$ there exists a constant $C_p>0$ such that
\begin{equation}\label{eq:Xi_int}
 \|\mathbf{1}_{\{|x|\leqslant R\}}\,\Xi\,u\|_{L^p(\mathbb{R}^3)}\;\leqslant\;C_p\,\|u\|_{L^p(\mathbb{R}^3)}
\end{equation}
for all $u\in L^p(\mathbb{R}^3)$, where $\Xi\,u$ is defined in \eqref{eq:def_Xi}.
\end{lem}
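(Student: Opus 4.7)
The plan is to split $\{|x|\leqslant R\}$ into the annulus $A:=\{2a\leqslant|x|\leqslant R\}$ and the inner ball $B:=\{|x|\leqslant 2a\}$, both non-degenerate thanks to the hypothesis $R>100a$. On $A$ one has $|x|+a\sim|x|$, so after substituting $\sigma=\rho+a$ the weight $(\sigma-a)^{2-p}$ is equivalent to $\sigma^{2-p}$, and the proof of the preceding lemma applies with only cosmetic modifications --- the direct weighted Calder\'on-Zygmund bound with $A_p$-weight $\sigma^{2-p}$ for $p\in(\tfrac{3}{2},3)$, and for $p\in(1,\tfrac{3}{2})$ one integration by parts in $\lambda$ producing $\Xi u=\Xi_1u+\Xi_2u$ followed by weighted Calder\'on-Zygmund (with $A_p$-weight $\sigma^{2-2p}$) and Lemma~\ref{lem:kernel-1} on the support of $u$, exactly as in \eqref{eq:O2partial_bis}--\eqref{eq:O-T-recognised}.

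On $B$ the shifted argument $|x|+a$ lies in the bounded interval $[a,3a]$, so the only source of $x$-dependence is the factor $1/|x|$, which belongs to $L^p(B,|x|^2\,\ud x)$ precisely for $p\in(1,3)$: this is what dictates the sharp range. Writing $\Xi u(x)=\frac{1}{\ii|x|}(\widehat{X}\ast rM_u)(|x|+a)$ and translating via $\sigma=\rho+a$ converts the estimate to
\[
\|\mathbf{1}_B\,\Xi u\|_{L^p(\mathbb{R}^3)}^p \;=\; 4\pi\int_a^{3a}|(\widehat{X}\ast rM_u)(\sigma)|^p(\sigma-a)^{2-p}\,\ud\sigma\,.
\]
For $p\in(\tfrac{3}{2},3)$ the shifted weight $|\sigma-a|^{2-p}$ is $A_p$ on $\mathbb{R}$ by translation-invariance of Muckenhaupt classes, and Theorem~\ref{thm:harmonic_analysis_results}(ii) applied to the Calder\'on-Zygmund convolution with $\widehat{X}$ reduces the task to controlling $\int_{\mathbb{R}}|rM_u(r)|^p|r-a|^{2-p}\,\ud r$ by $\|u\|_{L^p(\mathbb{R}^3)}^p$; this is achieved by splitting the $r$-integration at $|r-a|=a/2$, using Jensen's inequality on each piece, and exploiting the vanishing $rM_u(0)=0$ to neutralise the singularity of $|r-a|^{2-p}$ at $r=a$. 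For $p\in(1,\tfrac{3}{2})$ one instead integrates by parts once in $\lambda$ and argues analogously with the $A_p$-weight $|\sigma-a|^{2-2p}$, so that the endpoint $p=\tfrac{3}{2}$ then follows by Riesz-Thorin interpolation.

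The main technical obstacle is the control of the large-$|y|$ contribution to the $\Xi_1$-piece that appears after integration by parts for $p\in(1,\tfrac{3}{2})$, because the function $r^2M_u$ is even in $r$ and therefore the convolution $(\widehat{X}\ast r^2M_u)(\sigma)$ does not benefit from the cancellation $|e^{\ii\lambda|y|}-e^{-\ii\lambda|y|}|\leqslant 2(\lambda|y|)^{1-\theta}$ that made the kernel estimate of Lemma~\ref{lem:kernel-1} effective, and moreover $r^{2p}|M_u|^p|r-a|^{2-2p}$ is not pointwise dominated by $r^2|M_u|^p$ uniformly in $r$ at infinity. To bypass this one exploits the decay $|\widehat{X}(\sigma-r)|\leqslant C/|r|$ for $|r|\gg\sigma\sim a$, which converts the far-field contribution to $(\widehat{X}\ast r^2M_u)(\sigma)$ into an integral of the form $\int_{|r|\gg a}r|M_u(r)|\,\ud r$; H\"older in the radial variable with the conjugate weight $r^{(p-2)/(p-1)}$, which is integrable at infinity exactly when $p<\tfrac{3}{2}$, then dominates this piece by $\|u\|_{L^p(\mathbb{R}^3)}$. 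The residual near-field contribution $|r|\lesssim a$ is handled by the weighted Calder\'on-Zygmund argument sketched above, completing the proof.
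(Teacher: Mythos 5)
Your treatment of the annulus $\{2a\leqslant|x|\leqslant R\}$ is fine (it is in fact already covered by the preceding exterior lemma applied with radius $2a>a$), but the inner-ball argument contains a genuine gap, and it is precisely where the real content of the lemma lies. After invoking the weighted Calder\'on--Zygmund inequality with the translated weight $|\sigma-a|^{2-p}$, you are left with the task of proving
$\int_{\mathbb{R}}|r\,M_u(r)|^{p}\,|r-a|^{2-p}\,\ud r\;\lesssim\;\|u\|_{L^p(\mathbb{R}^3)}^{p}\;=\;4\pi\int_0^{\infty}|M_u(r)|^{p}\,r^{2}\,\ud r$,
and this inequality is simply false for every $p\in(\frac32,3)$, $p\neq 2$. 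For $p\in(2,3)$ take $u$ radial with $M_u=\mathbf{1}_{[a-\delta,a+\delta]}$: the left side is of order $a^{p}\delta^{3-p}$ while the right side is of order $a^{2}\delta$, so the ratio blows up like $(a/\delta)^{p-2}$ as $\delta\downarrow0$. For $p\in(\frac32,2)$ take $M_u=\mathbf{1}_{[\varepsilon,2\varepsilon]}$ with $\varepsilon\ll a$: the left side is of order $\varepsilon^{p+1}a^{2-p}$, the right side of order $\varepsilon^{3}$, and the ratio blows up like $(a/\varepsilon)^{2-p}$. The devices you invoke cannot repair this: the vanishing of $rM_u$ at $r=0$ is a pointwise statement at one point and is irrelevant to the singularity of the weight at $r=a$ (and is satisfied by both counterexamples), and Jensen's inequality goes in the wrong direction for what is needed. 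The same weight mismatch (the natural weight on the input side is $r^{2}=|r|^{p}\cdot|r|^{2-p}$, not $|r|^{p}\,|r-a|^{2-p}$) reappears in your $p\in(1,\frac32)$ variant after the integration by parts, so the shifted-Muckenhaupt route breaks down on the inner ball for the whole range. Your ``main technical obstacle'' paragraph addresses a different (far-field) issue and does not touch this one.

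The paper avoids weighted inequalities entirely on the region $|x|\leqslant R$. It regards $\Xi$ as an integral operator with kernel $K_\Xi(x,y)=\big(\widehat{X}(|x|+a+|y|)-\widehat{X}(|x|+a-|y|)\big)/(|x||y|)$, splits the input as $\mathbf{1}_{\{|y|\geqslant10R\}}u+\mathbf{1}_{\{|y|\leqslant10R\}}u$, disposes of the far part by the rapid decay of $\widehat{X}$, and for the near part proves the uniform bound $\|K_\Xi(x,\cdot)\mathbf{1}_{\{|\cdot|\leqslant10R\}}\|_{L^{p'}(\mathbb{R}^3)}\leqslant C_p/|x|$ — using $\widehat{X}\in L^{p'}(\mathbb{R})$ directly for $p\in(2,3)$, a H\"older interpolation in the radial variable for $p\in(\frac32,2)$, and, for $p\in(1,\frac32)$, one integration by parts in $\lambda$ together with the cancellation $|e^{-\ii\lambda|y|}-e^{\ii\lambda|y|}|\leqslant2(\lambda|y|)^{1-\theta}$. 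The conclusion then follows from H\"older's inequality in $y$ and the fact that $|x|^{-1}\in L^{p}(\{|x|\leqslant R\})$ exactly when $p<3$, with $p=\frac32,2$ recovered by interpolation. Some argument of this Schur/H\"older type (or another device that respects the three-dimensional weight $r^{2}$ on the input side) is what your inner-ball step needs; as written, the proof does not go through.
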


\begin{proof}
By means of \eqref{eq:rMu} we see that the map 
$u\mapsto\Xi\,u$ defined in \eqref{eq:def_Xi} is an integral operator with 
kernel $\frac{\ii}{4\pi}\,K_\Xi(x,y)$ given by
\begin{equation}\label{eq:kernelK}
 K_\Xi(x,y)\;:=\frac{1}{\,\sqrt{2\pi}\,}
 \int_0^{+\infty} \frac{e^{-\ii\lambda(|x|+a)}(e^{-\ii\lambda|y|}-e^{\ii\lambda |y|})}
{|x|\,|y|}\,X(\lambda)\,\ud \lambda\,.
\end{equation}
Since $X(\lambda)=\lambda^{-1}\omega_>(\lambda)$ or $\lambda^{-2}\omega_>(\lambda)$, 
obviously $\rho\mapsto\widehat {X}(\r)$ is smooth for $\r\not=0$  and with rapid 
decrease as $\r \to +\infty$. Moreover, since $X \in L^q(\mathbb{R})$ for any 
$q>1$,  $\widehat{X} \in L^{p}(\mathbb{R})$ for any $p\in[2,\infty)$, 
owing to the Hausdorff-Young inequality.  
Thus, $\widehat{X} \in L^{p}(\mathbb{R})$ for any $p\in[1,\infty)$.

We shall prove the Lemma by splitting
\begin{equation}
\begin{split}
 &\!\!\!\|\mathbf{1}_{\{|x|\leqslant R\}}\,\Xi\,u\|_{L^p(\mathbb{R}^3)}^p\;  \\
 &=\;\|\mathbf{1}_{\{|x|\leqslant R\}}\,\Xi\,
\mathbf{1}_{\{|x|\geqslant 10 R\}}u\|_{L^p(\mathbb{R}^3)}^p+
\|\mathbf{1}_{\{|x|\leqslant R\}}\,\Xi\,
\mathbf{1}_{\{|x|\leqslant 10 R\}}u\|_{L^p(\mathbb{R}^3)}^p
\end{split}
 \end{equation}
and estimating separately the two summands in the r.h.s.~above.

When $R>100 a$, $|x|\leqslant R$, and $|y|\geqslant 10R$, one has 
$|\widehat{X}(|x|\pm|y|+a)|\leqslant C_n\langle y\rangle^{-n}$ for any 
$n\in\mathbb{N}$ and suitable constants $C_n>0$, which follows from 
the rapid decrease of $\widehat{X}$. Then the identity
\begin{equation}\label{eq:Xhat}
 K_\Xi(x,y)\;=\;\frac{\,\widehat{X}(|x|+a+|y|)-\widehat{X}(|x|+a-|y|)\,}{|x||y|}
\end{equation}
shows that in this regime 
$|K_\Xi(x,y)|\leqslant 2\,C_n |x|^{-1}|y|^{-1}\langle y\rangle^{-n}$. 
Therefore, for any  $p\in (1,3)$ and corresponding $n$ large enough,
\[
 \|K_\Xi(x,\cdot)\mathbf{1}_{\{|\cdot|\geqslant 10R\}}\|_{L^{p'}(\mathbb{R}^3)}\;
\leqslant\;\frac{\,2C_n\,}{|x|}\Big(\int_{|y|\geqslant 10R}\frac{1}{\,|y|^{p'}
\langle y\rangle^{np'}}\Big)^{\!1/{p'}}\;
\leqslant\;\frac{C_p}{|x|}
\]
for some constant $C_p>0$. The latter bound and H\"{o}lder's inequality then yield, 
for any $p\in(1,3)$, 
\begin{equation}\label{eq:p1}
 \begin{split}
  \|&\mathbf{1}_{\{|\cdot|\leqslant R\}}\,\Xi\,
\mathbf{1}_{\{|\cdot|\geqslant 10 R\}}u\|_{L^p(\mathbb{R}^3)}^p\; \\
  &\quad\leqslant\;\int_{\mathbb{R}^3}\ud x\,\mathbf{1}_{\{|x|\leqslant R\}}\!(x)\,
\Big|\int_{\mathbb{R}^3}\ud y\,K_\Xi(x,y)\mathbf{1}_{\{|y|\geqslant 10 R\}}\!(y)\,u(y)
\Big|^{p} \\
  &\quad\leqslant\;C_p\,
\Big\|\frac{\,\mathbf{1}_{\{|x|\leqslant R\}}}{|x|}\Big\|_{L^p(\mathbb{R}^3)}^p\,
\|\mathbf{1}_{\{|\cdot|\geqslant 10 R\}}u\|_{L^p(\mathbb{R}^3)}^p \\
  &\quad=\;C_p'\,\|\mathbf{1}_{\{|\cdot|\geqslant 10 R\}}u\|_{L^p(\mathbb{R}^3)}^p
 \end{split}
\end{equation}
for some constant $C_p'>0$.

This provides the first partial estimate for the proof of \eqref{eq:Xi_int}: the proof is completed when we show in addition that
\begin{equation}\label{eq:p2}
 \|\mathbf{1}_{\{|\cdot|\leqslant R\}}\,\Xi\,\mathbf{1}_{\{|\cdot|\leqslant 10 R\}}u\|_{L^p(\mathbb{R}^3)}^p\;\leqslant\;C_p\,\|\mathbf{1}_{\{|\cdot|\leqslant 10 R\}}u\|_{L^p(\mathbb{R}^3)}^p
\end{equation}
for any $p\in(1,3)$ and suitable constant $C_p>0$. We shall establish \eqref{eq:p2} above in three separate regimes: $p\in(2,3)$, $p\in(\frac{3}{2},2)$, and $p\in(1,\frac{3}{2})$. By interpolation, also the cases $p=\frac{3}{2}$ and $p=2$ will then be covered.

From \eqref{eq:Xhat} we estimate
\begin{equation}\label{eq:Kpprimenorm}
 \begin{split}
  &\|K_\Xi(x,\cdot)\mathbf{1}_{\{|\cdot|\leqslant 10R\}}\|_{L^{p'}(\mathbb{R}^3)}\;
\\
  &\qquad\leqslant\;\frac{\:(4\pi)^{\frac{1}{p'}}}{|x|}\sum_{\pm}
\Big(\int_0^{10R}\!\ud\rho\,\rho^{2-p'}\,
|\widehat{X}(|x|+a\pm\rho)|^{p'}\Big)^{\!1/{p'}}. \end{split}
\end{equation}
When $p\in(2,3)$, and hence $p'\in(\frac{3}{2},2)$, we have 
$\rho^{2-p'}\leqslant(10R)^{2-p'}$ for every $\rho\in[0,10R]$, 
and \eqref{eq:Kpprimenorm} then yields
\begin{equation}\label{eq:estK2-3}
 \|K_\Xi(x,\cdot)\mathbf{1}_{\{|\cdot|\leqslant 10R\}}\|_{L^{p'}(\mathbb{R}^3)}\;\leqslant \;C\,\frac{\;\|\widehat{X}\|_{L^{p'}(\mathbb{R})}}{|x|}\,.
\end{equation}
When instead $p\in(\frac{3}{2},2)$, and hence $p'\in(2,3)$, the 
r.h.s.~of \eqref{eq:Kpprimenorm} is estimated with H\"{o}lder's inequality, 
with weights $q=\frac{p'-1}{2(p'-2)}$ and $q'=\frac{p'-1}{3-p'}$, as
\begin{equation}\label{eq:estK32-2}
\begin{split}
 \|K_\Xi(x,\cdot)&\mathbf{1}_{\{|\cdot|\leqslant 10R\}}\|_{L^{p'}(\mathbb{R}^3)}\; \\
 &\leqslant\;\frac{C}{|x|}\,
\Big(\int_0^{10R}\frac{\ud \rho}{\,\rho^{\frac{p'-1}{2}}}\Big)^{\! \frac{2(2-p)}{p'}}\,
\|\widehat{X}\|_{L^{\!\frac{p'(p'-1)}{3-p'}}\!(\mathbb{R})}\,.
 \end{split}
\end{equation}

In order to obtain analogous estimates to \eqref{eq:estK2-3}-\eqref{eq:estK32-2} 
in the remaining regime $p\in(1,\frac{3}{2})$, it is convenient to integrate by 
parts in \eqref{eq:kernelK}, using 
$e^{-\ii\lambda(|x|+a)}=\ii(|x|+a)^{-1}\partial_\lambda e^{-\ii\lambda(|x|+a)}$, 
so as to split
\begin{equation}\label{eq:KK1K2}
 K_\Xi(x,y)\;=\;K_\Xi^{(1)}(x,y)+K_\Xi^{(2)}(x,y)
\end{equation}
with
\begin{equation}\label{eq:defKXi1}
 \begin{split}
  K_\Xi^{(1)}&(x,y) \\
  :=&\;\frac{-1}{\sqrt{2\pi}\,|x|(|x|+a)}
\int_0^{+\infty}\!\!(e^{-\ii\lambda(|x|+a+|y|)}+
e^{\ii\lambda(|x|+a-|y|})\,X(\lambda)\,\ud\lambda \\
  =&\;\frac{-1}{\,|x|(|x|+a)}\,
\big(\widehat{X}(|x|+a+|y|)+\widehat{X}(|x|+a-|y|)\big)
 \end{split}
\end{equation}
and
\begin{equation}\label{eq:defKXi2}
 \begin{split}
  &K_\Xi^{(2)}(x,y) \\
  &\quad:=\;\frac{-\ii}{\sqrt{2\pi}\,|x|(|x|+a)|y|}
\int_0^{+\infty}\!\!\!e^{-\ii\lambda(|x|+a)}
(e^{-\ii\lambda|y|}-e^{\ii\lambda|y|})\,X'(\lambda)\,\ud\lambda\,.
 \end{split}
\end{equation}

Using \eqref{eq:defKXi1} we get
\begin{equation}\label{eq:estK32-3}
 \begin{split}
  &\|K_\Xi^{(1)}(x,\cdot)\mathbf{1}_{\{|\cdot|\leqslant 10R\}}\|_{L^{p'}(\mathbb{R}^3)}\\
&\qquad\leqslant\;\frac{\:(4\pi)^{\frac{1}{p'}}}{\,|x|(|x|+a)}\sum_{\pm}
\Big(\int_0^{10R}\!\ud\rho\,\rho^{2}\,|\widehat{X}(|x|+a\pm\rho)|^{p'}\Big)^{\!1/{p'}} \\
&\qquad\leqslant\;\frac{\,2\,(4\pi)^{\frac{1}{p'}}(10R)^2}{\,|x|(|x|+a)}\,
\|\widehat{X}\|_{L^{p'}(\mathbb{R})}\,.
 \end{split}
\end{equation}
As for $K_\Xi^{(2)}$, we exploit \eqref{eq:defKXi2} using the 
bound $|X'(\lambda)|\leqslant C\langle\lambda\rangle^{ -2}$ for some 
$C>0$, which follows 
from the fact that $X(\lambda)=\lambda^{-1}\omega_>(\lambda)$ or 
$\lambda^{-2}\omega_>(\lambda)$,
and the bound $|e^{-\ii\lambda|y|}-e^{\ii\lambda|y|}|\leqslant 2(\lambda|y|)^{1-\theta}$ 
$\forall\theta\in(0,1)$. 
Thus,
\[
 |K_\Xi^{(2)}(x,y)|\;\leqslant\;\frac{1}{\,|x|(|x|+a)|y|}
\int_0^{+\infty}\!\!2(\lambda|y|)^{1-\theta}\,| X'(\lambda)|\,\ud\lambda\;
\leqslant\;\frac{C}{\,|x|(|x|+a)|y|^\theta}\,,
\]
whence
\begin{equation}\label{eq:estK32-4}
 \|K_\Xi^{(2)}(x,\cdot)\mathbf{1}_{\{|\cdot|\leqslant 10R\}}\|_{L^{p'}(\mathbb{R}^3)}\;\leqslant\; \frac{C'}{\,|x|(|x|+a)}\,\Big\|\,\frac{\mathbf{1}_{\{|y|\leqslant 10R\}}}{|y|^\theta}\Big\|_{L^{p'}(\mathbb{R}^3)}\,,
\end{equation}
for suitable constants $C,C'>0$, 
where the $L^{p'}$-norm in the r.h.s.~is finite whenever $ \theta p'<3$.

The estimates \eqref{eq:estK2-3}, \eqref{eq:estK32-2}, \eqref{eq:KK1K2}, \eqref{eq:estK32-3}, and \eqref{eq:estK32-4} together then imply that, for some constant $C_p>0$,
\begin{equation}\label{eq:anotherboundK}
 \|K_\Xi(x,\cdot)\mathbf{1}_{\{|\cdot|\leqslant 10R\}}\|_{L^{p'}(\mathbb{R}^3)}\;\leqslant \;\frac{C_p}{|x|}\,,\qquad p\in(\textstyle{1,\frac{3}{2}})\cup(\textstyle{\frac{3}{2}},2)\cup(2,3)\,.
\end{equation}
Then \eqref{eq:anotherboundK} and H\"{o}lder's inequality yield
\begin{equation}
 \begin{split}
  \|&\mathbf{1}_{\{|\cdot|\leqslant R\}}\,\Xi\,\mathbf{1}_{\{|\cdot|\leqslant 10 R\}}u\|_{L^p(\mathbb{R}^3)}^p \\
  &\quad\leqslant\;\int_{\mathbb{R}^3}\ud x\,\mathbf{1}_{\{|x|\leqslant R\}}\!(x)\,
\Big|\int_{\mathbb{R}^3}\ud y\,K_\Xi(x,y)
\mathbf{1}_{\{|y|\leqslant 10 R\}}\!(y)\,u(y)\Big|^{p} \\
  &\quad\leqslant\;C_p\,\Big\|\frac{\,\mathbf{1}_{\{|x|\leqslant R\}}}{|x|}\Big\|_{L^p(\mathbb{R}^3)}^p\,\|\mathbf{1}_{\{|\cdot|\leqslant 10 R\}}u\|_{L^p(\mathbb{R}^3)}^p \\
  &=\;C_p'\,\|\mathbf{1}_{\{|\cdot|\leqslant 10 R\}}u\|_{L^p(\mathbb{R}^3)}^p
 \end{split}
\end{equation}
for some constant $C_p'>0$.

We have thus obtained precisely the desired estimate \eqref{eq:p2}. 
This completes the proof because, as commented already,  \eqref{eq:p1} 
and \eqref{eq:p2} together give \eqref{eq:Xi_int}.
\end{proof}

\section{Unboundedness in $L^1(\mathbb{R}^3)$ and 
$L^p(\mathbb{R}^3)$, $p\geqslant 3$.}\label{sec:unbound}

In this Section we complete the proof of Theorem \ref{ref:main_thm} as far as 
the unboundedness part is concerned, hence showing that the wave operators 
$W^\pm_{\alpha, Y}$ 
are unbounded in $L^p(\mathbb{R}^3)$ whenever $p\in\{1\}\cup[3,+\infty]$. 
As commented already at the beginning of Section \ref{sec:Lp-bounds}, it is enough 
to prove this property for $W^{+}_{\alpha,Y}$: then the same conclusion 
follows for $W^{-}_{\alpha,Y}$.

\subsection{Unboundedness of $W^{+}_{\alpha,Y}$ in 
$L^p(\mathbb{R}^3)$ for $p\in[3,+\infty]$}

Because of the $L^p$-boundedness of $W^{+}_{\alpha,Y}$ for $p\in(1,3)$, it is clear 
that we only need to prove that $W^{+}_{\alpha,Y}$ is unbounded in $L^3(\mathbb{R}^3)$, 
for any $L^p$-boundedness for $p>3$ would then  contradict, by interpolation, 
the unboundedness when $p=3$.

Let us assume for contradiction that $W_{\alpha,Y}^+$ is bounded 
in $L^3(\mathbb{R}^3)$, which by duality implies also that 
$(W^+_{\alpha,Y})^*$ is bounded in $L^{3/2}(\mathbb{R}^3)$.

Theorem \ref{thm:general_properties}(iv) guarantees that we may choose 
$c>0$ sufficiently large so as to make the matrix $\Ga_{\alpha,Y}(\ii c)$ 
non-singular. Correspondingly, 
$R_0(-c^2)=(H_0+c^2\mathbbm{1})^{-1}$ maps continuously $L^{3/2}(\mathbb{R}^3)$ into 
$W^{2,3/2}(\mathbb{R}^3)$ and hence also $L^{3/2}(\mathbb{R}^3)$ into 
$L^q(\mathbb{R}^3)$ for any $q\in[\frac{3}{2},\infty)$, 
owing to a Sobolev embedding.

Thus, the $L^{3/2}$-boundedness of $(W^+_{\alpha,Y})^*$, the 
$L^{3/2}\to L^3$-boundedness of $R_0(-c^2)$, and the $L^3$-boundedness of 
$W^+_{\alpha,Y}$ imply, by means of the intertwining property \eqref{eq:intertwining}, 
that also the operator
\[
R_{\alpha,Y}(-c^2)P_{\mathrm{ac}}(H_{\alpha,Y})\;
=\;W_{\alpha,Y}^+R_{0}(-c^2)(W_{\alpha,Y}^+)^\ast
\]
is continuous from $L^{3/2}(\mathbb{R}^3)$ to $L^3(\mathbb{R}^3)$.
As a consequence, we read out from the the resolvent identity 
\eqref{eq:resolvent_identity} that for any 
$u\in L^2_{\mathrm{ac}}(H_{\alpha,Y})\cap L^{3/2}(\mathbb{R}^3)$ 
the function
\[\tag{*}
\begin{split}
R_{\alpha,Y}(-c^2) &u - R_0(-c^2)u \\
&=\; 
\sum_{j,k=1}^N (\Gamma_{\alpha,Y}(\ii c)^{-1})_{jk}
\;\mathcal{G}_{\ii c}^{y_j}(x) 
\!\int_{\mathbb{R}^3} \mathcal{G}_{\ii c}^{y_k}(y)\,u(y)\,\ud y
\end{split}
\]
must belong to $L^3(\mathbb{R}^3)$. 

Let us make now a choice of $u$ for which the r.h.s.~of (*) above \emph{fails} 
instead to belong to $L^3(\mathbb{R}^3)$. Since $u\in L^2_{\mathrm{ac}}(H_{\alpha,Y})$, 
then $u$ is orthogonal to all the eigenfunctions of $H_{\alpha,Y}$, that is, owing to 
Theorem \ref{thm:general_properties}(iv), $u$ is orthogonal to an 
(at most) $N$-dimensional subspace spanned by suitable linear combinations of 
$\mathcal{G}_{\ii\lambda_k}^{y_1},\dots,\mathcal{G}_{\ii\lambda_k}^{y_N}$ for 
$k\in\{1,\dots,N\}$, where $-\lambda^2_1,\dots,-\lambda^2_N$ are the eigenvalues of 
$H_{\alpha,Y}$. Because of our choice of $c$, in such an orthogonal complement there 
is surely $u$ which is \emph{not} orthogonal to the 
$\overline{\mathcal{G}_{\ii c}^{y_k}}$\,'s, namely,
\[
\int_{\mathbb{R}^3} \mathcal{G}_{\ii c}^{y_k}(y)\,u(y)\,\ud y\;\neq\; 
0\qquad \forall k\in\{1\ldots ,N\}\,.
\]
(In fact, such a $u$ can be also found in 
$C^\infty_0(\mathbb{R}^3)\cap L^2_{\mathrm{ac}}(H_{\alpha,Y})$: indeed, the point 
spectral subspace of $H_{\alpha,Y}$ is at most $N$-dimensional, whereas the set of 
$u$'s that satisfy the non-vanishing condition above is open in the topology of 
the space of test functions.)
For such $u$, because of the invertibility of the matrix 
$\Ga_{\alpha,Y}(\ii c)$, the expression
\[
\sum_{j,k=1}^N (\Gamma_{\alpha,Y}(\ii c)^{-1})_{jk}
\;\mathcal{G}_{\ii c}^{y_j}(x) 
\!\int_{\mathbb{R}^3} \mathcal{G}_{\ii c}^{y_k}(y)\,u(y)\,\ud y
\]
is a linear combination of the $\mathcal{G}_{\ii c}^{y_j}$'s with 
at least one non-zero coefficient, say, the one for $j=j_0$. Therefore, in a 
sufficiently small neighbourhood of $y_{j_0}$ (so small as not to contain any 
other of the $y_j$'s of $Y$, for $j\neq j_0$) the latter function must be of 
the form $c_{j_0}|x-y_{j_0}|^{-1}+R(x)$ for some constant $c_{j_0}\neq 0$ and 
some bounded (in fact, smooth) function $R(x)$. This would mean that in the 
considered neighbourhood of $y_{j_0}$ $R_{\alpha,Y}(-c^2) u - R_0(-c^2)u$ is 
\emph{not} a $L^3$-function, a contradiction.

\subsection{Unboundedness of $W^{+}_{\alpha,Y}$ in $L^1(\mathbb{R}^3)$} 
For this case the following preliminary observation is going to be useful.

\begin{rem}\label{re:hi}
Let $g\in C^{\infty}_0(\mathbb{R})$. Then
\begin{equation}\label{eq:Hilb-Fourier}
\frac{2}{\sqrt{2\pi}}\int_0^{+\infty}\!\! e^{-\ii\lambda \r}\,\widehat{g}(-\lambda)\,\ud\lambda\;=\;g(\rho)-\ii (\Hg g)(\r)\,,
\end{equation}
where $g\mapsto \cH g$ denotes the Hilbert transform, defined as
\begin{equation}\label{eq:def_Hilbert}
(\Hg g)(\r)\;:=\;\frac{1}{\pi}\,\mathrm{P.V.}\!\int_{-\infty}^{+\infty}\frac{g(\tau)}{\r-\tau}\,\ud\tau.
\end{equation}
Indeed, following from the fact \cite[Eq.~(5.1.13)]{Grafakos_ClassialFourier}
that the Hilbert transform is the Fourier multiplier
\[
\widehat{(\Hg g)}(\lambda)\;=\;-\ii\,\mathrm{sgn}(\lambda)\,\widehat{g}(\lambda)\,,
\]
one has
\begin{equation*}
 \begin{split}
  g(\rho)-\ii (\Hg g)(\r)\;&=\;\frac{1}{\sqrt{2\pi}}\int_{-\infty}^{+\infty}e^{\ii\r\lambda}(1-\mathrm{sgn}(\lambda))\,\widehat{g}(\lambda)\,\ud\lambda\\
&=\;\frac{2}{\sqrt{2\pi}}\int_{-\infty}^0e^{\ii\r\lambda}\,\widehat{g}(\lambda)\,\ud\lambda \;=\;\frac{2}{\sqrt{2\pi}}\int_0^{+\infty}\!\! e^{-\ii\lambda \r}\,\widehat{g}(-\lambda)\,\ud\lambda\,.
 \end{split}
\end{equation*}
\end{rem}

Let us now prove the fact that the wave operator $W_{\alpha,Y}^+$ is unbounded in $L^1(\mathbb{R}^3)$.
We may assume without loss of generality to take the set $Y=\{y_1,\dots,y_N\}$ of interaction centres so that $y_1=0$.

Let $u\in C^{\infty}_0(\mathbb{R}^3)$ be rotationally invariant, and we write $u(x)=f(|x|)$ for some $f:[0,+\infty)\rightarrow\mathbb{C}$ which is smooth and compactly supported. We extend $f$ to an even function on the whole $\mathbb{R}$. By construction, $f(r)=M_u(r)$, the spherical mean of $u$.

Our starting point is the stationary representation \eqref{eq:stationary_rep_w} for $W^{+}_{\a, Y} u$, that is,
\begin{equation}\label{eq:WOK}
W^{+}_{\a, Y} u \;=\; u  
+\sum_{j,k=1}^N T_{y_j} \Omega_{jk}T_{-y_k}u\,,
\end{equation}
and for each $j,k\in\{1,\ldots,N\}$ we set $K_{jk}u:=T_{y_j} \Omega_{jk}T_{-y_k}u$. Explicitly,
\begin{equation}\label{Kiju}
\begin{split}
 &(K_{jk}u)(x) \\
&\quad=\;\frac{1}{\ii\pi}\int_{\mathbb{R}^3}\!\ud y\,
u(y)\!\int_0^{+\infty}\!\!\ud \lambda\,F_{jk}(\lambda)\,
\frac{e^{-\ii\lambda|x-y_j|}}{4\pi|x-y_j|}\,
\frac{e^{\ii\lambda|y-y_k|}-e^{-\ii\lambda|y-y_k|}}{4\pi|y-y_k|} \,,
\end{split}
\end{equation}
where we used \eqref{eq:def_of_Omega_jk} and \eqref{eq:def_Fjk}.

We now proceed by re-scaling $u$ and $f$ as
\begin{equation}
 u_\ep(x)\;:=\; \ep^{-3}u(\ep^{-1}x)\,,\qquad f_\ep(r)\;:=\; \ep^{-3}f(\ep^{-1}r)\,,\qquad \varepsilon>0\,,
\end{equation}
which makes the norms
\begin{equation}\label{eq:norm-same-scaling}
 4\pi\|r^2 f_{\ep}\|_{L^1(0,+\infty)}\;=\;\Vert u_{\ep}\Vert_{L^1(\mathbb{R}^3)}\;=\;\Vert u\Vert_{L^1(\mathbb{R}^3)}\;=\;4\pi\|r^2 f\|_{L^1(0,+\infty)}
\end{equation}
$\ep$-independent. This re-scaling is devised so as to make all interaction centres but $y_1$ ineffective, because $u_\varepsilon$ is only bumped around the origin, and then to reduce the question to the unboundedness of the wave operator relative to a single-centre point interaction Hamiltonian, for which the answer will then come by direct inspection.

From \eqref{Kiju} and \eqref{eq:norm-same-scaling},
\begin{equation}\label{Kiju-eps}
\begin{split}
 &(K_{jk}u_\varepsilon)(x) \\
&\!\!\!\!=\;\frac{1}{\;\ii\pi\varepsilon^2}\int_{\mathbb{R}^3}\!\ud y\,u(y)\!\int_0^{+\infty}\!\!\ud \lambda\,F_{jk}({\textstyle\frac{\lambda}{\varepsilon}})\,
\frac{e^{-\ii\frac{\lambda}{\varepsilon}|x-y_j|}}{4\pi|x-y_j|}\,
\frac{e^{\ii\lambda|y-\frac{y_k}{\varepsilon}|}-e^{-\ii\lambda|y-\frac{y_k}{\varepsilon}|}}{4\pi|y- \frac{y_k}{\varepsilon}|} \,,
\end{split}
\end{equation}
having made the changes of variables $y \to \ep y$ and $\lambda \to \ep^{-1}\lambda$ in the integrations. If we now define, for arbitrary $v\in C^\infty_0(\mathbb{R}^3)$,
\begin{equation}\label{eq:de_k_scaled}
\begin{split}
 &(K_{jk}^{(\varepsilon)}v)(x) \\
&=\;\frac{1}{\ii\pi}\int_{\mathbb{R}^3}\!\ud y\,v(y)\!\int_0^{+\infty}\!\!\ud \lambda\,F_{jk}({\textstyle\frac{\lambda}{\varepsilon}})\,
\frac{e^{-\ii\lambda|x-\frac{y_j}{\varepsilon}|}}{4\pi|x-\frac{y_j}{\varepsilon}|}\,
\frac{e^{\ii\lambda|y-\frac{y_k}{\varepsilon}|}-e^{-\ii\lambda|y-\frac{y_k}{\varepsilon}|}}{4\pi|y-\frac{y_k}{\varepsilon}|} \,,
\end{split}
\end{equation}
then for the considered $u$ and its re-scaled $u_\varepsilon$ we have
\begin{equation}\label{eq:equiv_multi_k}
\Big\Vert\sum_{j,k=1}^NK_{jk}u_{\ep}\,\Big\Vert_{L^1(\mathbb{R}^3)}\;=\;\Big\Vert\sum_{j,k=1}^NK^{(\ep)}_{jk}u\,\Big\Vert_{L^1(\mathbb{R}^3)}\,,
\end{equation}
which follows by making the change of variable $x\mapsto \varepsilon x$ in the integration on the l.h.s.

We now want to study the contribution of each term $K_{jk}^{(\ep)}u$ as $\ep\downarrow 0$. We shall establish the following limits
\begin{equation}\label{eq:2limits}
 \begin{split}
  \lim_{\varepsilon\downarrow 0}\,(K_{11}^{(\ep)}u)(x)\;&=\;-\sqrt{\frac{2}{\pi}\,}\int_0^{+\infty}\frac{e^{-\ii\lambda|x|}}{|x|}\,(\widehat{rf})(-\lambda)\,\ud\lambda\,, \\
  \lim_{\varepsilon\downarrow 0}\,(K_{jk}^{(\ep)}u)(x)\;&=\;0\,,\qquad\quad (j,k)\neq(1,1)\,,
 \end{split}
\end{equation}
pointwise for a.e.~$x\in\mathbb{R}^3$. 

To this aim, we first find the bound
\begin{equation}\label{eq:estimate_Gl_G-l}
\int_{\mathbb{R}^3}\frac{\,e^{\ii\lambda|y-\frac{y_k}{\varepsilon}|}-e^{-\ii\lambda|y-\frac{y_k}{\varepsilon}|}}{4\pi|y-\frac{y_k}{\varepsilon}|} \, u(y)\,\ud y \;\leqslant_{|\;\cdot\;|} C_u \, \la\lambda\ra^{-2} \!\int_{\supp{u}}\frac{\ud y}{|y-\ep^{-1}y_k|}
\end{equation}
for some constant $C_u>0$ depending on $u$, but not on $\varepsilon$. \eqref{eq:estimate_Gl_G-l} is obvious for small $\lambda$'s, since $u$ is compactly supported, whereas for large $\lambda$'s we apply the distributional identity 
\begin{equation*}
(-\lap_y- \lambda^2)
\Big(
\frac{e^{\pm \ii\lambda|y-\frac{y_k}{\varepsilon}|}}{\,4\pi|y-\frac{y_k}{\varepsilon}|\,}\Big)
\;=\;\delta(y-{\textstyle \frac{y_k}{\varepsilon}})\,,
\end{equation*}
and integrating by parts we find 
\begin{equation*}
 \begin{split}
  \int_{\mathbb{R}^3}&\frac{\,e^{\ii\lambda|y-\frac{y_k}{\varepsilon}|}-e^{-\ii\lambda|y-\frac{y_k}{\varepsilon}|}}{4\pi|y-\frac{y_k}{\varepsilon}|} \, u(y)\,\ud y \\
 &=\; \lambda^{-2}
\int_{\mathbb{R}^3}\frac{\,e^{\ii\lambda|y-\frac{y_k}{\varepsilon}|}-e^{-\ii\lambda|y-\frac{y_k}{\varepsilon}|}}{4\pi|y-\frac{y_k}{\varepsilon}|} \,\,(-\lap) u(y) \ud y \\
&\leqslant_{|\;\cdot\;|} C_u \, \la\lambda\ra^{-2} \!\int_{\supp{u}}\frac{\ud y}{|y-\ep^{-1}y_k|}\,,
 \end{split}
\end{equation*}
thus, \eqref{eq:estimate_Gl_G-l} is proved.

Next, in order to prove the first of the limits \eqref{eq:2limits} by taking $\varepsilon\downarrow 0$ in \eqref{eq:de_k_scaled}, we use the asymptotics \eqref{eq:conv_Fjk_F}, namely,
\[
 \lim_{\ep\downarrow 0}\,F_{11}(\ep^{-1}\lambda)\;=\;-4\pi\ii\,,
\]
and we also recognise that the asymptotics as $\varepsilon\downarrow 0$ of the $y$-integration of \eqref{eq:de_k_scaled} is precisely the quantity 
\[
\int_{\mathbb{R}^3}\frac{\,e^{\ii\lambda|y|}-e^{-\ii\lambda|y|}}{4\pi|y|} \, u(y)\,\ud y\;=\;\sqrt{2\pi}\,(\widehat{rM_u})(-\lambda)\;=\;\sqrt{2\pi}\,(\widehat{r f})(-\lambda)
\]
discussed in \eqref{eq:rMu}. The limit $\varepsilon\downarrow 0$ can be exchanged with the 
integrations in $\lambda$ and in $y$ by dominated convergence, because 
$F_{11}(\frac{\lambda}{\varepsilon})$  is uniformly bounded (see Lemma \ref{lm:3-2}(i))  
and \eqref{eq:estimate_Gl_G-l} provides 
a majorant that is integrable in $\lambda$. Thus,
\[
\begin{split}
 \lim_{\varepsilon\downarrow 0}\,(K_{11}^{(\ep)}u)(x)\;&=\;\frac{1}{\ii\pi}(-4\pi\ii)\int_0^{+\infty}\frac{e^{-\ii\lambda|x|}}{4\pi|x|}\,\sqrt{2\pi}\,(\widehat{r f})(-\lambda)\,\ud\lambda \\
 &=\; 
 -\sqrt{\frac{2}{\pi}\,}\int_0^{+\infty}\frac{e^{-\ii\lambda|x|}}{|x|}\,(\widehat{rf})(-\lambda)\,\ud\lambda\,, \quad x \not=0\,,
 \end{split}
\]
and the first limit of \eqref{eq:2limits} is proved.

Concerning now \eqref{eq:2limits} when $(j,k)\neq (1,1)$, from our estimate \eqref{eq:estimate_Gl_G-l} we deduce
\begin{equation}\label{eq:full_bound_delta}
\begin{split}
\!\!\!|(K^{(\ep)}_{jk}u)(x)|\;&\leqslant\; \frac{C_u}{|x-\frac{y_j}{\varepsilon}|}\Big(\int_0^{+\infty}|F_{jk}({\textstyle\frac{\lambda}{\varepsilon}})|\,\langle\lambda\rangle^{-2}\,\ud\lambda\Big)\Big(\int_{\supp{u}}\frac{\ud y}{|y-\frac{y_k}{\varepsilon}|}\Big)\\
&\leqslant\; C'_u\,\Vert F_{jk}\Vert_{L^{\infty}(0,\infty)}\,\frac{1}{|x-\frac{y_j}{\varepsilon}|}\,\int_{\supp{u}}\frac{\ud y}{|y-\frac{y_k}{\varepsilon}|}
\end{split}
\end{equation}
for some new constant $C'_u>0$. Since at least one among $y_j$ and $y_k$ does not coincide with the origin, and since $u$ is compactly supported, we conclude at once that
\begin{equation*}
\lim_{\ep\downarrow 0}(K_{jk}^{(\ep)}u)(x)\;=\;0\, , \quad x\not=0 \ \ \mbox{if $j=0$.} 
\end{equation*}

The proof of \eqref{eq:2limits} is thus completed, and in turn \eqref{eq:2limits} implies
\begin{equation}\label{eq:puntual3}
\lim_{\ep\downarrow 0} 
\sum_{j,k=1}^N (K^{(\ep)}_{jk}u)(x)\;=\;-\frac{2}{\sqrt{2\pi}}\int_0^{+\infty}\frac{e^{-\ii\lambda|x|}}{|x|}\,(\widehat{rf})(-\lambda)\,\ud\lambda
\end{equation}
pointwise for a.e.~$x\in\mathbb{R}^3$. 

This latter fact allows us to take the limit $\varepsilon\downarrow 0$ in 
the r.h.s.~of \eqref{eq:equiv_multi_k}, provided that the $L^1$-norm is taken on 
compacts of $\mathbb{R}^3$. Indeed, for fixed $R>0$ and any sufficiently small 
$\varepsilon>0$ 
such that $|x-\frac{y_j}{\varepsilon}|\geqslant |x|$ for any 
$x \in \{x | |x| \leqslant R\} \cup \supp{u}$ and 
$j=1, \dots, N$, the estimate \eqref{eq:full_bound_delta} implies 
($\mathbf{1}_{\!R}\equiv$ the characteristic function of the ball $|x|\leqslant R$)
\[
\mathbf{1}_{\!R}(x)\!\sum_{j,k=1}^N|(K^{(\ep)}_{jk}u)(x)|\;
\leqslant\; N^2\,\frac{C_{u,R}}{|x|} \!\! \int_{\supp{u}}
\frac{\ud y}{|y|}\;\leqslant\; N^2\,\frac{C'_{u,R}}{|x|}
\]
for suitable constants $C_{u,R},C'_{u,R}>0$, which gives a majorant in $L^1(\mathbb{R}^3)$. Then, by \eqref{eq:puntual3} and dominated convergence,
\begin{equation}\label{eq:conv_uni_int}
\begin{split}
 \lim_{\ep\downarrow 0}\int_{|x|\leqslant R}
&\Big|\!\sum_{j,k=1}^N (K^{(\ep)}_{jk}u)(x)\Big|\,\ud x \\
&\qquad=\;\frac{2}{\sqrt{2\pi}}\int_{|x|\leqslant R}\ud x\;\Big|\!\int_0^{+\infty}\!\!\ud\lambda\,\frac{e^{-\ii\lambda|x|}}{|x|}\,(\widehat{rf})(-\lambda)\,\Big|\\
&\qquad=\;\sqrt{32\pi}\int_0^R\!\ud \rho\;\Big|\!\int_0^{\infty}\r\, e^{-\ii\lambda\r}\,(\widehat{rf})(-\lambda)\,\ud\lambda\,\Big|\,.
\end{split}
\end{equation}
An integration by parts and formula \eqref{eq:Hilb-Fourier} in Remark \ref{re:hi} yield
\begin{equation}\label{eq:furtherIntPart}
\begin{split}
 \sqrt{32\pi}\int_0^{\infty}\r\, e^{-\ii\lambda\r}\,(\widehat{rf})(-\lambda)\,\ud\lambda\;&=\;\sqrt{32\pi}\int_0^{+\infty}\!e^{-\ii\lambda\r}\,(\widehat{r^2f})(-\lambda)\,\ud\lambda\\
&=\;4\pi\,\big((r^2 f)(\rho)-\ii(\cH r^2 f)(\rho)\big).
\end{split}
\end{equation}
In the integration by parts the boundary term does not appear because $r\mapsto r f(r)$ is an odd function and $(\widehat{r f})(0)=0$. The conclusion from \eqref{eq:conv_uni_int} and \eqref{eq:furtherIntPart} is therefore
\begin{equation}\label{eq:final_conv}
 \lim_{\varepsilon\downarrow 0}\,\Big\|\,\mathbf{1}_{\!R}\!\!\sum_{j,k=1}^NK_{j,k}^{(\ep)}u\,\Big\|_{L^1(\mathbb{R}^3)}\;=\;4\pi\int_0^R \big|(\mathbbm{1}-\ii\Hg)(r^2f)(\r)\big|\,\ud\r\,.
\end{equation}

The proof of the $L^1$-unboundedness of $W_{\alpha,Y}^+$ is completed as follows. Suppose for contradiction that $W_{\alpha,Y}^+$ is instead  $L^1$-bounded. Then, for arbitrary $R>0$,

\begin{equation*}
\begin{split}
4\pi\int_0^R \big|(\mathbbm{1}-\ii\Hg)(r^2f)(\r)\big|\,\ud\r\;&=\;\lim_{\varepsilon\downarrow 0}\,\Big\|\,\mathbf{1}_{\!R}\!\!\sum_{j,k=1}^NK_{j,k}^{(\ep)}u\,\Big\|_{L^1(\mathbb{R}^3)}\\
&\leqslant\;\liminf_{\ep\downarrow 0}\,\Big\|\sum_{j,k=1}^NK_{j,k}^{(\ep)}u\,\Big\|_{L^1(\mathbb{R}^3)} \\
&=\; \liminf_{\ep\downarrow 0}\,\Big\|\sum_{j,k=1}^NK_{j,k}u_\varepsilon\,\Big\|_{L^1(\mathbb{R}^3)}\\
&=\;\liminf_{\ep\downarrow 0}\Vert(W_{\alpha,Y}^+-\mathbbm{1})u_{\ep}\Vert_{L^1(\mathbb{R}^3)}\\
&\leqslant\; (1+\Vert W_{\alpha,Y}^+\Vert_{\mathcal{B}(L^1(\mathbb{R}^3))})\Vert u_{\ep}\Vert_{L^1(\mathbb{R}^3)}\\
&\leqslant \;(1+\Vert W_{\alpha,Y}^+\Vert_{\mathcal{B}(L^1(\mathbb{R}^3))})\Vert r^2f\Vert_{L^1(0,\infty)}\,,
\end{split}
\end{equation*}
where we applied \eqref{eq:final_conv} in the first step, \eqref{eq:equiv_multi_k} in the third step, \eqref{eq:WOK} in the fourth step, the assumption of $L^1$-boundedness in the fifth step, and the scale invariance \eqref{eq:norm-same-scaling} in the last step. Moreover, due to the arbitrariness of $R$, the estimate above also implies
\begin{equation*}\tag{*}
 4\pi \|(\mathbbm{1}-\ii\Hg)(r^2f)\|_{L^1(0,\infty)}\;\leqslant \;(1+\Vert W_{\alpha,Y}\Vert_{\mathcal{B}(L^1(\mathbb{R}^3))})\Vert r^2f\Vert_{L^1(0,\infty)}\,.
\end{equation*}
However, the inequality (*) can be surely violated. Indeed it is well-known that the Hilbert transform on $\mathbb{R}$ maps \emph{even} functions into \emph{odd} functions, but fails to map even (and compactly supported) $L^1$-functions into $L^1$-functions, as one may see with (a suitable mollification, so as to make it $C^\infty_0$ and even, of) the function $f_0(r)=(r^2+1)^{-1}$, the Hilbert transform of which is $(\cH f_0)(r)=r(r^2+1)^{-1}$. Therefore (*) is a contradiction. The conclusion is that $W_{\alpha,Y}^+$ is necessarily \emph{unbounded} on $L^1(\mathbb{R}^3)$.

\section{$L^p$-convergence of wave operators}\label{sec:convergence_of_wave}

In this concluding Section we establish a result of $L^p$-convergence of wave operators in the limit when a regular Schr\"{o}dinger Hamiltonian converges to a singular point interaction Hamiltonian. This is part of the general picture outlined in Remark \ref{re:consistent} concerning the connection between two completely analogous results, on the one hand our main result (Theorem \ref{ref:main_thm}) of $L^p$-boundedness for $p\in(1,3)$ and $L^p$-unboundedness for $p\in\{1\}\cup[3,\infty)$ of the wave operators relative to the point interaction Hamiltonian $H_{\alpha,Y}$, and on the other hand the analogous results available in the previous literature, precisely in the same regimes of $p$, for wave operators relative to Schr\"{o}dinger Hamiltonians of the form $-\Delta+V$.

For concreteness we restrict our attention to the case $N=1$ and $\alpha=0$, thus taking without loss of generality $Y=\{0\}$.

Besides the corresponding point interaction Hamiltonian $H_{\alpha,Y}$ and wave operators $W^{\pm}_{\a, Y}$, let us consider the Schr\"{o}dinger operator $H=-\Delta+V$ on $L^2(\mathbb{R}^3)$, where $V$ is a real measurable potential such that $|V(x)|\leqslant C \ax^{-\delta}$ for some $\delta>5/2$. It is well known 
\cite{Kuroda1978_intro_scatt_theory}  that $H$ has a unique self-adjoint 
realisation on $L^2(\mathbb{R}^3)$ and that the wave operators
\begin{equation}\label{eq:WopSchr}
W^{\pm}\;:=\;s\textrm{-}\!\!\!\!\lim_{t\to\pm \infty} e^{\ii tH}e^{-\ii tH_0}
\end{equation}
relative to the pair $(H,H_0)$ exist and are complete in $L^2(\mathbb{R}^3)$; 
$W^{\pm}$ extend to bounded operators on $L^p(\mathbb{R}^3)$ in the following regimes: for all 
$p\in[1,+\infty]$ if zero is neither a resonance nor eigenvalue of $H$ 
\cite{Beceanu-Schlag-2016}, and only for $p\in(1,3)$ if zero is a resonance \cite{Yajima-DocMath2016} (see Proposition 
\ref{prop:limit_of_wave_ops} below). 

Parallel to that, and with the same $V$, we consider now the re-scaled version $H^{(\varepsilon)}$ of $H$ obtained by `shrinking' the potential $V$ at a scale $\varepsilon^{-1}$, more precisely, the self-adjoint operator
\begin{equation}\label{eq:HepsShrinked}
 H^{(\varepsilon)}\;:=\;-\Delta+\frac{1}{\ep^2}V\left(\frac{x}{\ep}\right)\,,\qquad\varepsilon>0\,,
\end{equation}
as well as the wave operators relative to the pair $(H^{(\varepsilon)},H_0)$, defined in analogy to \eqref{eq:WopSchr} as
\begin{equation}\label{eq:WopSchr-eps}
W^{\pm}_\varepsilon\;:=\;s\textrm{-}\!\!\!\!\lim_{t\to\pm \infty} e^{\ii tH^{(\varepsilon)}}e^{-\ii tH_0}\,.
\end{equation}

The choice \eqref{eq:HepsShrinked} for the scaling is driven by the fact 
\cite[Theorem I.1.2.5]{albeverio-solvable} that under suitable 
spectral properties of 
$H$ one has $H^{(\varepsilon)}\to H_{\alpha,Y}|_{\alpha=0,Y=\{0\}}$ as 
$\varepsilon\downarrow 0$ in the norm resolvent sense of operators on 
$L^2(\mathbb{R}^3)$, and this in turn motivates us to investigate the relation 
between $W^{\pm}_\varepsilon$ and $W^{\pm}_{\a, Y}$ when $\varepsilon\downarrow 0$, 
as bounded operators on $L^p(\mathbb{R}^3)$ for $p\in(1,3)$. 
Our result is the following.

\begin{prop}\label{prop:limit_of_wave_ops} Suppose that $V$ is a real measurable 
potential such that $|V(x)|\leqslant C \ax^{-\delta}$ for some $\delta>7$. 
Then, for any $p\in(1,3)$ the wave operators 
$W^\pm_\varepsilon$ extend to bounded operators on $L^p(\mathbb{R}^3)$. 
If zero is a resonance but not an eigenvalue for the self-adjoint 
operator $H=-\Delta+V$ on $L^2(\mathbb{R}^3)$, then  in the weak 
topology of $L^p(\mathbb{R}^3)$ with $p\in(1,3)$, and hence also in 
the strong topology of $L^2(\mathbb{R}^3)$,
\begin{equation}\label{eq:Wop_convergence}
\lim_{\ep \downarrow 0}W^{\pm}_{\ep}u \;=\; W^{\pm}_{\alpha,Y}u\,, 
\qquad u \in L^p(\mathbb{R}^3)\,.
\end{equation}
\end{prop}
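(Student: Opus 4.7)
The plan is to decompose the statement into three logically independent steps: (i) uniform $L^p$-boundedness of $W_\varepsilon^\pm$ via an exact scaling identity that reduces the problem to the Schr\"odinger wave operators $W^\pm$ for which Yajima's $L^p$-theory in the resonant case applies; (ii) convergence of matrix elements $\langle W_\varepsilon^\pm u,v\rangle\to\langle W_{\alpha,Y}^\pm u,v\rangle$ on a dense set, extracted from the norm-resolvent convergence of $H^{(\varepsilon)}$ to $H_{\alpha,Y}$ through the Abel-mean stationary representation \eqref{abel}, and then promoted to weak $L^p$-convergence by density; (iii) strong $L^2$-convergence deduced from weak $L^2$-convergence plus the $L^2$-isometry property of the wave operators.

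For Step (i), I would introduce the $L^2$-unitary dilation $(U_\varepsilon f)(x):=\varepsilon^{-3/2}f(x/\varepsilon)$, for which a direct computation gives
\[
U_\varepsilon^{-1} H_0 U_\varepsilon\;=\;\varepsilon^{-2}H_0\,,\qquad
U_\varepsilon^{-1} H^{(\varepsilon)} U_\varepsilon\;=\;\varepsilon^{-2}H\,,
\]
and the substitution $s:=\varepsilon^{-2}t$ in the strong limit \eqref{eq:WopSchr-eps} yields the exact identity $W_\varepsilon^\pm=U_\varepsilon W^\pm U_\varepsilon^{-1}$. Since $\|U_\varepsilon\|_{\mathcal{B}(L^p(\mathbb{R}^3))}=\varepsilon^{\frac{3}{p}-\frac{3}{2}}$ and $\|U_\varepsilon^{-1}\|_{\mathcal{B}(L^p(\mathbb{R}^3))}$ is its reciprocal, one obtains $\|W_\varepsilon^\pm\|_{\mathcal{B}(L^p(\mathbb{R}^3))}=\|W^\pm\|_{\mathcal{B}(L^p(\mathbb{R}^3))}$ for every $p$. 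The decay $\delta>7$ of $V$ is well inside the assumptions of Yajima's $L^p$-boundedness theorem for wave operators of resonant three-dimensional Schr\"odinger operators \cite{Yajima-DocMath2016}, hence $\sup_{\varepsilon>0}\|W_\varepsilon^\pm\|_{\mathcal{B}(L^p(\mathbb{R}^3))}<\infty$ for every $p\in(1,3)$.

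For Step (ii), the hypothesis that zero is a resonance (and not an eigenvalue) of $H$ is exactly what activates \cite[Theorem I.1.2.5]{albeverio-solvable}, providing the norm-resolvent convergence $H^{(\varepsilon)}\to H_{\alpha,Y}$ with $\alpha=0$. Specialising the Abel-mean representation \eqref{abel} to both $W_\varepsilon^+$ and $W_{\alpha,Y}^+$, and invoking the second resolvent identity for $H^{(\varepsilon)}$ together with the Krein-type formula \eqref{eq:resolvent_identity} for $H_{\alpha,Y}$, the integrand converges pointwise in $\lambda\in\mathbb{R}$ for $u,v\in C_0^\infty(\mathbb{R}^3)$ thanks to the norm-resolvent convergence, and admits a dominant independent of $\varepsilon$ thanks to uniform limiting-absorption-type bounds on $R^{(\varepsilon)}(\lambda\pm\ii 0)$ granted by the strong pointwise decay $\delta>7$ of $V$. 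Dominated convergence then yields $\langle W_\varepsilon^\pm u,v\rangle\to\langle W_{\alpha,Y}^\pm u,v\rangle$ on $C_0^\infty\times C_0^\infty$, and the uniform $L^p$-bound from Step (i) together with a three-$\eta$/density argument extends this limit to arbitrary $u\in L^p(\mathbb{R}^3)$ and $v\in L^{p'}(\mathbb{R}^3)$, giving weak $L^p$-convergence for every $p\in(1,3)$.

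For Step (iii), $W_\varepsilon^\pm$ and $W_{\alpha,Y}^\pm$ are $L^2$-isometries, so $\|W_\varepsilon^\pm u\|_{L^2}=\|u\|_{L^2}=\|W_{\alpha,Y}^\pm u\|_{L^2}$; specialising Step (ii) to $p=2$ gives weak $L^2$-convergence, and the norm identity combined with the Radon-Riesz property of Hilbert spaces upgrades it to strong $L^2$-convergence. The delicate heart of the argument is clearly Step (ii): norm-resolvent convergence of the Hamiltonians alone is insufficient, and what is really needed is uniform control in $\varepsilon$ of the boundary-value resolvents near the threshold $\lambda=0$, where the zero-energy resonance of $H$ is expected to concentrate. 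The stringent decay $\delta>7$ is imposed precisely to secure these uniform Agmon/Jensen-type weighted resolvent estimates at threshold.
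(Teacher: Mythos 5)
Your Steps (i) and (iii) are sound and coincide with what the paper does (or intends): the exact scaling identity $W^\pm_\varepsilon=U_\varepsilon W^\pm U_\varepsilon^*$ together with Yajima's $L^p$-theorem \cite{Yajima-DocMath2016} gives the $\varepsilon$-independent bound \eqref{eq:Wepsnorm}, and weak $L^2$-convergence plus the isometry of wave operators upgrades to strong $L^2$-convergence. The genuine gap is in Step (ii), which is the heart of the proposition. Norm-resolvent convergence $H^{(\varepsilon)}\to H_{\alpha,Y}$ concerns resolvents at non-real energies and does not, by itself, give pointwise-in-$\lambda$ convergence of the boundary-value integrand in the stationary/Abel representation, nor convergence of wave operators in general. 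Worse, the $\varepsilon$-uniform limiting-absorption bounds you invoke to produce an $\varepsilon$-independent dominant are false precisely where the action is: by scaling, $(\mathbbm{1}+G_0(-\lambda)V_\varepsilon)^{-1}=U_\varepsilon(\mathbbm{1}+G_0(-\varepsilon\lambda)V)^{-1}U_\varepsilon^*$ with $V_\varepsilon=\varepsilon^{-2}V(\cdot/\varepsilon)$, and since zero is a resonance of $H$ the weighted norm of $(\mathbbm{1}+G_0(-\mu)V)^{-1}$ blows up like $\mu^{-1}$ as $\mu\downarrow 0$; hence the relevant norms grow like $(\varepsilon\lambda)^{-1}$ and are not uniformly bounded in $\varepsilon$ on any fixed $\lambda$-interval. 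This blow-up is not a nuisance to be dominated away: it is the mechanism that produces a nonzero limit at all, since $\int V_\varepsilon\,\ud x=\varepsilon\int V\,\ud x\to 0$ and without the resonance-driven divergence the correction term $W^+_\varepsilon-\mathbbm{1}$ would simply vanish in the limit. A dominated-convergence scheme built on an $\varepsilon$-independent dominant plus ``pointwise convergence from norm-resolvent convergence'' therefore cannot identify the limit, and in particular nowhere in your argument does the point-interaction structure (the $\Gamma_{\alpha,Y}^{-1}$ factor, equivalently the $\mathcal{G}_{\pm\lambda}$ kernels) emerge.

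What the paper does instead, and what is missing from your proposal, is the quantitative threshold analysis after rescaling: one restricts to the dense set of $u$ with $\widehat{u}\in C^\infty_0$, observes via \eqref{eq:only_effective} that $(G_0(\lambda)-G_0(-\lambda))U_\varepsilon^*u$ vanishes for $\lambda>R\varepsilon$, rescales $\lambda\mapsto\varepsilon\lambda$ so the spectral integral lives on the fixed interval $[0,R]$, and then uses the two limits \eqref{eq:op_bound_1} and \eqref{eq:op_bound_2}: $\varepsilon^{1/2}G_0(\pm\varepsilon\lambda)U_\varepsilon^*u\to\langle\overline{\mathcal{G}_{\pm\lambda}},u\rangle\mathbf{1}$ in $L^2_{-\beta}$, and, crucially, the threshold expansion $\varepsilon(\mathbbm{1}+G_0(-\varepsilon\lambda)V)^{-1}\to\frac{4\pi\ii}{\lambda a^2}\,|\varphi\rangle\langle V\varphi|$ in $\mathcal{B}(L^2_{-\beta})$ from \cite{Yajima-CMP2015_dispersive_est}, where $\varphi$ is the zero-energy resonance function. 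It is this rank-one projection onto the resonance that converts the Schr\"odinger stationary integrand into exactly the kernel \eqref{eq:def_of_Omega_jk} of $W^+_{\alpha,Y}$ with $N=1$, $\alpha=0$; the decay $\delta>7$ enters to make the weighted spaces $L^2_{\pm\beta}$ with $\beta\in(\frac32,\frac\delta2)$ compatible with multiplication by $V$ and with that expansion, not to provide uniform limiting-absorption bounds. Without this identification step your Step (ii) does not establish \eqref{eq:Wop_convergence}.
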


\begin{proof} The statement on the $L^p$-boundedness of $W^\pm_\varepsilon$ follows directly from \cite{Yajima-DocMath2016}. Concerning the limit \eqref{eq:Wop_convergence}, we shall prove it for $W^+_\varepsilon$, the argument for $W^-_\varepsilon$ being completely analogous. 

Let us consider the scaling operator $u\mapsto U_\varepsilon u$ defined by
\begin{equation}
 (U_\varepsilon u)(x)\;:=\;\frac{1}{\;\varepsilon^{3/2}}\,u\Big(\frac{x}{\varepsilon}\Big)\,,\qquad \varepsilon>0\,,\,x\in\mathbb{R}^3\,.
\end{equation}
For any $\varepsilon>0$ and  $p\in[1,+\infty]$ the operator $U_\varepsilon$ is a bounded bijection on $L^p(\mathbb{R}^3)$ with norm
\begin{equation}\label{eq:Unorm}
 \|U_\varepsilon\|_{\mathcal{B}(L^p(\mathbb{R}^3))}\;=\;\varepsilon^{\,3(\frac{1}{p}-\frac{1}{2})}
\end{equation}
and inverse
\begin{equation}\label{eq:Uinverse}
 (U_{\varepsilon})^{-1}\;=\;U_{\varepsilon^{-1}}\,.
\end{equation}
In particular $U_\varepsilon$ is unitary on $L^2(\mathbb{R}^3)$, and it induces the unitary equivalence
\begin{equation} 
H^{(\varepsilon)} \;=\; U_{\ep} (\ep^{-2}H) \,U_{\ep}^\ast. 
\end{equation}
As a consequence, $W_\varepsilon^+$ and $W^+$ are unitarily equivalent too as operators on $L^2(\mathbb{R}^3)$, for
\begin{equation}\label{eq:unitary_equiv} 
\begin{split}
 W_{\ep}^+\;&=\;s\textrm{-}\!\!\!\!\lim_{t\to +\infty} e^{it H^{(\varepsilon)}}e^{-itH_0} \\
 &=\;
U_\ep \,s\textrm{-}\!\!\!\!\lim_{t\to + \infty} e^{it\ep^{-2}H}
e^{-it\ep^{-2}H_0} \,U_{\ep}^\ast 
\;=\;U_{\ep}W^+U_{\ep}^{\ast} \, .
\end{split}
\end{equation}
Moreover,
\begin{equation}\label{eq:Wepsnorm}
\|W_\varepsilon^+\| _{\mathcal{B}(L^p(\mathbb{R}^3))}\;=\;\|W^+\| _{\mathcal{B}(L^p(\mathbb{R}^3))}\;<\;+\infty
\end{equation}
for any $p\in(1,3)$, as follows by combining \eqref{eq:Unorm}, \eqref{eq:Uinverse}, and \eqref{eq:unitary_equiv}.

For the proof of \eqref{eq:Wop_convergence} it suffices to show that, when $\alpha=0$ and $Y=\{0\}$,
\begin{equation}\label{eq:limit_to_prove1}
 \lim_{\varepsilon\downarrow 0}\int_{\mathbb{R}^3} 
\overline{(W_\varepsilon^+ u)(x)}\,v(x)\,\ud x\;
=\int_{\mathbb{R}^3} \overline{(W_{\alpha,Y} u)(x)}\,v(x)
\end{equation}
for any $u$ and $v$ in 
\begin{equation}\label{eq:choice-of-the-dense}
 \mathcal{D}\;:=\;\{u\in\mathcal{S}(\mathbb{R}^3)
\,|\,\widehat{u}\in C^\infty_0(\mathbb{R}^3)\}\, 
\end{equation}
which is dense in $L^p(\mathbb{R}^3)$ for any $1<p<\infty$. 
Indeed by means of a straightforward density argument, applicable because of 
the uniform norm-boundedness \eqref{eq:Wepsnorm}, the result 
\eqref{eq:limit_to_prove1} can then be lifted to any 
 $u\in L^p(\mathbb{R}^d)$ and $v\in L^{p'}(\mathbb{R}^d)$, 
whence the conclusion.
Moreover, with the choice \eqref{eq:choice-of-the-dense} we can  equivalently
re-write \eqref{eq:limit_to_prove1} in Hilbert scalar product notation as  
\begin{equation}\label{eq:limit_to_prove2}
 \lim_{\varepsilon\downarrow 0}\;\langle W_\varepsilon^+ u,v\rangle\;
=\; \langle W_{\alpha,Y} u,v\rangle\,.
\end{equation}

Aimed at establishing \eqref{eq:limit_to_prove2}, let us fix $u,v\in\mathcal{D}$. Then there is $R>0$ such that $\widehat{u}(\xi)=0$ for $|\xi|>R$, and also
\begin{equation}\label{eq:Uutransf}
 (\widehat{U_\varepsilon^* u)}(\xi)\;=\;\frac{1}{\;\varepsilon^{3/2}}\,\widehat{u}\Big(\frac{\xi}{\varepsilon}\Big)\,,\qquad (\widehat{U_\varepsilon^* u)}(\xi)\;=\;0\quad\textrm{for }|\xi|>R\varepsilon\,.
\end{equation}
We shall make crucial use of the well-known fact from the stationary scattering theory \cite{Kuroda1978_intro_scatt_theory} that
\begin{equation}\label{eq:W_scatt_theory}
 W^+\;=\;\mathbbm{1}-\frac{1}{\ii\pi}\int_0^{+\infty}\!\! G_0(-\lambda)\,V\,(\mathbbm{1}+G_0(-\lambda)V)^{-1}\,(G_0(\lambda)-G_0(\lambda))\,\lambda\,\ud\lambda\,,
\end{equation}
where
\begin{equation}
 G_0(\pm\lambda)\;:=\;\lim_{\eta\downarrow 0}\,(H_0-(\lambda^2\pm\ii\eta)\mathbbm{1})^{-1}\;=\;\lim_{\eta\downarrow 0}\, R_0(\lambda^2\pm\ii\eta)\,,\quad\lambda\geqslant 0\,.
\end{equation}
Then \eqref{eq:unitary_equiv} and \eqref{eq:W_scatt_theory}, together with $G_0(\pm\lambda)^*=G_0(\mp\lambda)$, yield
\begin{equation}\label{eq:uWv_1}
 \begin{split}
  &\langle W_\varepsilon^+u,v\rangle-\langle u,v\rangle \\
  &=\;\frac{1}{\ii\pi}\int_0^{+\infty}\!\!\big\langle (\mathbbm{1}+G_0(-\lambda)V)^{-1}\,(G_0(\lambda)-G_0(-\lambda))\,U_\varepsilon^*u\;,V\,G_0(\lambda)\,U_\varepsilon^* v\big\rangle\,\lambda\,\ud\lambda\,. 
 \end{split}
\end{equation}

In fact, the $\lambda$-integration in \eqref{eq:uWv_1} is only effective for $\lambda<R\varepsilon$. To see this, we compute the Fourier transform
\begin{equation}\label{eq:Ftrans}
 \begin{split}
  &\big((G_0(\lambda)-G_0(-\lambda))\,U_\varepsilon^*u\big)^{\widehat{\textrm{ }}}(\xi) \\
  &\qquad=\;\lim_{\eta\downarrow 0}\,\big((\xi^2-\lambda^2-\ii\eta)^{-1}-(\xi^2-\lambda^2+\ii\eta)\big)^{-1}\,(\widehat{U_\varepsilon^*u})(\xi) \\
  &\qquad=\;\lim_{\eta\downarrow 0}\,\frac{2\,\ii\,\eta}{\,(\xi^2-\lambda^2)^2+\eta^2\,}\,(\widehat{U_\varepsilon^*u})(\xi)
 \end{split}
\end{equation}
and we argue that the function in \eqref{eq:Ftrans} surely vanishes when $|\xi|>R\varepsilon$, owing to \eqref{eq:Uutransf}, and when in addition $\lambda>R\varepsilon$ such function also vanishes when $|\xi|\leqslant R\varepsilon$, because in this case $(\xi^2-\lambda^2)^2>0$ and the above limit in $\eta$ is zero. Thus,
\begin{equation}\label{eq:only_effective}
 (G_0(\lambda)-G_0(-\lambda))\,U_\varepsilon^*u\;\equiv\;0\qquad\textrm{when }\lambda>R\varepsilon\,.
\end{equation}

By exploiting the scaling in $\varepsilon$ in \eqref{eq:uWv_1} we obtain
\begin{equation}\label{eq:uWv_2}
 \begin{split}
  &\langle W_\varepsilon^+u,v\rangle-\langle u,v\rangle \\
  &=\;\frac{\;\varepsilon^2}{\ii\pi}\int_0^{+\infty}\!\!\big\langle (\mathbbm{1}+G_0(-\varepsilon\lambda)V)^{-1}(G_0(\varepsilon\lambda)-G_0(-\varepsilon\lambda))\,U_\varepsilon^*u\;,VG_0(\varepsilon\lambda)\,U_\varepsilon^* v\big\rangle\lambda\,\ud\lambda\,,
 \end{split}
\end{equation}
where it has to be remembered that, owing to \eqref{eq:only_effective}, the integration actually only takes place when $\lambda\in[0,R]$.

Next, in order to compute the limit $\varepsilon\downarrow 0$ in \eqref{eq:uWv_2}, we consider separately the behaviour of the operators
\[
 \varepsilon^{\frac{1}{2}} G_0(\pm\varepsilon\lambda)\,U_\varepsilon^*\qquad\textrm{and}\qquad\varepsilon(\mathbbm{1}+G_0(-\varepsilon\lambda)V)^{-1}\,.
\]
Indeed, we shall see that they do converge strongly in a suitable Banach space. 
A weak-type H\"older's inequality implies that
\[
\big( \varepsilon^{\frac{1}{2}} G_0(\pm\varepsilon\lambda)\,U_\varepsilon^*u\big)(x)\;=\;
\int_{\mathbb{R}^3}\frac{\,e^{\pm\ii\lambda|\varepsilon x-y|}\,}{\,4\pi|\varepsilon x-y|\,}\,u(y)\,\ud y 
\]
is bounded by a constant (by $(4\pi)^{-1}\||x|^{-1}\|_{L^{3,\infty}} \|u\|_{L^{\frac32,1}}$ in terms of  
Lorentz norms); therefore, 
uniformly for $\lambda\in[0,R]$ and $x$ in compact sets,
\[
\lim_{\varepsilon\downarrow 0}\,\big(\varepsilon^{\frac{1}{2}} G_0(\pm\varepsilon\lambda)\,U_\varepsilon^*u\big)(x)\; =\; \int_{\mathbb{R}^3}\frac{\,e^{\pm\ii\lambda|y|}\,}{\,4\pi|y|\,}\,u(y)\,\ud y\;=\;\langle\,\overline{\mathcal{G}_{\pm\lambda}},u\rangle\, . 
\]
As a consequence, we deduce that 
\begin{equation}\label{eq:op_bound_1}
 \lim_{\varepsilon\downarrow 0}\,\big\|\varepsilon^{\frac{1}{2}} G_0(\pm\varepsilon\lambda)\,U_\varepsilon^*u-\langle\,\overline{\mathcal{G}_{\pm\lambda}},u\rangle\mathbf{1}\big\|_{L^2_{-\beta}(\mathbb{R}^3)}=\;0
\end{equation}
for $\beta>\frac{3}{2}$,
where $L^2_{-\beta}(\mathbb{R}^3)\equiv L^2(\mathbb{R}^3,\langle x\rangle^{-2\beta}\ud x)$ and $\mathbf{1}$ denotes the function $\mathbf{1}(x)=1$ $\forall x\in\mathbb{R}^3$.
Moreover, owing to the spectral and decay assumptions on $V$, it is a standard fact \cite[Theorem 4.8]{Yajima-CMP2015_dispersive_est} that 
\begin{equation}\label{eq:op_bound_2}
 \lim_{\varepsilon\downarrow 0}\,\Big\|\,\varepsilon(\mathbbm{1}+G_0(-\varepsilon\lambda)V)^{-1}-\frac{4\pi\ii}{\,\lambda {a}^2}\,|\varphi\rangle\langle V \varphi|\,\Big\|_{\mathcal{B}(L^2_{-\beta}(\mathbb{R}^3))}=\;0
\end{equation}
for $\lambda>0$ and $\beta\in(\frac{3}{2},\delta-\frac{1}{2})$, where 
$\varphi$ is the so-called `resonance function' 
relative to $V$ (thus, a distributional solution to $H\varphi=(-\Delta+V)\varphi=0$),
uniquely identified by the conditions $\int_{\mathbb{R}^3}V|\varphi|^2\ud x=-1$ 
and $\int_{\mathbb{R}^3}\!V\varphi\,\ud x>0$, and where $ a:= \int_{\mathbb{R}^3}\!V\varphi\,\ud x $.

If we now and henceforth restrict $\beta$ to the regime $\beta\in(\frac{3}{2},\frac{\delta}{2})$, then \eqref{eq:op_bound_1} and \eqref{eq:op_bound_2} are still valid, and in addition the multiplication by $V$ is a $L^2_{-\beta}(\mathbb{R}^3)\to L^2_{\beta}(\mathbb{R}^3)$ continuous map. Thus, the $L^2$-scalar product appearing in the r.h.s.~of \eqref{eq:uWv_2} can be  also regarded as a $L^2_{-\beta}$-$L^2_\beta$ duality product.
Using this fact, and by means of \eqref{eq:op_bound_1} and \eqref{eq:op_bound_2}, which are applicable because the $\lambda$-integration in \eqref{eq:uWv_2} is actually only effective 
for $\lambda\in[0,R]$, we find
\[
 \begin{split} 
& \lim_{\varepsilon\downarrow 0}\,(\mbox{r.h.s. of \eqref{eq:uWv_2}}) \\
  &=\;\frac{1}{\ii\pi}\int_0^{+\infty}\!\!\Big\langle\,\frac{4\pi\ii\;\;}{\lambda {a}^2}\,|\varphi\rangle\langle V\varphi|\big(\langle\overline{\mathcal{G}_\lambda},u\rangle-\langle\overline{\mathcal{G}_{-\lambda}},u\rangle\big)\mathbf{1}\,,V\langle\overline{\mathcal{G}_{\lambda}},v\rangle\,\mathbf{1}\Big\rangle_{\!L^2_{-\beta},L^2_\beta}\,\lambda\,\ud\lambda \\
  &=\;-4\int_0^{+\infty}\!\!\ud\lambda\,\Big(\int_{\mathbb{R}^3}\!\ud y\;\overline{u(y)}\,\big(\mathcal{G}_{-\lambda}(y)-\mathcal{G}_{\lambda}(y)\big)\Big)\Big(\int_{\mathbb{R}^3}\!\ud x\,\mathcal{G}_{\lambda}(x)\,v(x)\Big)\,.
 \end{split} 
\]

Summarising, we have found
\begin{equation}
 \begin{split}
  &\lim_{\varepsilon\downarrow 0}\,\langle W_\varepsilon^+u,v\rangle\;=\;\langle u,v\rangle \\
  &\quad +4\!\int_0^{+\infty}\!\!\ud\lambda\,\Big(\int_{\mathbb{R}^3}\!\ud y\;\overline{u(y)}\,\big(\mathcal{G}_\lambda(y)-\mathcal{G}_{-\lambda}(y)\big)\Big)\Big(\int_{\mathbb{R}^3}\!\ud x\,\mathcal{G}_{\lambda}(x)\,v(x)\Big)\,.
 \end{split}
\end{equation}
Since the r.h.s.~above is precisely the quantity 
$\langle W_{\alpha, Y}u,v\rangle$ that we obtained in 
\eqref{eq:def_of_Omega_jk} 
in the special case $N=1$, $\alpha=0$, the limit 
$\langle W_\varepsilon^+u,v\rangle\to \langle W_{\alpha, Y}u,v\rangle$ of 
\eqref{eq:limit_to_prove2} is then established and, as already argued, 
this completes the proof.
\end{proof}



\def\cprime{$'$}

\end{document}